\newtheorem{theorem}{Theorem}
\newtheorem{assertion}{Assertion}
\newtheorem{lemma}{Lemma}
\newtheorem{Corollary}{Corollary}
\newtheorem{Remark}{Remark}
\newtheorem{ob}{Observation}
\begin{document}

\title{The dimension and Bose distance of certain primitive BCH codes}

\author{Run Zheng \orcidlink{0000-0002-9117-6639},  Nung-Sing Sze  \orcidlink{0000-0003-1567-2654}
 and Zejun Huang \orcidlink{0000-0003-2621-3234}
        % <-this % stops a space
\thanks{Z. Huang's work was supported by the National Natural Science Foundation of China (No.12171323), Guangdong Basic and Applied Basic Research Foundation (No. 2022A1515011995).  N. S. Sze's work was supported by a HK RGC grant PolyU 15300121 and a PolyU
research grant 4-ZZRN. (\textit{Corresponding author: Zejun Huang})}% <-this % stops a space
\thanks{R. Zheng is with the Department of Applied Mathematics, The Hong Kong Polytechnic University, Hung Hom, Hong Kong (e-mail: zheng-run.zheng@connect.polyu.hk).}

\thanks{N. S. Sze is with the Department of Applied Mathematics,
The Hong Kong Polytechnic University, Hung Hom, Hong Kong (e-mail:
raymond.sze@polyu.edu.hk).}

\thanks{Z. Huang is with the School of Mathematical Sciences,
    Shenzhen University, Shenzhen 518060, China (e-mail: zejunhuang@szu.edu.cn).}
}

\date{\today}% % The paper headers

% Remember, if you use this you must call \IEEEpubidadjcol in the second
% column for its text to clear the IEEEpubid mark.

% Remember, if you use this you must call \IEEEpubidadjcol in the second
% column for its text to clear the IEEEpubid mark.
\maketitle
\begin{abstract}
BCH codes are a significant class of cyclic codes that play an important role in both theoretical research and practical applications. Their strong error-correcting abilities and efficient encoding and decoding methods make BCH codes widely applicable in various areas, including communication systems, data storage devices, and consumer electronics. Although BCH codes have been extensively studied, the parameters of BCH codes are not known in general. 
 Let $q$ be a prime power and $m$ be a positive integer.  Denote by $\mathcal{C}_{\left(q,m,\delta)\right)}$ the narrow-sense primitive BCH code with length $q^m-1$ and designed distance $\delta$. As of now, the dimensions of 
 $\mathcal{C}_{(q,m,\delta)}$ are fully understood only for $m \leq 2$. For $m \geq 4$, the dimensions of $\mathcal{C}_{(q,m,\delta)}$ are known only for 
 the range $2 \leq \delta \leq q^{\lfloor (m+1)/2 \rfloor +1}$ and for a limited number of special cases. In this paper, we determined the dimension and Bose distance of $\mathcal{C}_{(q,m,\delta)}$ for $m\geq 4$ and 
 $\delta\in [2, q^{\lfloor ( 2m-1)/{3}\rfloor+1}]. $ Additionally, we have also extended our results to some  primitive BCH codes that are not necessarily narrow-sense.

\end{abstract}
\begin{IEEEkeywords}
BCH codes, linear codes, cyclic codes. 
\end{IEEEkeywords}

\section{Introduction}
\IEEEPARstart{T}{hroughout} this paper, let $q$ be a prime power, and   $\mathbb{F}_q$ be the finite field of order $q$. Let $\mathbb{F}_q^n$ be the $n$-dimensional linear space over $\mathbb{F}_q.$ A code of length $n$ over $\mathbb{F}_q$ is a nonempty subset of $\mathbb{F}_q^n$. 
An $[n,k,d]$  linear code $\mathcal{C}$ over $\mathbb{F}_q$ is a $k$-dimensional  subspace of $\mathbb{F}_q^n$  with  minimum distance $d$. A linear code $\mathcal{C}\subseteq \mathbb{F}_{q}^n$  is said to be \textit{cyclic} if $(c_{0},c_{1},\ldots,c_{n-1})\in \mathcal{C}$ implies that $(c_{n-1},c_{0},\ldots,c_{n-2})\in \mathcal{C}.$ 
Each vector $(c_0,c_1,\ldots,c_{n-1})\in  \mathbb{F}_q^n$ is identified with  its polynomial representation \begin{equation}\notag
c_0+c_1x+\cdots+c_{n-1}x^{n-1} \in \mathbb{F}_q[x]/(x^n-1),
\end{equation}
and  each  code $\mathcal{C}\subseteq \mathbb{F}_q^n$ is identified with  a subset of  the quotient ring $\mathbb{F}_q[x]/(x^n-1).$ In this way, a linear code $\mathcal{C}\subseteq \mathbb{F}_q^n$ is cyclic if and only if it is an ideal of the quotient ring $\mathbb{F}_q[x]/(x^n-1).$ 
 Note that each ideal of $\mathbb{F}_q[x]/(x^n-1)$ is principal.  Therefore, a cyclic  code $\mathcal{C}$ in $\mathbb{F}_q[x]/(x^n-1)$ can be denoted by $\mathcal{C}=\left \langle g(x) \right \rangle, $  where  $g(x)\in \mathcal{C}$ is  monic  with minimum degree.  Then  $g(x)$ is called the \textit{generator polynomial} of $\mathcal{C}$, and $h(x)=(x^n-1)/{g(x)}$ is called the \textit{ parity-check polynomial} of $\mathcal{C}$.

Suppose that $n$ is an integer such that $\mathrm{gcd}(n,q)=1$.  Denote  $m=\mathrm{ord}_n(q)$,  i.e., the smallest integer such that 
$q^m\equiv 1 \ (\mathrm{mod}\ n).$ Let $\alpha$ be a primitive element of $\mathbb{F}_{q^m}$.   Then $\beta=\alpha^{(q^m-1)/n} $ is a primitive $n$-th root of unity.  This leads to the  factorization  $x^n-1=\prod\limits_{i=0}^{n-1}(x-\beta^i).$  
For each integer $i\in [0,n-1]$,  we denote by $m_i(x)$ the minimal polynomial of $\beta^i$ over $\mathbb{F}_q$.  
A cyclic  code  of length  $n$  over $\mathbb{F}_q$ is called a  BCH code with designed distance $\delta$  if its generator polynomial has the form 
\begin{equation}\notag
\mathrm{lcm}(m_b(x),m_{b+1}(x),\ldots,m_{b+\delta-2}(x))
\end{equation}
for some integers  $b$ and $2\leq \delta\leq n$, 
where $\mathrm{lcm}$ denotes the least common multiple of the polynomials.     When $n=q^m-1$ for some positive  integer $m$,  such a code is  called a \textit{primitive} BCH code with designed distance $\delta$,   denoted  $\mathcal{C}_{(q,m, \delta, b)}$.             
If  $b=1$,  the code  $\mathcal{C}_{(q,  m, \delta, b)}$ is called  a  \textit{narrow-sense primitive}  BCH code,  denoted  $\mathcal{C}_{(q,m,\delta)}$.      Note that $\mathcal{C}_{(q,m,\delta,b)}$ and $\mathcal{C}_{(q,m,\delta^{'},b)}$ 
may be the identical for distinct $\delta$ and $\delta^{'}$.  The \textit{Bose distance} of $\mathcal{C}_{(q,m,\delta,b)}$, denoted  $d_B$ or $d_B(\mathcal{C}_{(q,m,\delta,b)})$, is the largest integer such that $\mathcal{C}_{(q,m,\delta,b)}=\mathcal{C}_{(q,m,d_B,b)} $.

Binary BCH codes were first independently discovered by Hocquenghem \cite{H1959} in 1959 and by Bose and Ray-Chaudhuri \cite{BR1960, BC1960f} in 1960. Gorenstein and Zierler \cite{GZ1961} later extended these binary BCH codes to $q$-ary BCH codes in 1961.  
BCH codes occupy a prominent place in theory and practice.  There exist many elegant
and powerful algebraic decoding algorithms for the BCH codes, including  Peterson–Gorenstein–Zierler algorithm,
Berlekamp–Massey algorithm,  and 
Sugiyama Euclidean algorithm \cite{B1968, M1969, P1960, GZ1961}. These algorithms are efficient, making BCH codes practical for real-world applications by
effectively correcting errors in transmitted data.   Due to these,  BCH codes are
widely used in communication systems, storage devices, and consumer electronics.  

{BCH codes have been extensively studied in the literature \cite{AASP2007,ADC1992, ADSN1994, CCGS2016, CPHTZ2006, DFZ2017, DYFT1997, FTT1986, HS1973, KILS2001, KOLS1999, KTFT1985, KTLS1972, LDL2017, LLDL2017, LRFL2019, RM1991, YH1996, CP1994,WXZ2024,FLD2023,WWL2023,XL2024,DLMQ2023,SLD2023,B1967, D2015, DDZ2015, NLMY2021, ZXP2024}. Despite decades of research, our understanding of the dimension and minimum distance of BCH codes remains limited. As noted by Charpin \cite{charpin1998open}, it is a challenging problem to determine the minimum distance $d$ of BCH codes in general. However, the BCH bound \cite{H1959, BR1960} indicates that $d\geq d_B$. The equality can even be obtained for some codes.   For example, the narrow sense BCH code $\mathcal{C}_{(q,m,\delta)}$ satisfy $d=d_B$ if $\delta$ divides $q^m-1$\cite[p. 259]{ma1977}.
Furthermore, it has been conjectured that $d\leq d_B+4$ for a narrow-sense primitive BCH code. Therefore, determining the Bose distance is also valuable for us to understand BCH codes.  Readers may refer to \cite{B1967, GLQ2024, D2015, DDZ2015, DFZ2017, cherchem2020some}  for some previous works on the Bose distance of $\mathcal{C}_{(q,m,\delta)}$. Determining the dimension of BCH codes is also difficult in general   \cite{ding2024bch,NLMY2021}. For narrow-sense primitive BCH codes $\mathcal{C}_{(q,m,\delta)}$, the dimension is fully determined for $m\leq 2$  \cite{LLDL2017,NLMY2021}.  For $m\geq 3$, the problem remains largely unsolved.  For convenience, we summarize some known results regarding the dimension of BCH codes of $\mathcal{C}_{(q,m,\delta)}$ in Table \ref{table}. For primitive BCH codes $\mathcal{C}_{(q,m,\delta,b)}$ that are not necessarily narrow-sense, there are quite limited results available. See \cite{liu2017} for the dimensions of some BCH codes that are not narrow-sense.   
The reader may also refer to \cite[Table 1]{pang2021five} and \cite{ding2024bch}  for summaries of the existing results on the parameters of BCH codes. }

{It is well known that the dimension and Bose distance of primitive BCH codes are closely related to $q$-cyclotomic coset modulo $q^m-1$. Specifically,  the dimension of BCH codes $\mathcal{C}_{(q,m,\delta)}$ can be given by calculating the sum of the sizes of distinct $ q$-cyclotomic cosets $C_a$ modulo $q^m-1$ for all $a\in [1,\delta-1]$. 
Therefore, 
to determine the dimension of the primitive BCH code $\mathcal{C}_{(q,m,\delta)}$,  it suffices to identify the size of $q$-cyclotomic coset $C_a$ for $ a\in [1,\delta-1]$, and enumurate distinct cosets correspond to each  size. Note that the size of a coset must be a divisor of $m$, which makes determining the size of any given coset relatively straightforward.  
However, finding the number of distinct cosets requires more effort. 
Yue et al. \cite{YH1996} introduced the concept of the coset leader, defined as the smallest integer in each coset. As a result, the number of distinct cosets corresponds exactly to the number of coset leaders. 
Moreover, determining the Bose distance of 
$\mathcal{C}_{(q,m,\delta)}$
  reduces to identifying the minimal coset leader greater than $\delta-1$. }

{The  $q$-adic expansion of the integer $a\in [0,q^m-1]$ provides a convenient way for determining  whether 
$a$ is the coset leader of $C_a$, which is a technique widely employed in the study of BCH codes. 
However,  determining the total number of coset leaders in $[b,b+\delta-2]$ is very difficult in general due to their highly irregular distribution. 
Only for sufficiently small $a$, the integer $a$ is a coset leader if and only if $q $ does not divide $a$.  Consequently, the number of coset leaders equals the number of integers not divisible by $q$ in the range $[b,b+\delta-2]$ when $b+\delta-2$ is small. This observation was applied in 
\cite{YH1996} to determine the dimension of $\mathcal{C}_{(q,m,\delta)}$ for
$ 2\leq \delta \leq  q^{\lfloor(m+1)/2\rfloor}+1$, and for $ q^{m/2} + 2\leq \delta \leq 2q^{m/2} + 1$ when $m$ is even. Later,  Liu et al. \cite{liu2017} improved this result by giving the dimension of $\mathcal{C}_{(q,m,\delta)}$ for $m\geq 4$ and $2\leq \delta\leq q^{\lfloor (2m-1)/3\rfloor+1}.$ Beyond this range, 
 the dimension is determined for a few special cases. For example, 
 Mann~\cite{M1962}  established the dimension of $\mathcal{C}_{(q,m,\delta)}$ for $\delta=q^t$. 
 Cherchem et al. \cite{cherchem2020some} give the dimension and Bose distance of $\mathcal{C}_{(q,m,\delta)}$ for $\delta=\frac{a(q^m-1)}{q-1}$ and $\delta=aq^m-1$ with $a\in [1,q-1]$.
By identifying the first few largest coset leaders and the sizes of corresponding cosets,  the parameters of $C_{(q,m,\delta)}$  for $\delta$ as the four largest coset leaders were determined in \cite{DFZ2017} \cite{li2018two} and \cite{MN2023}. 
One can also obtain the parameters of some  BCH codes by comparing them with those of codes with known parameters.    Building on Mann's work \cite{M1962}, Ding et al. determined the dimensions of 
 $\mathcal{C}_{(q,m,q^t-1)}$  \cite[Theorem 14]{D2015},
$\mathcal{C}_{(q,m,q^t+b)}$ 
for $1\leq b \leq \lfloor (q^t-1)/q^{m-t}\rfloor+1$ \cite[Theorem 13]{DDZ2015},  and $\mathcal{C}_{(q,m,q^t-2)}$ for $q>2$  \cite[Theorem 15]{D2015}. Recently, the dimension of $\mathcal{C}_{(q,m,q^t+b)} $ for   $\lceil{m}/{2}\rceil \leq t<m$ and $0\leq b<q^{m-t}+\sum\limits_{i=1}^{\lfloor {t}/{s}\rfloor-1}q^{m+i(m-t)}$, and $\mathcal{C}_{(q,m,q^t-b)} $ 
for $\lceil{m}/{2}\rceil < t<m$   and $0\leq b<(q-1)\sum\limits_{i=1}^{m-t}q^{i}$ were provided in \cite{GLQ2024}.  }

{In this paper, we determine the dimension and Bose distance of $\mathcal{C}_{(q,m,\delta)} $ for $m\geq 4$ and  $2\leq \delta\leq q^{\lfloor ( 2m-1)/{3}\rfloor+1}.$
We  also extend these results to  BCH codes $\mathcal{C}_{(q,m,\delta,b)} $ for $m\geq 4$ and positive integers $b$ and $\delta$ with $b+\delta\leq q^{\lfloor (2m-1)/3\rfloor+1}+1$. 
In the existing literature,  for $m\geq 4$, the dimension of $\mathcal{C}_{(q,m,\delta)}$ is  known only for $\delta\in [2,q^{\lfloor (m+1)/{2}\rfloor+1}]$ and for some special cases.  Moreover, it is clear that   $q^{\lfloor ( 2m-1)/{3}\rfloor+1}\geq q^{\lfloor (m+1)/{2}\rfloor+1}\cdot q^{\lfloor (m-4)/{6}\rfloor}$. Therefore,  our results provide the dimension of $\mathcal{C}
_{(q,m,\delta)}$ for $\delta$ in a much larger range than previously known.

Our results rely critically on the analysis of the distribution of coset leaders in  $[1, q^{\lfloor (2m-1)/3\rfloor+1}]$. 
Note that an integer $a\in [0,q^m-1]$  cannot be a coset leader if $q$ divides $a$. Therefore, to determine the number of coset leaders within a given range, we only need to count the integers that are neither divisible by $q$  nor coset leaders. We partitioned such integers in $\left[1, q^{\left\lfloor (2m-1)/3 \right\rfloor +1}\right]$ into distinct classes  with uniform representations by applying a one-to-one correspondence between the integer in  $[0,q^m-1]$ and  length-$m$ sequences of non-negative integers less than $q$. 
This classification allows us to compute the number of integers in each class through a uniform method. 
 Furthermore, we divide the interval $[1, \delta-1]$ into subintervals based on the distribution of coset leaders, so that we can establish an equation to express the number of integers neither divisible by $q$ nor coset leaders in each subinterval. Consequently, we derive the exact count of coset leaders in $[1,\delta-1]$ for each integer $\delta\in [2, q^{\lfloor (2m-1)/3 \rfloor +1}]$. 
 Combining this with our findings on the sizes of cosets, we ultimately determined the dimension of the code $\mathcal{C}_{(q,m,\delta)}$ for each integer $\delta\in [2, q^{\lfloor (2m-1)/3 \rfloor +1}]$. On the other hand, the Bose distance of 
$\mathcal{C}_{(q,m,\delta)}$
  is indeed equal to the smallest coset leader not less than $\delta$.
 Therefore, it is not difficult to determine the Bose distance of $\mathcal{C}_{(q,m,\delta)}$ based on our analysis of the distribution of coset leaders.}

\begin{table}[h]
\centering
\caption{Known results on the dimension of   $\mathcal{C}_{(q,m,\delta)}$}\label{table}
\begin{scriptsize}\begin{tabular}{|c|c|c|c|}
\hline
$m$ & $\delta$ & Year &  Reference\\ \hline
$m\geq 1 $ &$\delta=q^t$ & 1962  &  \cite{M1962}  \\ \hline
$m$ is odd & $2\leq \delta\leq q^{ ({m+1})/{2}}+1$ & 1996  &      \cite{YH1996}           \\  \hline
   $m$ is even  & $2\leq \delta\leq 2q^{{m}/{2}}+1$ & 1996  &     \cite{YH1996}          \\  \hline
$ m\geq 1$& \makecell[c]{$\delta=(q-l_0)q^{m-l_1-1}-1$ \\ with  $0\leq l_0\leq q-2$  \\ and  $0\leq l_1\leq m-1$} & 2015  &  \cite{D2015}  \\ \hline
$m$ is even & \makecell[c]{$\delta=k(q^{{m}/{2}}+1)$ \\ with $1\leq k\leq q-1 $} & 2015 &  \cite{D2015}  \\ \hline
$ m\geq 1$& $\delta= q^t-1$ & 2015  &  \cite{D2015}  \\ \hline
$ m\geq 1$&\makecell[c]{ $\delta=q^t+b$ with \\$1\leq b \leq \lfloor (q^t-1)/q^{m-t}\rfloor+1$} & 2015  &  \cite{DDZ2015}  \\ \hline
$m\geq 1$ & $\delta=(q-1)q^{m-1}-1-q^{\lfloor( {m-1})/{2} \rfloor}$  & 2017  &  \cite{DFZ2017}  \\ \hline 
$m\geq 4$ &  $\delta=(q-1)q^{m-1}-1-q^{\lfloor ({m+1})/{2} \rfloor}$ & 2017  &  \cite{DFZ2017}  \\ \hline 
\makecell{$m\geq 4$} & $2\leq \delta\leq q^{\lfloor({m+1})/{2}\rfloor+1}$ & 2017  &     \cite{liu2017}            \\
\hline
\makecell{$m=2$ } & $2\leq \delta\leq q^{m}-2$  & 2017  &      \cite{liu2017} \\
\hline
$m>1$ &\makecell[c]{$\delta=a({q^m-1})/{q-1}$ \\or $\delta=aq^{m-1}-1$ \\ with $1\leq a\leq q-1$} & 2020   &   \cite{cherchem2020some}  \\ \hline
$m\geq 11$&$\delta=(q-1)q^{m-1}-1-q^{\lfloor({m+3})/{2}\rfloor} $ & 2023  & \cite{MN2023}  \\ \hline
$m>1$ & \makecell[c]{$\delta=q^t+b$ with $\lceil{m}/{2}\rceil \leq t<m$ and \\ $0\leq b<q^{m-t}+\sum\limits_{i=1}^{\lfloor {t}/{s}\rfloor-1}q^{m+i(m-t)};$ \\
$\delta=q^t-b$ with $\lceil{m}/{2}\rceil < t<m$  \\ and $0\leq b<(q-1)\sum\limits_{i=1}^{m-t}q^{i}$
} & 2024    &  \cite{GLQ2024}  \\ \hline
\end{tabular}
\end{scriptsize}
\end{table}
This paper is organized as follows. 
In Section \uppercase\expandafter{\romannumeral2}, we present some preliminaries that reveal the close relationship between $q$-cyclotomic  cosets modulo $q^m-1$ and the dimension and Bose distance of BCH codes. 
Section \uppercase\expandafter{\romannumeral3} introduces some new results on $q$-cyclotomic cosets. 
 These findings are then utilized to determine the dimension of narrow-sense primitive BCH codes $\mathcal{C}_{(q,m,\delta)}$ for $2\leq \delta\leq q^{\lfloor ( 2m-1)/{3}\rfloor+1}$ in Section \uppercase\expandafter{\romannumeral4}. The result concerning the dimension of narrow-sense primitive BCH codes is subsequently extended to primitive BCH codes $\mathcal{C}_{(q,m,\delta, b)}$ for some positive integers $\delta$ and $b$ with integers $\delta +b \leq q^{\lfloor ( 2m-1)/{3}\rfloor+1}+1$ in Section \uppercase\expandafter{\romannumeral5}. Moreover, we also apply the findings on $q$-cyclotomic cosets to obtain the Bose distance for primitive BCH codes in Section \uppercase\expandafter{\romannumeral6}.
We then provide examples of the BCH codes studied in this paper and compare them with the tables of the best known linear codes maintained by Markus Grassl at http://www.codetables.de, which is called \textit{Database} later in this paper. Furthermore, in Section \uppercase\expandafter{\romannumeral7}, as an illustration of our main results, we present the dimension and Bose distance of 
$\mathcal{C}_{(q,m,\delta)}$ for $\delta=aq^{h+k}+b$ with integers $k\in [m-2h, \lfloor(2m-1)/3\rfloor-h ]$, $a\in [1,q-1]$ and $b\in \left[1,q^{m-h-k}\right]$, where $h=\lfloor m/2\rfloor $. 
  Finally, the conclusion of this paper is given in Section \uppercase\expandafter{\romannumeral8}.

\section{Preliminaries}
Suppose that $a$ is a real number.  Denote by $\lfloor a \rfloor$ the largest integer less than or equal to  $a$. 
In the following sections,  let $m$ be a positive integer,  $h=\lfloor {m}/{2}\rfloor$ and 
$n=q^m-1$.   
\begin{lemma}\label{pl1}\textnormal{\cite[Proposition 2.4]{Hou2018}}
Let  $\alpha\in \mathbb{F}_{q^m}$ and  $g(x)$ be the minimal polynomial of $\alpha$ over $\mathbb{F}_{q}$. Then 
$$g(x)=(x-\alpha)(x-\alpha^q)\cdots (x-\alpha^{q^{l-1}}),$$
 where
 $ l $ is the smallest integer such that $\alpha^{q^{l}}=\alpha.$ 
\end{lemma}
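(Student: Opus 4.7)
The plan is to exhibit the polynomial $f(x)=\prod_{i=0}^{l-1}(x-\alpha^{q^i})$ explicitly, verify it lies in $\mathbb{F}_q[x]$, and then argue it must coincide with the minimal polynomial $g(x)$ of $\alpha$. The proof will therefore split into three steps: (i) the elements $\alpha,\alpha^q,\ldots,\alpha^{q^{l-1}}$ are pairwise distinct roots of $g(x)$; (ii) $f(x)\in\mathbb{F}_q[x]$; (iii) $f(x)=g(x)$.

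For step (i), I would invoke the Frobenius automorphism $\phi:\mathbb{F}_{q^m}\to\mathbb{F}_{q^m}$, $\phi(x)=x^q$, which fixes $\mathbb{F}_q$ pointwise. Since $g(x)\in\mathbb{F}_q[x]$, applying $\phi$ to the identity $g(\alpha)=0$ gives $g(\alpha^q)=\phi(g(\alpha))=0$, and iterating yields $g(\alpha^{q^i})=0$ for every $i\geq 0$. For distinctness, suppose $\alpha^{q^i}=\alpha^{q^j}$ for some $0\leq i<j\leq l-1$; applying $\phi^{-i}$ (which makes sense on the cyclic subgroup generated by $\alpha$) would give $\alpha^{q^{j-i}}=\alpha$ with $0<j-i<l$, contradicting the minimality of $l$. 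Hence the $l$ elements are distinct roots of $g$, so $\deg g\geq l$.

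For step (ii), I would show that $\phi$ permutes the roots of $f$ cyclically: $\phi$ sends $\alpha^{q^i}$ to $\alpha^{q^{i+1}}$, and by definition of $l$ the element $\alpha^{q^{l-1}}$ is sent back to $\alpha^{q^l}=\alpha$. Therefore $\phi(f)(x)=\prod_{i=0}^{l-1}(x-\alpha^{q^{i+1}})=f(x)$. Since applying $\phi$ coefficient-wise to $f$ leaves each coefficient unchanged, every coefficient of $f$ lies in the fixed field of $\phi$, namely $\mathbb{F}_q$. Thus $f(x)\in\mathbb{F}_q[x]$, it is monic by construction, and has $\alpha$ as a root.

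For step (iii), the minimality of $g$ forces $g(x)\mid f(x)$ in $\mathbb{F}_q[x]$; combined with $\deg g\geq l=\deg f$ from step (i) and both polynomials being monic, we conclude $g(x)=f(x)$, which is the desired identity. The only delicate point is step (ii), verifying that the product is genuinely Frobenius-invariant; once this is set up cleanly via the cyclic action on the set $\{\alpha^{q^i}:0\le i\le l-1\}$, the rest is immediate from standard minimal-polynomial reasoning.
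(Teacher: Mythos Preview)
Your proof is correct and is precisely the standard argument for this classical fact: show the Frobenius conjugates $\alpha,\alpha^q,\ldots,\alpha^{q^{l-1}}$ are distinct roots of $g$, verify that their product $f(x)=\prod_{i=0}^{l-1}(x-\alpha^{q^i})$ is Frobenius-invariant and hence lies in $\mathbb{F}_q[x]$, and conclude $g=f$ by comparing degrees. Each step is sound.

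Note, however, that the paper does not actually supply its own proof of this lemma: it is quoted verbatim as \cite[Proposition~2.4]{Hou2018} and used as a black box. So there is no ``paper's proof'' to compare against; your argument simply fills in what the cited reference contains, and it does so correctly.
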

\begin{Remark}
It is well known that the integer $l$ divides $m$. 
\end{Remark}
For each integer $a\in [0,n-1]$,   the \textit{$q$-cyclotomic coset} of $a$ modulo $n$ is defined by 
\begin{equation}\label{def1}
C_a=\{ aq^{k} \  \mathrm{mod }\  n\mid k=0,\ldots,l_a-1\},
\end{equation}  where 
$l_a$ is the smallest integer such that
$aq^{l_a}\equiv a \ (\mathrm{mod}\ n) $. It is clear that the size of $C_a$ is equal to $l_a$.  The smallest integer in $C_a$  is called the  \textit{coset leader} of ${C}_a$. 
We can apply Lemma \ref{pl1}  to have 
\begin{equation}\notag
    m_a(x)=(x-\beta^a)(x-\beta^{aq})\cdots (x-\beta^{aq^{l_a-1}})= \prod\limits_{i\in C_a}(x-\beta^i),
\end{equation} 
where $l_a$ is the smallest integer such that $\beta^{aq^{l_a}}=\beta^a$,  equivalently, $aq^{l_a}\equiv a\  (\mathrm{mod}\ n).$ 
 Then the generator polynomial  $g(x)$ of $\mathcal{C}_{(q,m,\delta, b)}$ can be given by 
\begin{equation}\notag
g(x)=\mathrm{lcm}(m_b(x),m_{b+1}(x),\ldots,m_{b+\delta-2}(x))= \prod\limits_{i\in \mathcal{G}}(x-\beta^i)
\end{equation}
with $\mathcal{G}=\bigcup\limits_{a=b}^{b+\delta-2}C_a.$ 
It is clear that $\mathrm{deg}(g(x))=\left|\bigcup\limits_{a=b}^{b+\delta-2}C_a\right|$,  where $\mathrm{deg} $ denotes the degree of a polynomial,  and $|\cdot |$ denotes the size of a set.   
The dimension of $\mathcal{C}_{(q,m,\delta,b)}$ can be given by  $$\mathrm{dim}(\mathcal{C}_{(q,m,\delta,b)})=n-\left|\bigcup\limits_{a=b}^{b+\delta-2}C_a\right|.$$ 
 Notably,  one coset has a unique coset leader.  
  Therefore,  
\begin{equation}\label{01}
    \mathrm{dim}(\mathcal{C}_{(q,m,\delta)})=n-\sum\limits_{a\in \mathcal{L}(1,\delta)}\left|C_a\right|,
\end{equation}
where $\mathcal{L}(1,\delta)$ denotes the set of coset leaders in $[1,\delta-1].$ Additionally, for $\delta^{'}\geq \delta$, we can observe that  
\begin{equation}\notag
\bigcup\limits_{a=1}^{\delta-1}C_a = \bigcup\limits_{a=1}^{\delta^{'}-1}C_a
\end{equation} if and only if any  integer in $[\delta,\delta^{'}-1]$ is not a coset leader. Consequently, by recalling the definition of Bose distance $d_B$,  we have  
  \begin{equation}\label{Bose1}        
d_B(\mathcal{C}_{(q,m,\delta)})=\delta^{'},
 \end{equation}
 where  $\delta^{'}$ denotes 
 the smallest coset leader in $[\delta,n-1]$.

 For simplicity, we use  $a\mid b$ to denote that the integer $a$ divides the integer  $b$,   and $a\nmid b$ to denote that $a$ does not
  divide $b$. It is 
 well known that a necessary condition for an integer $a$ being the coset leader of $ C_a$  is $q\nmid a$. 
Therefore,  we only need to focus on finding those integers that are neither divisible by $q$ nor a coset leader. We define $\mathcal{S}$ as the set of all such integers in $[1,n-1]$. That is, \begin{IEEEeqnarray}{rcl}
\mathcal{S}&=&\{a\in [1,n-1] : q\nmid a\hbox{ and } a \hbox{ is not a coset leader}\}.\nonumber
\end{IEEEeqnarray}
Notably,  the following Lemma demonstrates that the condition  $q\nmid a$ is also sufficient for $a$ being a coset leader when $a$ is sufficiently small. 
  
\begin{lemma}\textnormal{\cite[Theorem 2.3]{YF2000}}\label{lll}
When  $m$ is an odd integer, for any integer  $a\leq  q^{h+1}$, $a$ is the coset leader of  $C_a$ if and only if $q\nmid a$. 
When $m$  is an even integer,  for any integer  $a\leq  2q^{h}$, $a$ is the coset leader of  $C_a$  if and only if $q\nmid a$.
\end{lemma}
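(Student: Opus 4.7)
The plan is to use the standard dictionary between the base-$q$ expansion of $a$ and cyclic shifts: writing $a=\sum_{i=0}^{m-1} a_i q^i$ with $a_i\in[0,q-1]$, the congruence $q^m\equiv 1\pmod n$ implies that multiplication by $q^k$ modulo $n$ permutes the digits cyclically, i.e.\ $aq^k\bmod n=\sum_{i=0}^{m-1} a_{(i-k)\bmod m}\,q^i$ (the only exception being when this cyclic shift equals $q^m-1$, which would force $a=n$ and is ruled out). Under this dictionary, ``$a$ is the coset leader of $C_a$'' becomes the statement that among the $m$ cyclic shifts of the digit string of $a$, the one corresponding to $k=0$ gives the smallest integer.

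The ``only if'' direction is short: if $q\mid a$ write $a=qb$; then $aq^{m-1}=q^m b\equiv b\pmod n$, so $b=a/q\in C_a$ is strictly smaller than $a$ and $a$ fails to be the coset leader. This reduces the lemma to proving the converse under the stated size bound.

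For the converse I would assume $q\nmid a$ together with the size hypothesis and argue by contradiction. Suppose there is some $k\in[1,m-1]$ with $a':=aq^k\bmod n<a$. Let $T=\{i:a_i\ne 0\}$; the hypothesis $q\nmid a$ forces $0\in T$, and $a\le q^{h+1}$ (which holds in both cases, since $2q^h\le q^{h+1}$) combined with $q\nmid a$ forces $T\subseteq[0,h]$. Because $a'<a<q^{h+1}$, the non-zero positions of $a'$, namely $(T+k)\bmod m$, must also lie in $[0,h]$. Applied to $0\in T$ this forces $k\in[1,h]$. If $k<h$, then for every $i\in T$ the shift $i+k$ is $<m$, so $(i+k)\bmod m=i+k\le h$ gives $T\subseteq[0,h-k]$; hence $a<q^{h-k+1}$, which makes $aq^k<q^{h+1}<n$, so no reduction occurs and $a'=aq^k>a$, a contradiction. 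The remaining possibility is $k=h$ with $m$ even: then $(i+h)\bmod 2h\in[0,h]$ forces $i\in\{0,h\}$, so $a=a_0+a_h q^h$ with $a_0\ne 0$; the bound $a\le 2q^h$ together with $q\nmid a$ forces $a_h\in\{0,1\}$, and a direct computation gives $a'-a=(a_0-a_h)(q^h-1)\ge 0$ in every subcase, again contradicting $a'<a$.

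The main obstacle I anticipate is precisely the even-$m$ boundary case $k=h$. When $m$ is odd, the constraint $k\le h<m/2$ automatically prevents any wrap-around, and the proof of the converse is essentially a one-line digit-shift argument. When $m$ is even, however, shifting by exactly $h$ positions wraps the top digit $a_h$ around to position $0$; this is what forces the tighter bound $2q^h$ in the lemma, since one needs $a_h\le 1$ to guarantee the non-negativity of the quantity $(a_0-a_h)(q^h-1)$. Once this case is isolated and handled, the rest of the argument is routine bookkeeping on digit positions.
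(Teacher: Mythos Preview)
The paper does not prove this lemma; it is quoted verbatim from \cite[Theorem~2.3]{YF2000}, so there is no in-paper argument to compare against. That said, your approach is exactly the digit/cyclic-shift dictionary that the paper sets up in Section~III (the map $V$ and Observations~\ref{ob1}--\ref{ob3}), so in spirit it matches the surrounding machinery.

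Your argument is correct in all essentials, but there is one small omission in the case split. After disposing of $k<h$ you write ``the remaining possibility is $k=h$ with $m$ even'', silently skipping $k=h$ with $m$ odd. That case is still live (you only established $k\in[1,h]$), and you should say a word about it: when $m$ is odd and $k=h$, every $i\in T\subseteq[0,h]$ satisfies $i+k\le 2h=m-1<m$, so there is again no wrap-around and your $k<h$ argument applies verbatim, giving $T\subseteq\{0\}$ and hence $a'=a_0q^h>a_0=a$. A cleaner way to organize the split is ``no wrap-around'' (which covers $k<h$ always, and $k=h$ for $m$ odd) versus ``wrap-around'' ($k=h$, $m$ even). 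With that one line added, the proof is complete.
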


\section{New results on \texorpdfstring{$q$}-cyclotomic  cosets}
In this section,  we present some new results concerning the sizes and the coset leaders of $q$-cyclotomic  cosets. 
\begin{theorem}\label{th1}
When  $m$ is an  odd integer,  for  any integer $a\in [1,q^{m-\lfloor {m}/{3}\rfloor}),$
\begin{equation}\notag
|C_a|=m. 
\end{equation}
When  $m$ is an even integer,   for any integer  $a\in  [1,q^{m-\lfloor{m}/{3}\rfloor}),$ 
\begin{IEEEeqnarray}{c}
|C_a|=\left\{ \,
\begin{IEEEeqnarraybox}[][c]{l?s}
\IEEEstrut
\frac{m}{2} & if  $aq^{h}\mathrm{\ mod\ } n = a$, \\
m & if $aq^{h} \mathrm{\ mod\ }n \neq  a$. 
\IEEEstrut
\end{IEEEeqnarraybox}
\right.
\label{th2e}
\end{IEEEeqnarray}
\end{theorem}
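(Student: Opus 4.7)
The plan is to exploit the basic fact that $l_a := |C_a|$ must divide $m$, together with a crude lower bound on $a$ forced by the congruence $aq^{l_a}\equiv a\pmod{n}$. The threshold $q^{m-\lfloor m/3\rfloor}$ is chosen precisely so that every ``small'' divisor of $m$ is incompatible with the allowed range of $a$, leaving only $l_a=m$ in the odd case and $l_a\in\{m,h\}$ in the even case.

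First I would establish the key lower bound. If $l_a=d$ is a proper divisor of $m$, then $aq^d\equiv a\pmod{q^m-1}$ gives $(q^m-1)\mid a(q^d-1)$, and because $q^d-1$ divides $q^m-1$, the integer
\[
\frac{q^m-1}{q^d-1} \;=\; 1+q^d+q^{2d}+\cdots+q^{m-d}
\]
must divide $a$. Since $d<m$ this sum contains at least two terms, so $a\geq 1+q^{m-d}>q^{m-d}$. Next I would classify the proper divisors of $m$ by parity. For odd $m$ every prime factor of $m$ is at least $3$, so every proper divisor $d$ of $m$ satisfies $d\leq m/3$, hence $d\leq\lfloor m/3\rfloor$. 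For even $m$ a proper divisor $d$ satisfies $m/d\geq 2$, and if additionally $d\neq h=m/2$ then $m/d\geq 3$, so again $d\leq\lfloor m/3\rfloor$. Therefore, apart from $d=m$ or $d=h$, every possible value of $l_a$ lies in $[1,\lfloor m/3\rfloor]$.

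Combining the two steps, if $a<q^{m-\lfloor m/3\rfloor}$ and $l_a\leq\lfloor m/3\rfloor$, then the lower bound forces $a>q^{m-l_a}\geq q^{m-\lfloor m/3\rfloor}$, a contradiction. For odd $m$ this leaves only $l_a=m$. For even $m$ this leaves $l_a\in\{m,h\}$, and $l_a=h$ is by definition equivalent to $aq^h\equiv a\pmod n$, which yields the dichotomy in \eqref{th2e}. The only mild care needed, which is the main (minor) obstacle, is to rule out the possibility that $aq^h\equiv a\pmod n$ holds but $l_a$ is a strict divisor of $h$; this is handled by the same lower bound, because any proper divisor of $h$ is at most $h/2=m/4\leq\lfloor m/3\rfloor$ for $m\geq 4$, which is already excluded. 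The rest of the proof is routine divisibility bookkeeping.
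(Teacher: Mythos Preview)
Your proof is correct and is essentially the paper's argument recast in arithmetic rather than sequence language: the paper observes that $aq^{l_a}\equiv a\pmod n$ forces the $q$-adic expansion of $a$ to be periodic with period $l_a$, so $a<q^{m-\lfloor m/3\rfloor}$ together with $l_a\le\lfloor m/3\rfloor$ would make the repeating block zero and hence $a=0$---which is exactly your divisibility bound $a\ge(q^m-1)/(q^{l_a}-1)>q^{m-l_a}$ in disguise. The classification of divisors of $m$ and the conclusion in both parities are then identical to yours.
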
 
In order to prove the above theorem,  we introduce some additional notations and concepts. Let  $Z_q$ be the set of all non-negative integers less than $q$. Each integer $a\in [0, n]$ can be uniquely represented by its \textit{$q$-adic expansion} as $a=\sum\limits_{\ell=0}^{m-1}a_{\ell}q^{\ell}$ with
$a_{\ell}\in Z_q.$
Let $Z_q^{m}$ be  the set of all length-$m$
sequences of integers in  $Z_q$.   For simplicity, denote by $\mathbf{0}_{m}$ the sequence in $Z_q^m$ whose elements are all zero. 
We define an order on $Z_q^m$ using lexicographic order. Specifically, for any two squences   $U=(u_{m-1},\ldots,u_1,u_{0})$ and $W=(w_{m-1},\ldots,w_1,w_0)$ in $Z_q^m$,  
\begin{itemize}
\item [1.] $U$ and $W$  are said to be equal, denoted as $U=W$,  if    $u_{\ell}=w_{\ell}$ for $\ell=0,\ldots,m-1,$
\item  [2.] $U$ is less than $W$, denoted as $U<W$, if either  $u_{m-1}<w_{m-1}  $  or there exists an integer $i\in [0,m-2]$ such that  
$u_i<v_i$ and $u_{\ell}=w_{\ell}$ for all $\ell=i+1,\ldots,m-1,$ and
\item [3.] $U\leq W$ is denoted  if $U= W$ or $U<W$.
\end{itemize}  
We define a map  $V$ from the set of all the integers in $[0, n]$ to $Z_q^{m}$  as 
  \begin{equation}\notag
  V(a)=(a_{m-1}, \ldots,a_1, a_0),
\end{equation}   
where   $\sum\limits_{\ell=0}^{m-1}a_{\ell}q^{\ell}$ forms the  $q$-adic expansion of the integer  $a\in [0,n].$ 
Let $I_{m-1}$ be the identity matrix of  order $m-1$, and let   
\begin{equation}\notag
    Q=\begin{bmatrix}
        0& 1\\
        I_{m-1}& 0
    \end{bmatrix}.
\end{equation}
It is clear that for any squence $(a_{m-1},a_{m-2},\ldots,a_0)\in Z_q^m$,  
\begin{equation}\notag
 (a_{m-1},a_{m-2},\ldots,a_0)Q=(a_{m-2},\ldots,a_{0},a_{m-1}).   
\end{equation}
Suppose that 
 $a, b\in [0,n-1]$ are two integers.  We  have the following observations:
\begin{ob}\label{ob1}
  $a=b$  if and only if $V(a)=V(b)$, $a<b$  if and only if $V(a)<V(b)$,   and $a\leq b$  if and only if  $V(a)\leq V(b)$.
\end{ob}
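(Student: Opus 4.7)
The plan is to deduce all three equivalences from the uniqueness of the $q$-adic expansion together with a single positional-bound estimate. The equivalence ``$a=b$ if and only if $V(a)=V(b)$'' is essentially injectivity of the map $V$: if $V(a)=V(b)$, then $a$ and $b$ share the same $q$-adic expansion and therefore coincide, while conversely the $q$-adic expansion of any integer in $[0,q^m-1]$ is unique, forcing $V(a)=V(b)$ whenever $a=b$. Once the equality and strict inequality versions are in hand, the third equivalence ``$a\le b$ if and only if $V(a)\le V(b)$'' follows immediately from the definition of $\le$ on each side (``equal or strictly less'').

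The content is in the equivalence ``$a<b$ if and only if $V(a)<V(b)$''. I would first prove the direction $V(a)<V(b)\Rightarrow a<b$. Writing $V(a)=(a_{m-1},\ldots,a_0)$ and $V(b)=(b_{m-1},\ldots,b_0)$, let $j$ be the largest index at which the two sequences differ; the lex order definition in the paper then gives $a_j<b_j$ and $a_\ell=b_\ell$ for all $\ell>j$. Consequently
\begin{equation*}
b-a=(b_j-a_j)q^j+\sum_{\ell=0}^{j-1}(b_\ell-a_\ell)q^\ell,
\end{equation*}
and the lower-order contribution is bounded in absolute value by $(q-1)\sum_{\ell=0}^{j-1}q^\ell=q^j-1<q^j\le(b_j-a_j)q^j$. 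Therefore $b-a>0$, i.e., $a<b$.

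For the reverse direction $a<b\Rightarrow V(a)<V(b)$ I would invoke trichotomy. Under the lex order defined on $Z_q^m$, exactly one of $V(a)<V(b)$, $V(a)=V(b)$, $V(a)>V(b)$ holds. The case $V(a)=V(b)$ would give $a=b$ by the equality equivalence, and the case $V(a)>V(b)$, by the direction just established applied to the pair $(b,a)$, would give $b<a$; both contradict $a<b$, so $V(a)<V(b)$ must hold. No step here is genuinely difficult: the observation is really the routine verification that $V$ is an order isomorphism between $[0,q^m-1]$ with its natural order and $Z_q^m$ with the lex order. The one point requiring care is the positional bound above, which guarantees that the lower-order digits can never overwhelm the leading differing digit; everything else is bookkeeping.
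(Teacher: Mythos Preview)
Your proposal is correct. The paper does not actually prove this statement at all: it is labeled an ``Observation'' and simply asserted as evident from the definitions of $V$ and the lexicographic order, so your detailed verification via the positional bound and trichotomy is more than the paper supplies.
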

\begin{ob}\label{ob2}
 $V(aq^t \mathrm{\ mod\ }n)=V(a)Q^t $ for any integer $t\in [1,m-1].$
 \end{ob}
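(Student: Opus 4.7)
The plan is to establish the $t=1$ case by direct computation from the $q$-adic expansion, and then obtain the general statement by a one-line induction on $t$. The key arithmetic fact driving everything is that $q^m = n+1 \equiv 1 \pmod{n}$, so multiplying by $q$ modulo $n$ effectively wraps the top digit around to the bottom position.

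For the base case, I would write the $q$-adic expansion $a=\sum_{\ell=0}^{m-1}a_\ell q^\ell$ with each $a_\ell\in Z_q$, and compute
\[
aq \;=\; a_{m-1}q^m + \sum_{\ell=1}^{m-1} a_{\ell-1}q^\ell \;\equiv\; a_{m-1} + \sum_{\ell=1}^{m-1} a_{\ell-1}q^\ell \pmod{n}.
\]
The right-hand side is a sum of the form $\sum_{\ell=0}^{m-1} b_\ell q^\ell$ with $b_0=a_{m-1}$ and $b_\ell=a_{\ell-1}$ for $\ell\ge 1$. Since each $b_\ell\in Z_q$ and the whole value is at most $n$, this sum is automatically the $q$-adic expansion of $aq\bmod n$ (the borderline value $n$ only arises when every digit equals $q-1$, i.e.\ $a=n$, which lies outside the range $[0,n-1]$ for $a$; and the trivial case $a=0$ is immediate). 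Reading off the digits in the order prescribed by the definition of $V$ gives
\[
V(aq \bmod n) = (a_{m-2}, a_{m-3}, \ldots, a_1, a_0, a_{m-1}),
\]
which is precisely $V(a)Q$ by the displayed action of $Q$ stated in the paper.

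For $t\ge 2$, I would induct on $t$: assume $V(aq^{t-1}\bmod n)=V(a)Q^{t-1}$, set $a' := aq^{t-1}\bmod n\in [0,n-1]$, and apply the $t=1$ case to $a'$, using $a' q \equiv aq^t \pmod n$ to obtain
\[
V(aq^t \bmod n) \;=\; V(a'q \bmod n) \;=\; V(a')Q \;=\; V(a)Q^{t-1}Q \;=\; V(a)Q^t.
\]
There is no real obstacle here; the only point that requires a sentence of care is checking that the shifted digit sequence genuinely is the $q$-adic expansion of the reduced value (no carries), which holds because the shift merely permutes the original digits, each of which already lies in $Z_q$.
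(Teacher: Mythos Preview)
Your proof is correct. The paper states this as an ``observation'' without supplying any argument, so there is nothing to compare against; your base-case computation via $q^m\equiv 1\pmod n$ followed by induction on $t$ is exactly the standard justification, and your remark that the cyclically shifted digit string is already a valid $q$-adic expansion (no carries, value strictly below $n$ for $a\in[0,n-1]$) addresses the only point that needs care.
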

\begin{ob}\label{ob3}
The integer $a$ is the coset leader of $C_a$ if and only if 
$V(a)\leq V(a)Q^{t}$ for all integer $t\in [1,m-1].$ 
\end{ob}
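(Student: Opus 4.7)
The plan is to unwind the definition of a coset leader and then translate each constituent inequality through Observations 1 and 2. By (\ref{def1}), we have $C_a=\{aq^t\bmod n:t=0,1,\ldots,l_a-1\}$, so $a$ is the coset leader of $C_a$ if and only if
\begin{equation}\notag
a\leq aq^t\bmod n\quad\text{for every }t\in[0,l_a-1].
\end{equation}
The $t=0$ case is trivial, and for any $t\in[l_a,m-1]$ the residue $aq^t\bmod n$ re-enters the orbit $C_a$ and equals $aq^{t'}\bmod n$ for some $t'\in[0,l_a-1]$; in particular it does not give a new constraint. Hence the coset-leader condition is equivalent to
\begin{equation}\notag
a\leq aq^t\bmod n\quad\text{for every }t\in[1,m-1].
\end{equation}

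With the range of $t$ normalized to $[1,m-1]$, the rest is a direct substitution. Observation \ref{ob1} converts each integer inequality into $V(a)\leq V(aq^t\bmod n)$, and Observation \ref{ob2} rewrites $V(aq^t\bmod n)=V(a)Q^t$ for each such $t$. Stringing these equivalences together yields the desired characterization: $a$ is the coset leader of $C_a$ if and only if $V(a)\leq V(a)Q^t$ for all $t\in[1,m-1]$.

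The one point that requires a short justification is the widening of the index range from $[1,l_a-1]$ to $[1,m-1]$, and this is immediate from the Remark following Lemma \ref{pl1}: since $l_a\mid m$, every $t\in[1,m-1]$ reduces modulo $l_a$ to some index in $[0,l_a-1]$, so $aq^t\bmod n\in C_a$ and the extra inequalities are either redundant or trivially satisfied. No other obstacle is expected, since both observations invoked have already been put on record and the argument is essentially a one-line chain of equivalences once the range of $t$ is handled.
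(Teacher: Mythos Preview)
Your proof is correct and is precisely the natural unpacking of the definitions that the paper intends: the paper states Observation \ref{ob3} without proof, treating it as immediate from Observations \ref{ob1} and \ref{ob2} together with the definition of coset leader. Your handling of the index-range extension from $[1,l_a-1]$ to $[1,m-1]$ via $l_a\mid m$ is exactly the one implicit step a reader is expected to supply.
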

\textbf{Proof of Theorem \ref{th1}.}
We first consider the case when $m$ is an odd integer. 
Suppose that  $a\in [1,q^{m-\lfloor{m}/{3}\rfloor})$ is an integer.
Denote by $\sum\limits_{\ell=0}^{m-1}a_{\ell}q^{\ell}$ the $q$-adic expansion of $a$.  It is clear that  
$a_{\ell}=0$ for all $\ell=m-\lfloor {m}/{3}\rfloor, \ldots, m-1$, and hence 
\begin{equation}\label{th1e2}
V(a)=( \mathbf{0}_{\lfloor\frac{m}{3}\rfloor}, a_{m-\lfloor\frac{m}{3}\rfloor-1},\ldots,a_0 ).
\end{equation}
By the definition of the coset $C_a$, we have 
$aq^{l_a} \mathrm{\ mod\ } n=a$.
It follows that
$V(a)Q^{l_a}=V(aq^{l_a} \mathrm{\ mod\ }n)=V(a),$ i.e., 
   \begin{equation}\notag
   (a_{m-1-l_a},\ldots,a_0,a_{m-1},\ldots,a_{m-l_a})=(a_{m-1},\ldots,a_1,a_0).
   \end{equation}
   For simplicity, denote $U=(a_{m-1},\ldots,a_{m-l_a}).$ Note that $l_a$ divides $m$. 
Then we can conclude from the above equation that 
\begin{equation}\label{th1e3}
V(a)=(U, U,\ldots,  U). 
\end{equation}  
Assume that $l_a \leq \lfloor {m}/{3}\rfloor$. 
We can conclude from  (\ref{th1e2}) 
that $U=(a_{m-1},\ldots,a_{m-\ell_a})=\mathbf{0}_{\ell_a}.$ With (\ref{th1e3}), it follows   that $V(a)=\mathbf{0}_m$, 
which implies $a=0$. This contradicts the initial assumption that $a\in [1,q^{m-\lfloor {m}/{3}\rfloor})$. Therefore, we must have $l_a>\lfloor {m}/{3}\rfloor$. Given that $m$ is an odd integer and $l_a$ divides $m$, it follows that $l_a=m$. 
Equivalently,  $|C_a|=m$.

 When $m$ is an even integer,  we can use a similar argument as in the first paragraph to show that $l_a>\lfloor {m}/{3}\rfloor$ for any integer $a\in  [1,q^{m-\lfloor {m}/{3}\rfloor}).$  Note that $l_a$ divides $m$ and $m$ is even.  It follows that $l_a={m}/{2}$ or $m$. 
 Consequently, we have $l_a={m}/{2}=h$
  if   $aq^{h}\mathrm{\ mod\ } n = a$, and $l_a=m$ otherwise. 
Equivalently, the equality in  (\ref{th2e}) holds.   This completes the proof.  \qed

It is straightforward to verify that  $aq^{h}\mathrm{\ mod\ } n \neq a $ for any integer $a\in [1, q^h)$ when $m$ is even. Therefore, we can derive the following corollary, which can also be derived from \cite[Lemma 5]{YH1996}.  
\begin{Corollary}
\label{cos}   If $m$ is an even integer,  then $|C_a|=m $ for any integer $a\in [1,q^h).$
\end{Corollary}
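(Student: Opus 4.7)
The plan is to deduce the corollary as a direct consequence of Theorem \ref{th1}. The key observation is that the interval $[1, q^h)$ sits inside the range where Theorem \ref{th1} applies: since $m$ is even, $h = m/2$, and for $m \geq 2$ we have $h \leq m - \lfloor m/3\rfloor$, so $q^h \leq q^{m - \lfloor m/3\rfloor}$. Therefore every $a \in [1, q^h)$ satisfies the hypothesis of the even-$m$ case of Theorem \ref{th1}, and $|C_a|$ is either $m/2$ or $m$ depending on whether $aq^h \bmod n$ equals $a$ or not.

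Thus it suffices to rule out the equality $aq^h \bmod n = a$ for $a \in [1, q^h)$. I would give a short number-theoretic argument: the congruence $aq^h \equiv a \pmod{q^m - 1}$ is equivalent to $(q^m - 1) \mid a(q^h - 1)$. Factoring $q^m - 1 = (q^h - 1)(q^h + 1)$ and noting that $\gcd(q^h - 1, q^h + 1)$ divides $2$, one sees that $(q^h + 1) \mid a$ must hold (up to a factor of $2$, which can be handled by observing that a factor of $2$ cannot help since $a < q^h < q^h + 1$). In any case, $(q^h + 1) \mid a$ forces $a \geq q^h + 1$, contradicting $a < q^h$. Hence $aq^h \bmod n \neq a$, and by Theorem \ref{th1}, $|C_a| = m$.

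Alternatively, I could use the $q$-adic machinery that has just been set up, which is perhaps more in keeping with the paper's style. For $a \in [1, q^h)$ we have $V(a) = (\mathbf{0}_h, a_{h-1}, \ldots, a_0)$ with the leading $h$ coordinates zero. Observation \ref{ob2} gives $V(aq^h \bmod n) = V(a)Q^h = (a_{h-1}, \ldots, a_0, \mathbf{0}_h)$, that is, the nonzero block is rotated into the leading $h$ positions. By Observation \ref{ob1}, the equality $aq^h \bmod n = a$ would force $(a_{h-1}, \ldots, a_0) = \mathbf{0}_h$, giving $a = 0$, a contradiction. Either route completes the proof in a few lines; the only mild point of care is to verify the range containment $q^h \leq q^{m - \lfloor m/3\rfloor}$, and I do not anticipate a substantive obstacle.
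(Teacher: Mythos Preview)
Your proposal is correct and follows exactly the paper's approach: invoke Theorem~\ref{th1} and rule out $aq^h \bmod n = a$ for $a \in [1, q^h)$, which the paper dismisses as ``straightforward to verify'' without further detail. Your $q$-adic argument via Observations~\ref{ob1}--\ref{ob2} is a clean way to fill this in; incidentally, in the number-theoretic variant the divisibility $(q^h+1)\mid a$ follows directly from $(q^h-1)(q^h+1)\mid a(q^h-1)$ by cancellation, so the gcd discussion is unnecessary.
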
 
Let  $\mathcal{H}$ be the set of all integers $a\in [1,n-1]$ such that $a$ is a coset leader and $|C_a|={m}/{2}$. That is, 
\begin{equation}\notag
    \mathcal{H}=\left\{a\in [1,n-1]: a \hbox{ is a  coset leader and } |C_a|=\frac{m}{2}\right\}.
\end{equation}
 Then we also  the following  corollary of  Theorem \ref{th1}. 
\begin{Corollary}\label{corr}
Let $m$ be an even integer and  $k$ be an integer such that  $0\leq k\leq \lfloor(2m-1)/{3}\rfloor-h$. Suppose that $ a\in  \left[q^{h+k},q^{h+k+1}\right)$   is  an integer. Then   $a\in \mathcal{H}$ if and only if $V(a)$ has the form 
\begin{equation}\label{coeq1}
(\mathbf{0}_{h-k-1},a_{k},\ldots,a_0, \mathbf{0}_{h-k-1},a_{k},\ldots,a_0)
\end{equation}
with  $a_0>0$ and $a_k>0$.   
\end{Corollary}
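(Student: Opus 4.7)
The plan is to use Theorem~\ref{th1} to control $|C_a|$ and then analyze the $q$-adic digit structure via Observation~\ref{ob3} to verify the coset-leader condition. First I would confirm the range hypothesis: for $m$ even, a direct calculation gives $\lfloor(2m-1)/3\rfloor + 1 = m - \lfloor m/3\rfloor$, so any $a \in [q^{h+k}, q^{h+k+1})$ with $k \leq \lfloor(2m-1)/3\rfloor - h$ satisfies $a < q^{m - \lfloor m/3\rfloor}$. Theorem~\ref{th1} then applies and gives $|C_a| \in \{m/2, m\}$, with $|C_a| = m/2$ equivalent to $aq^h \bmod n = a$, i.e., $V(a)Q^h = V(a)$ by Observation~\ref{ob2}.

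For the forward direction, assuming $a \in \mathcal{H}$, the identity $V(a)Q^h = V(a)$ translates to $a_\ell = a_{\ell - h}$ for $\ell \in [h, m-1]$. Combined with the constraint $a \in [q^{h+k}, q^{h+k+1})$, which forces $a_{h+k} > 0$ and $a_{h+k+1} = \cdots = a_{m-1} = 0$, this yields the claimed shape of $V(a)$ with $a_k = a_{h+k} > 0$, $a_{k+1} = \cdots = a_{h-1} = 0$, and $a_j = a_{h+j}$ for $j = 0, \ldots, k$. Since $a$ is a coset leader, $q \nmid a$, forcing $a_0 > 0$.

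For the converse, the assumed form of $V(a)$ immediately gives $V(a)Q^h = V(a)$, and Theorem~\ref{th1} then forces $|C_a| = m/2$. The substantive remaining task is to show that $a$ is itself the coset leader of $C_a$. By Observation~\ref{ob3}, this amounts to verifying $V(a) \leq V(a)Q^t$ for all $t \in [1, m-1]$; the period-$h$ symmetry $V(a)Q^{t+h} = V(a)Q^t$ reduces this to $t \in [1, h-1]$.

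The main obstacle is this comparison, and it is precisely here that the hypothesis on $k$ is used. The bound $k \leq \lfloor(2m-1)/3\rfloor - h$ rearranges to $3k \leq h - 1$, hence $2k < h$. By Observation~\ref{ob2}, $aq^t \bmod n$ has digit $a_{(\ell - t)\bmod h}$ at position $\ell$, so it carries a nonzero digit at position $h + ((t+s)\bmod h)$ for every $s \in [0, k]$ with $a_s > 0$; the integer $a$, by contrast, has its highest nonzero digit at position exactly $h + k$. For $t \in [k+1, h-1]$ I would take $s = 0$ to produce the valid witness $(t+0)\bmod h = t > k$; for $t \in [1, k]$ I would take $s = k$ to get $(t + k)\bmod h = t + k > k$ (the reduction is trivial precisely because $2k < h$). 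In either case $aq^t \bmod n$ has a nonzero digit strictly above position $h + k$, where $a$'s digit is $0$, so $V(a) < V(a)Q^t$, completing the proof.
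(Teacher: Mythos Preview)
Your proof is correct and follows the same strategy as the paper: both directions hinge on Theorem~\ref{th1} (to identify $|C_a|=m/2$ with $V(a)Q^h=V(a)$) together with Observation~\ref{ob3} for the coset-leader check. The paper's sufficiency argument simply asserts that ``it is straightforward to verify that $V(a)\le V(a)Q^t$ for all $t\in[1,m-1]$''; you have supplied that verification explicitly, reducing by period-$h$ symmetry to $t\in[1,h-1]$ and exhibiting, for each such $t$, a nonzero digit of $V(a)Q^t$ strictly above position $h+k$. Your observation that the hypothesis $k\le\lfloor(2m-1)/3\rfloor-h$ is exactly what forces $2k<h$ (so that the witness index $t+k$ does not wrap modulo $h$) is a nice clarification of where the range restriction enters, something the paper leaves implicit.
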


\begin{proof}
\textit{Necessity part.} Suppose that $a\in \mathcal{H}$. By definition of $\mathcal{H}$ and  applying Theorem 1, we  conclude that $q\nmid a$ and  $aq^{h}\mathrm{\ mod\ } n = a.$ Note that the latter condition is equivalent to $V(a)Q^{h}=V(a)$, and   $V(a)$ has the form 
$
(\mathbf{0}_{h-k-1},a_{h+k}\ldots, a_0)
$ with $a_{h+k}>0$. 
It follows that 
\begin{multline}\notag
(\mathbf{0}_{h-k-1},a_{h+k}\ldots, a_0) \\=   (a_{h-1},\ldots,a_0, \mathbf{0}_{h-k-1},a_{h+k}\ldots, a_h). 
\end{multline}
This implies that \begin{equation}\notag
    (a_{h-1},\ldots,a_{k+1})=\mathbf{0}_{h-k-1}
\end{equation}
and \begin{equation}\notag
    (a_{k},\ldots,a_0)=(a_{h+k},\ldots,a_h).
\end{equation}
Therefore, $V(a)$ has the form specified in (\ref{coeq1}). Furthermore, since  $q\nmid a$ and $a_{h+k}>0$, we also have $a_0>0$ and $a_{k}=a_{h+k}>0$. 

\textit{Sufficiency part.} 
Suppose that $V(a)$ has the form specified in (\ref{coeq1}) with $a_0>0$ and $a_k>0$. Then it is straightforward to verify that $V(a)\leq V(a)Q^t$ for any integer $t\in [1, m-1]$. In particular,  $V(a)=V(a)Q^h$, which is equivalent to $aq^{h}\mathrm{\ mod\ } n = a.$ 
By applying  Theorem \ref{th1}  and Observation \ref{ob3}, we can conclude that    $a$ is the coset leader of $C_a$ and  $|C_a|={m}/{2}$, i.e., $a\in \mathcal{H}$. 
\end{proof}
By counting the sequences in $Z_q^m$ having the form in (\ref{coeq1}) with $a_{0}>0$ and $a_k>0$,  we can directly derive the following corollary. 
\begin{Corollary}\label{h}
Let $m\geq 4$ be an even integer and  $k$ be an integer such that  $1\leq k\leq \lfloor(2m-1)/{3}\rfloor-h$. Then we have
\begin{equation}\label{coeq2}
\left|   \left[q^{h+k},q^{h+k+1}\right)\cap \mathcal{H}\right|=
   q^{k-1}(q-1)^{2}.  
\end{equation}
\end{Corollary}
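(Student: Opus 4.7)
The plan is to derive this corollary directly from Corollary \ref{corr} by a counting argument, using the fact (Observation \ref{ob1}) that the map $V$ is a bijection between $[0,n]$ and $Z_q^m$. So counting integers in $[q^{h+k},q^{h+k+1})\cap \mathcal{H}$ reduces to counting length-$m$ sequences of the form prescribed by Corollary \ref{corr}.

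First I would invoke Corollary \ref{corr}: an integer $a\in [q^{h+k},q^{h+k+1})$ lies in $\mathcal{H}$ if and only if
\begin{equation*}
V(a)=(\mathbf{0}_{h-k-1},a_k,\ldots,a_0,\mathbf{0}_{h-k-1},a_k,\ldots,a_0)
\end{equation*}
with $a_0>0$ and $a_k>0$. Note that once the coordinates $a_0,a_1,\ldots,a_k$ are fixed, the entire sequence $V(a)$ is determined, because the remaining coordinates are either zero or forced to repeat $a_0,\ldots,a_k$. Hence the free parameters are exactly the $k+1$ entries $a_0,a_1,\ldots,a_k$.

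Next I would count the admissible choices for these free entries. The condition $a_0>0$ gives $a_0\in\{1,\ldots,q-1\}$, contributing $q-1$ choices. Similarly, $a_k>0$ gives $q-1$ choices. Each of the middle entries $a_1,\ldots,a_{k-1}$ ranges freely over $Z_q$, contributing $q^{k-1}$ choices in total. Multiplying yields $q^{k-1}(q-1)^2$ admissible sequences, which by Observation \ref{ob1} equals the number of integers described.

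There is essentially no obstacle here: once Corollary \ref{corr} supplies the structural characterization, the statement is a routine combinatorial count. The only mild point worth checking is that the hypothesis $1\leq k\leq \lfloor(2m-1)/3\rfloor-h$ (together with $m\geq 4$ even) guarantees $h-k-1\geq 0$, so the block $\mathbf{0}_{h-k-1}$ makes sense and the two copies of $(a_k,\ldots,a_0)$ do not overlap in $V(a)$; this is precisely the setting in which Corollary \ref{corr} applies.
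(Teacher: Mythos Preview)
Your proposal is correct and matches the paper's approach exactly: the paper simply states that the corollary follows by counting the sequences in $Z_q^m$ of the form in Corollary~\ref{corr} with $a_0>0$ and $a_k>0$, which is precisely the count you carry out. The only detail worth noting is that the hypothesis $k\geq 1$ is what ensures $a_0$ and $a_k$ are distinct entries so that the two constraints contribute independent factors of $q-1$.
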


Let $m$ and $k $  be two integers such that $m\geq 4$ and  $m-2h\leq k\leq  \lfloor {(2m-1)}/{3}\rfloor-h$. 
When $m$ is an odd integer,  
  for each integer $i\in [-k+1, k]$, we define  $\mathcal{A}_k(i)$ as  the set of  all  integers $a\in [q^{h+k},q^{h+k+1})$ with $q$-adic expansion $\sum\limits_{\ell=0}^{h+k}a_{\ell}q^{\ell}$ that satisfies: 
\begin{IEEEeqnarray}{l}
a_{h+i}>0; \label{p31}\\
(a_{k+i-1},\ldots,a_{0})\leq (a_{h+k}, \ldots, a_{h-i+1}) \hbox{ and }a_0>0;  \label{p32} \\ 
V(a)=(\mathbf{0}_{h-k}, a_{h+k},\ldots,a_{h+i},\mathbf{0}_{h-k},a_{k+i-1},\ldots,a_0).  \label{p33}
 \IEEEeqnarraynumspace 
\end{IEEEeqnarray}
When $m$ is an even integer,  for each $i\in [-k,k]$,  we define  $\mathcal{B}_k(i)$ as the set of all integers $a\in [q^{h+k},q^{h+k+1})$  with $q$-adic expansion $\sum\limits_{\ell=0}^{h+k}a_{\ell}q^{\ell}$ that satisfies the condition in (\ref{p31}) and the following: 
\begin{IEEEeqnarray}{l}
 (a_{k+i},\ldots,a_{0})\leq (a_{h+k}, \ldots, a_{h-i}) \hbox{ with }a_0>0;  \label{p35} \\
V(a)\!=\!(\mathbf{0}_{h-k-1}, a_{h+k},\ldots,a_{h+i},\mathbf{0}_{h-k-1},a_{k+i},\ldots,a_0).  \label{p36}   \IEEEeqnarraynumspace
\end{IEEEeqnarray}
We make the following remarks about the above definitions of $\mathcal{A}_k(i)$ and $\mathcal{B}_k(i)$. 
\begin{Remark}\label{rr2}
  The condition   that $k\leq \lfloor({2m-1})/{3}\rfloor-h$ implies the following consequences:
 \begin{itemize}
     \item When $m$ is odd, $|2i-1|\leq h-k$ for all $i\in [-k+1,k]$. 
\item  When $m$ is even, $|2i|\leq h-k-1$  for all $i\in [-k, k]$.
  \end{itemize} 
\end{Remark}   
\begin{Remark}\label{rm3}
 The condition in (\ref{p31})  can be equivalently represented as $q\nmid \sum\limits_{\ell=h+i}^{h+k}a_{\ell}q^{\ell-(h+i)}$.
\end{Remark}
\begin{Remark}\label{rm4}
 The  condition  in    (\ref{p32}) can be equivalently expressed as  \begin{equation}\label{mm}
     \sum\limits_{\ell=0}^{k+i-1}a_{\ell}q^{\ell}\leq \sum\limits_{\ell=h-i+1}^{h+k}a_{\ell}q^{\ell-(h-i+1)}\quad \hbox{and}\quad a_0>0.
 \end{equation}
Moreover,  the form of $V(a)$ in (\ref{p33}) implies that 
\begin{equation}\notag
    \sum\limits_{\ell=h-i+1}^{h+k}a_{\ell}q^{\ell-(h-i+1)}=\sum\limits_{\ell=h+i}^{h+k}a_{\ell}q^{\ell-(h+i)}\cdot q^{2i-1} 
\end{equation} 
 if $i\in [1,k],$ and 
\begin{IEEEeqnarray}{rCl}
\sum\limits_{\ell=h+i}^{h+k}a_{\ell}q^{\ell-(h+i)}\cdot q^{2i-1}&=& 
\sum\limits_{\ell=h-i+1}^{h+k}a_{\ell}q^{\ell-(h-i+1)} \nonumber \\ \nonumber &&+\>   \sum\limits_{\ell=h+i}^{h-i}a_{\ell}q^{\ell-(h-i+1)} \nonumber
\end{IEEEeqnarray} 
if $i\in  [-k+1,0]$. 
Noticing that  $0\leq \sum\limits_{\ell=h+i}^{h-i}a_{\ell}q^{\ell-(h-i+1)}<1$ for $i\in [-k+1,0]$, and the two summations in (\ref{mm}) are both 
integers, we can conclude that  (\ref{mm}) is equivalent to  
\begin{equation}\label{uu}
   \sum\limits_{\ell=0}^{k+i-1}a_{\ell}q^{\ell}\leq    \sum\limits_{\ell=h+i}^{h+k}a_{\ell}q^{\ell-(h+i)}\cdot q^{2i-1}\quad \hbox{and}\quad a_0>0.
\end{equation}
Similarly, 
the condition in  (\ref{p35}) can also   be written as  
 \begin{equation} \sum\limits_{\ell=0}^{k+i}a_{\ell}q^{\ell}\leq \sum\limits_{\ell=h-i}^{h+k}a_{\ell}q^{\ell-(h-i)} \quad\hbox{and}\quad a_0>0,\end{equation} which is further equivalent to 
 \begin{equation} \sum\limits_{\ell=0}^{k+i}a_{\ell}q^{\ell}\leq \sum\limits_{\ell=h-i}^{h+k}a_{\ell}q^{\ell-(h+i)}\cdot q^{2i}\quad\hbox{and}\quad a_0>0.
 \end{equation}
\end{Remark}

\begin{Remark}\label{rm6}
If   $a\in \mathcal{A}_k(i)$ is an integer with  $q$-adic expansion $\sum\limits_{\ell=0}^{h+k}a_{\ell}q^{\ell},$ then 
$\sum\limits_{\ell=0}^{h-k}a_{\ell}q^{\ell}\leq \sum\limits_{\ell=h-i+1}^{h+k}a_{\ell}q^{\ell-(h-i+1)}$.

Similarly,  if $a\in \mathcal{B}_k(i)$ is an integer with  $q$-adic expansion $\sum\limits_{\ell=0}^{h+k}a_{\ell}q^{\ell}$,  then   
$\sum\limits_{\ell=0}^{h-k-1}a_{\ell}q^{\ell}\leq \sum\limits_{\ell=h-i}^{h+k}a_{\ell}q^{\ell-(h-i)}$. 
\end{Remark}

\begin{Remark}\label{cor2}
If   $a\in \mathcal{A}_k(i)$ is an integer with  $q$-adic expansion $\sum\limits_{\ell=0}^{h+k}a_{\ell}q^{\ell},$ then 
$i$ is the smallest integer in $[-k+1,k]$ such that $a_{h+i}>0$.  Consequently, $\mathcal{A}_k(i)\cap \mathcal{A}_k(j)=\varnothing$  for  distinct integers $i,j\in [-k+1,k]$. 

Similarly,  if $a\in \mathcal{B}_k(i)$ is an integer with  $q$-adic expansion $\sum\limits_{\ell=0}^{h+k}a_{\ell}q^{\ell}$,  then   $i$ is the smallest integer in $[-k,k]$ such that $a_{h+i}>0$.  As a result,  $\mathcal{B}_k(i)\cap \mathcal{B}_{k}(j)=\varnothing$ for distinct integers $i,j\in [-k,k]$. 
\end{Remark}

We denote by $\bigsqcup\limits_{i\in I} A_i$ the disjoint union of a family of sets $\{A_i: i\in I\}$, where $I$ is an index set. Then we have the following Theorem.
\begin{theorem}\label{th2}
Let $m$ and $k$ be  two integers such that $m\geq 4$ and  $m-2h\leq k\leq   \lfloor (2m-1)/{3}\rfloor -h.$\begin{itemize}
    \item 
If $m$ is odd,  then 
 \begin{equation}\label{th2e1}
\mathcal{S}\cap [q^{h+k},q^{h+k+1})=\bigsqcup\limits_{i=-k+1}^k\mathcal{A}_k(i).
\end{equation}
\item If $m $ is  even,   
 then \begin{IEEEeqnarray}{rCl}
(\mathcal{S}\cup \mathcal{H})\cap [q^{h+k},q^{h+k+1}) =\bigsqcup\limits_{i=-k}^k\mathcal{B}_k(i).  \label{th2e3}  
\end{IEEEeqnarray}
\end{itemize}
\end{theorem}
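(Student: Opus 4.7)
The plan is to prove the two set equalities separately; disjointness of the unions is already Remark~\ref{cor2}. Since $k\leq\lfloor(2m-1)/3\rfloor-h$, the interval $[q^{h+k},q^{h+k+1})$ is contained in $[1,q^{m-\lfloor m/3\rfloor})$, so Theorem~\ref{th1} constrains the coset size $|C_a|$ on this range to be $m$ in the odd case, and $m$ or $m/2$ in the even case; this will be used without further comment.

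For the $\supseteq$ direction, take $a\in\mathcal{A}_k(i)$ and consider the shift $t=h-i+1$. Writing $a=Uq^{h+i}+L$ with $U=\sum_{\ell=h+i}^{h+k}a_\ell q^{\ell-(h+i)}$ and $L=\sum_{\ell=0}^{k+i-1}a_\ell q^\ell$, the relation $q^m\equiv 1\pmod{n}$ gives $aq^t\bmod n=U+Lq^{h-i+1}$, and reading off the $q$-adic digits yields
\[
V(aq^t)=\bigl(\mathbf{0}_{h-k},a_{k+i-1},\ldots,a_0,\mathbf{0}_{h-k},a_{h+k},\ldots,a_{h+i}\bigr),
\]
which is $V(a)$ with its upper and lower non-zero blocks swapped and the two length-$(h-k)$ zero gaps preserved. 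Past the common $h-k$ leading zeros, the lex comparison of $V(aq^t)$ with $V(a)$ then reduces to comparing the two length-$(k+i)$ sequences in condition~(\ref{p32}); strict $<$ there already yields $V(aq^t)<V(a)$. Equality in (\ref{p32}) is ruled out for $i\geq 1$ because $a_0>0$ would then force a digit sitting in the middle-zero slot of $V(a)$ to be positive; for $i\leq 0$, even if equality persists throughout the overlap of the two upper blocks, the constraint $a_{h+i}>0$ still produces strict inequality at position $h-i$ from the top, where $V(a)=a_{h+i}>0$ while $V(aq^t)=0$ sits in its middle zero region. Hence $a\in\mathcal{S}$. The even case is analogous with $t=h-i$ and $h-k-1$ in place of $h-k$; additionally, for $i=0$ the equality sub-case of condition~(\ref{p35}) reproduces the palindromic form of Corollary~\ref{corr} and places $a\in\mathcal{H}$.

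For the $\subseteq$ direction, take $a$ in the left-hand side of (\ref{th2e1}) or (\ref{th2e3}); then $a_0>0$ (from $q\nmid a$) and $a_{h+k}>0$, and $V(a)$ has $h-k$ (resp.\ $h-k-1$) leading zeros. The hypothesis yields $t\in[1,m-1]$ with $V(aq^t)\leq V(a)$; matching leading zeros forces $V(aq^t)$ to begin with at least as many zeros, which translates to the existence of a cyclic zero window of length $h-k$ (resp.\ $h-k-1$) in $V(a)$ at a starting position different from the trailing one. Because $a_0>0$ prevents wrap-around, this window corresponds to an internal zero block in $a$'s $q$-adic expansion inside $[1,h+k-1]$. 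A simple count using $k\leq\lfloor(2m-1)/3\rfloor-h$ together with Remark~\ref{rr2} shows that such a zero block is necessarily unique and has its upper end in the admissible range, so that $i$ defined via this upper end (equivalently, $i$ is the smallest integer in the prescribed range with $a_{h+i}>0$) satisfies $i\in[-k+1,k]$ (resp.\ $[-k,k]$). The form of $V(a)$ claimed in (\ref{p33}) (resp.\ (\ref{p36})) then follows, and condition~(\ref{p32}) (resp.\ (\ref{p35})) is recovered by unpacking $V(aq^t)\leq V(a)$ digit by digit. For $a\in\mathcal{H}$ in the even case, Corollary~\ref{corr} directly supplies the palindromic form, placing $a$ in $\mathcal{B}_k(0)$.

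The main obstacle is the lex bookkeeping in the sufficiency argument, specifically handling the near-equality sub-case of condition~(\ref{p32}) (or (\ref{p35})). The key observation is that the bound $k\leq\lfloor(2m-1)/3\rfloor-h$ is exactly what makes the middle zero gap of $V(aq^t)$ wide enough to cover position $h-i$, so that $a_{h+i}>0$ reliably provides the strict inequality. In the even case, one additionally has to separate $\mathcal{H}$ from $\mathcal{S}$ cleanly inside $\mathcal{B}_k(0)$, which corresponds exactly to the split between equality and strict inequality in condition~(\ref{p35}) at $i=0$.
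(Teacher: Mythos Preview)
Your proposal is correct and follows essentially the same route as the paper's proof: in both directions the argument hinges on the cyclic shift $V(a)Q^{h-i+1}$ (resp.\ $V(a)Q^{h-i}$ in the even case), the gap bound $|2i-1|\le h-k$ (resp.\ $|2i|\le h-k-1$) from Remark~\ref{rr2}, and for the $\subseteq$ inclusion on recovering the two-block form of $V(a)$ from the forced leading zeros of $V(a)Q^t$ before selecting $i$ as the least index with $a_{h+i}>0$. Your exposition is more compressed than the paper's---for instance, the paper explicitly splits the verification of~(\ref{p32}) in the reverse inclusion into the sub-cases $i=m-h-t$ and $i>m-h-t$ rather than a single ``unpacking'' step---but no idea is missing.
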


\begin{proof}
We first consider the case when   $m$ is an odd integer.  
Suppose that  $a\in \bigcup\limits_{i=-k+1}^k\mathcal{A}_k(i)$ is an integer with $q$-adic expansion $\sum\limits_{\ell=0}^{h+k}a_{\ell}q^{\ell}$. 
By the definition of  $\mathcal{A}_k(i)$,   we have   (\ref{p31})   (\ref{p32}) and (\ref{p33}) 
for some integer $i\in [-k+1,k]$. 
It follows that $q\nmid a$ and \begin{multline}
    \label{www1}
    V(a)Q^{h-i+1} \\= (\mathbf{0}_{h-k}, a_{k+i-1},\ldots,a_0, \mathbf{0}_{h-k},a_{h+k},\ldots,a_{h+i}).
\end{multline}

If $i\in [-k+1, 0]$,  then we can conclude  from  (\ref{p31})  and (\ref{p32}) that  \begin{equation}\label{pth1}
(a_{k+i-1},\ldots,a_0,\mathbf{0}_{-2i+1})<(a_{h+k},\ldots,a_{h+i}).
\end{equation}
Recalling  the fact that   $|2i-1|\leq h-k$ in Remark \ref{rr2},  
 it follows from (\ref{www1}) and (\ref{pth1}) that $ V(a)Q^{h-i+1}<V(a).$ By the definition of $\mathcal{S}$ and Observation \ref{ob3},  we have   $a\in \mathcal{S}$.  
On the other hand, 
if   $i\in [1, k]$,  
we conclude from  $|2i-1|\leq h-k$ and  (\ref{p33}) that 
\begin{IEEEeqnarray}{rCl} 
(a_{h+k}, \ldots, a_{h-i+1})=(a_{h+k}, \ldots, a_{h+i},\mathbf{0}_{2i-1}).\IEEEeqnarraynumspace\nonumber
\end{IEEEeqnarray}
With  (\ref{p32}),  it follows that  
\begin{equation}\notag
    (a_{k+i-1},\ldots,a_{2i-1})<(a_{h+k}, \ldots, a_{h+i}).
\end{equation}
Consequently,  we still have 
$V(a)Q^{h-i+1}<V(a)$, 
which implies $a\in \mathcal{S}$.  
By  now, we have already demonstrated  that  $\bigcup\limits_{i=-k+1}^k\mathcal{A}_k(i)\subseteq  \mathcal{S}\cap [q^{h+k},q^{h+k+1}).$

Conversely, let us assume  that $a\in  \mathcal{S}\cap [q^{h+k},q^{h+k+1}) $ is an integer with $q$-adic expansion 
$\sum\limits_{\ell=0}^{h+k}a_{\ell}q^{\ell}$. We first have $a_{h+k}>0$. 
By the definition of $\mathcal{S}$ and Observation \ref{ob3}, we also  have  $a_0>0$ and  $V(a)Q^{t}<V(a)$  for some integer $t\in [1, m-1]$. 
Noticing that $V(a)=(\mathbf{0}_{h-k}, a_{h+k},\ldots,a_0)$, it follows that   the first  $h-k$ entries of $V(a)Q^t$ are all equal to zero. Considering $a_{h+k}>0$ and $a_{0}>0$, we  have  $h-k+1\leq t\leq h+k$.   Consequently, we can conclude that  \begin{equation}\notag
    V(a)Q^t=(\mathbf{0}_{h-k}, a_{h+k-t},\ldots, a_0,\mathbf{0}_{h-k}, a_{h+k},\ldots,a_{m-t})
\end{equation}
and \begin{equation}\notag
    V(a)=(\mathbf{0}_{h-k}, a_{h+k},\ldots, a_{m-t},\mathbf{0}_{h-k}, a_{h+k-t},\ldots,a_{0}).
\end{equation}
Given that  $a_{h+k}>0$,  we can chose $i$ to be the smallest integer in  $[m-h-t, k]$ such that $a_{h+i}>0$. Then 
$V(a)$ has the form specified in (\ref{p33}) with $(a_{k+i-1},\ldots,a_{0})=(\mathbf{0}_{h+t+i-m}, a_{h+k-t},\ldots,a_0)$. Furthermore, the inequality $h-k+1\leq t\leq h+k$ implies that $i\in [-k+1,k]$. 

Additionally,
if $i>m-h-t$, we can obtain (\ref{p32}) from the fact that $a_{h+k}>0$ and $a_0>0$.  On the other hand, if $i=m-h-t$, then we can derive (\ref{p32}) from $V(a)Q^t<V(a)$ and $a_0>0$. Now we
can conclude that $a\in \mathcal{A}_k(i)$ for some integer $i\in [-k+1,k]$. It follows that 
 $\mathcal{S}\cap [q^{h+k},q^{h+k+1})\subseteq \bigcup\limits_{i=-k+1}^k\mathcal{A}_k(i).$

We now  can conclude from the above argument that $\bigcup\limits_{i=-k+1}^k\mathcal{A}_k(i)= \mathcal{S}\cap [q^{h+k},q^{h+k+1}).$ 
Finally,  applying  Remark \ref{cor2} directly yields equation \eqref{th2e1}. 

If $m$ is an even integer,  with Corollary \ref{corr}, we can utilize a similar argument as above to show that $$\mathcal{S}\cap [q^{h+k},q^{h+k+1})=\bigcup\limits_{i=-k}^k\mathcal{B}_k(i)\setminus \left[\mathcal{H}\cap  [q^{h+k},q^{h+k+1})\right].$$ 
It follows that $(\mathcal{S}\cup \mathcal{H})\cap [q^{h+k},q^{h+k+1})
 =\bigcup\limits_{i=-k}^k\mathcal{B}_k(i).$
Then by applying Remark \ref{cor2} again, we have equation (\ref{th2e3}). This completes the proof of Theorem \ref{th2}.  
\end{proof}

\section{The dimension of \texorpdfstring{$\mathcal{C}_{(q,m,\delta)}$ } \ for \texorpdfstring{$2 \leq \delta \leq q^{\lfloor ({2m-1})/{3}\rfloor+1} $ } \
}

Let $\mathbb{Z}$ denote the set of all the integers. For a real number $a$,  denote by $N(a)$ the number of integers in the range $[1, a-1]$  not divisible by $q$, i.e.,   $N(a)=\left \lfloor a-1\right \rfloor- \left \lfloor (a-1)/q \right \rfloor$.  
Let $m\geq 4$ and   $\delta\in [2, q^{\lfloor (2m-1)/{3}\rfloor +1}]$ be  integers. Let $k_{\delta}=\lfloor \log_q(\delta-1)\rfloor-h$.  This implies that    $q^{h+k_{\delta}}\leq \delta-1<q^{h+k_{\delta}+1}$. Let  
$\sum\limits_{\ell=0}^{h+k_{\delta}}\delta_{\ell}q^{\ell}$ be the $q$-adic expansion of $\delta-1$. If  $k_{\delta}\geq m-2h$,   let $s_{\delta}$ denote the smallest integer in $[m-2h-k_{\delta}, k_{\delta}]$ such that $\delta_{h+s_{\delta}}>0$. 

If $m$ is odd, we define the function $f(\delta)$ for integers  $\delta\in [2, q^{\lfloor (2m-1)/{3}\rfloor +1}]$ as follow:
\begin{equation} \label{ff}
f(\delta)=
\begin{cases}                  
0,  &  \delta \leq q^{h+1}, \\
 \left(\begin{aligned}
 &q^{2k_{\delta}-3}(k_{\delta}-1)(q-1)^2\\
 &+N\left(\mu(\delta)+1\right)   \\ 
&+\sum\limits_{i=-k_{\delta}+1}^{k_{\delta}}\sum\limits_{t\in \mathcal{T}_i(\delta)}N(tq^{2i-1}+1)
\end{aligned}\right), &  \delta>q^{h+1},  
\end{cases}
\end{equation}
where 
$\mathcal{T}_i(\delta)= \left\{t\in \mathbb{Z}: q^{k_{\delta}-i}\leq  t <\sum\limits_{\ell= h+s_{\delta}}^{h+k_{\delta}}\delta_{\ell}q^{\ell-h-i},
q\nmid t\right\} $  
 and 
$ \mu(\delta)= \min
\left \{\sum\limits_{\ell=0}^{h-k_{\delta}}\delta_{\ell}q^{\ell}, \sum\limits_{\ell=h-s_{\delta}+1}^{h+k_{\delta}}\delta_{\ell}q^{\ell-(h-s_{\delta}+1)}\right \}.$

If $m$ is even,  we define the  function $\widetilde{f}(\delta)$ for integers $\delta\in [2, q^{\lfloor (2m-1)/{3}\rfloor +1}]$ as follow:
\begin{equation} \label{f}
\widetilde{f}(\delta)=
\begin{cases}
0, \hspace{-10pt}  & \hspace{-6pt}  \delta \leq q^{h}, \\
\frac{1}{2}(\delta_h-1)\delta_h+N(\widetilde{\mu}(\delta)+1), \hspace{-10pt} & \hspace{-6pt}q^h<\delta \leq q^{h+1},\\
 \left(\begin{aligned}
& (k_{\delta}-\textstyle\frac{1}{2})q^{2k_{\delta}-2}(q-1)^2  \\
&+\textstyle\frac{1}{2}q^{k_{\delta}-1}(q-1)\\ &+N(\widetilde{\mu}(\delta)+1) \\ 
&   +  \sum\limits_{i=-k_{\delta}}^{k_{\delta}}\sum\limits_{t\in {\mathcal{T}}_i(\delta)}N(tq^{2i}+1)
\end{aligned}\right), \hspace{-10pt} & \hspace{-6pt} \delta >q^{h+1},
\end{cases}
\end{equation}
where 
$\widetilde{\mu}(\delta)=\min\left\{ \sum\limits_{\ell=0}^{h-k_{\delta}-1}\delta_{\ell}q^{\ell}, \sum\limits_{\ell=h-{s_{\delta}}}^{h+k_{\delta}}\delta_{\ell}q^{\ell-(h-{s_{\delta}})} \right\}.$ We define the function $\tau(\delta)$ for integers $\delta \in (q^{h}, q^{\lfloor (2m-1)/{3}\rfloor +1}]$ as follow:  
\begin{equation}\notag
    \tau(\delta)=
    \begin{cases}
1, &\hbox{if } \delta_h>0 \hbox{ and } 
\sum\limits_{\ell=h}^{h+k_{\delta}}\delta_{\ell}q^{\ell-h}\leq \sum\limits_{\ell=0}^{h-1}\delta_{\ell}q^{\ell}, \\
0, & \hbox{otherwise. } 
\end{cases}
\end{equation}
In addition, we  define the function $g(\delta)$ for integers  $\delta\in [2, q^{\lfloor (2m-1)/{3}\rfloor +1}]$ as follow:
\begin{equation}\label{g}
    g(\delta)=
    \begin{cases}
0, &  \delta\leq q^h,\\
\delta_h-1+\tau(\delta), & q^h<\delta\leq q^{h+1},\\
N\left(\sum\limits_{\ell=h}^{h+k_{\delta}}\delta_{\ell}q^{\ell-h}\right)+\tau(\delta), & \delta >q^{h+1}.
\end{cases}
\end{equation}

\begin{theorem}\label{odda}
Let $m\geq 4$ and $\delta \in \left[2, q^{\lfloor (2m-1)/{3}\rfloor +1}\right]$ be integers. 
\begin{itemize}
    \item 
 If $m$ is odd,  then 
\begin{equation}\label{th31}
\mathrm{dim}(\mathcal{C}_{(q,m,\delta)})=
n-m\left[N(\delta)-f(\delta)\right]. 
\end{equation}
\item 
If $m$ is  even,  then  
\begin{equation} \label{th311}
\mathrm{dim}(\mathcal{C}_{(q,m,\delta)})=
n-m\left[N(\delta)-\widetilde{f}(\delta)\right]-\frac{m}{2}g(\delta).
\end{equation}
\end{itemize}
\end{theorem}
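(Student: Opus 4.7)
My plan is to rewrite the dimension formula (\ref{01}) as a weighted count of coset leaders, and to carry out that count using the block decompositions of Theorem \ref{th2} together with Corollary \ref{corr}. First, a short case analysis on $m\bmod 3$ shows $\lfloor(2m-1)/3\rfloor+1 = m-\lfloor m/3\rfloor$, so every $a\in[1,\delta-1]$ satisfies $a<q^{m-\lfloor m/3\rfloor}$ and Theorem \ref{th1} applies: every coset $C_a$ with $a\in[1,\delta-1]$ has size $m$ when $m$ is odd, and size $m$ or $m/2$ when $m$ is even, with $|C_a|=m/2$ precisely for $a\in\mathcal{H}$. Writing $\mathcal{L}'(1,\delta)=\mathcal{L}(1,\delta)\setminus\mathcal{H}$ and using the disjoint partition $\{a\in[1,\delta-1]:q\nmid a\}=\mathcal{L}'(1,\delta)\sqcup(\mathcal{H}\cap[1,\delta-1])\sqcup(\mathcal{S}\cap[1,\delta-1])$, equation (\ref{01}) rearranges to $\mathrm{dim}=n-m[N(\delta)-|\mathcal{S}\cap[1,\delta-1]|]$ for $m$ odd and $\mathrm{dim}=n-m[N(\delta)-|(\mathcal{S}\cup\mathcal{H})\cap[1,\delta-1]|]-(m/2)|\mathcal{H}\cap[1,\delta-1]|$ for $m$ even. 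It therefore suffices to prove $f(\delta)=|\mathcal{S}\cap[1,\delta-1]|$, $\widetilde{f}(\delta)=|(\mathcal{S}\cup\mathcal{H})\cap[1,\delta-1]|$, and $g(\delta)=|\mathcal{H}\cap[1,\delta-1]|$.

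To count $\mathcal{S}$ (or $\mathcal{S}\cup\mathcal{H}$) in $[1,\delta-1]$, I would partition the range as $[1,q^{h+1}]\sqcup\bigsqcup_{k=1}^{k_\delta-1}[q^{h+k},q^{h+k+1})\sqcup[q^{h+k_\delta},\delta-1]$. The initial piece contributes nothing by Lemma \ref{lll}; each full middle block is decomposed by Theorem \ref{th2} into $\bigsqcup_i\mathcal{A}_k(i)$ (odd) or $\bigsqcup_i\mathcal{B}_k(i)$ (even). For each $\mathcal{A}_k(i)$, I parametrize an element $a$ by its upper part $U=\sum_{\ell=h+i}^{h+k}a_\ell q^{\ell-h-i}\in[q^{k-i},q^{k-i+1})$ with $q\nmid U$ (from (\ref{p31})) and its lower part $L=\sum_{\ell=0}^{k+i-1}a_\ell q^\ell$, which by Remark \ref{rm4} satisfies $L\leq Uq^{2i-1}$ with $q\nmid L$. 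For fixed $U$ the number of admissible $L$ is exactly $N(Uq^{2i-1}+1)$, and summing over all $U$ and $i\in[-k+1,k]$ yields $|\mathcal{S}\cap[q^{h+k},q^{h+k+1})|$; a direct computation then shows $\sum_{k=1}^{k_\delta-1}|\mathcal{S}\cap[q^{h+k},q^{h+k+1})| = q^{2k_\delta-3}(k_\delta-1)(q-1)^2$, which matches the leading term of $f(\delta)$. The analogous calculation for $\mathcal{B}_k(i)$ produces the even analogue, with an extra $(q-1)q^{k_\delta-1}/2$ coming from the $\mathcal{H}$ elements contained in $\mathcal{B}_k(i)$ (quantified via Corollary \ref{h}).

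For the terminal partial block $[q^{h+k_\delta},\delta-1]$ I would further restrict $U$ using the $q$-adic digits of $\delta-1$. By Observation \ref{ob1} and the definition of $s_\delta$, the upper parts $U$ strictly below the truncation $\sum_{\ell=h+s_\delta}^{h+k_\delta}\delta_\ell q^{\ell-h-i}$ range precisely over $\mathcal{T}_i(\delta)$, giving the free contribution $\sum_{t\in\mathcal{T}_i(\delta)}N(tq^{2i-1}+1)$ (and its $N(tq^{2i}+1)$ analogue in the even case). When $U$ equals the truncation, the lower part inherits a further bound coming from the lower digits of $\delta-1$; Remark \ref{rm6} shows that combining this with the original constraint $L\leq Uq^{2i-1}$ collapses to $L\leq\mu(\delta)$ (respectively $\widetilde{\mu}(\delta)$), yielding the single boundary contribution $N(\mu(\delta)+1)$. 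Collecting strict and boundary contributions over $i$ produces exactly the remaining sums appearing in $f(\delta)$ and $\widetilde{f}(\delta)$.

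Finally, for the even case I would compute $g(\delta)=|\mathcal{H}\cap[1,\delta-1]|$ separately using Corollary \ref{corr}, which identifies each $a\in\mathcal{H}$ with $a=U(q^h+1)$ where $U=\sum_{\ell=0}^{k_\delta}a_\ell q^\ell$ and $q\nmid U$; hence $|\mathcal{H}\cap[1,\delta-1]|$ counts $U\in[1,\lfloor(\delta-1)/(q^h+1)\rfloor]$ with $q\nmid U$. Writing $\delta-1=Dq^h+\Lambda$ with $D=\sum_{\ell\geq h}\delta_\ell q^{\ell-h}$ and $\Lambda=\sum_{\ell<h}\delta_\ell q^\ell$, one verifies $\lfloor(\delta-1)/(q^h+1)\rfloor$ equals $D$ when $D\leq\Lambda$ and $D-1$ otherwise; the boundary value $U=D$ is additionally admissible iff $q\nmid D$, equivalent to $\delta_h>0$. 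Together these two conditions are precisely $\tau(\delta)=1$, giving $g(\delta)=N(D)+\tau(\delta)$. The hardest step is the terminal-block analysis, where the competing constraints on upper and lower parts must be split correctly into strict and boundary cases; in the even case, the subtle $\tau(\delta)$ correction, arising from the integer part of $(\delta-1)/(q^h+1)$, adds a further layer of bookkeeping.
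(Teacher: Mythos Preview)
Your approach mirrors the paper's almost exactly: the same reduction of (\ref{01}) to counting $|\mathcal{S}\cap[1,\delta-1]|$ (respectively $|(\mathcal{S}\cup\mathcal{H})\cap[1,\delta-1]|$ and $|\mathcal{H}\cap[1,\delta-1]|$), the same block partition via Theorem~\ref{th2}, and the same $(U,L)$-parametrization of each $\mathcal{A}_k(i)$ or $\mathcal{B}_k(i)$, which is precisely how the paper's Assertions~\ref{as1}--\ref{as5} are organized and proved.

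One point needs correction in the even case: the initial piece $[1,q^{h+1})$ does \emph{not} contribute nothing. Lemma~\ref{lll} only covers $a\leq 2q^h$, and in fact $|(\mathcal{S}\cup\mathcal{H})\cap[1,q^{h+1})|=\tfrac{1}{2}q(q-1)$ (this is the $k=0$ block $\mathcal{B}_0(0)$). For even $m$ your partition should begin the full blocks at $k=0$, with only $[1,q^h)$ as the empty initial piece; the extra $\tfrac{1}{2}q^{k_\delta-1}(q-1)$ you attribute to ``$\mathcal{H}$ elements via Corollary~\ref{h}'' actually arises from summing the $\mathcal{B}_k(0)$ contributions (Assertion~\ref{as5}), not from $\mathcal{H}$ alone. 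This is easily repaired and does not affect the overall strategy.

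Your computation of $g(\delta)$ via the bijection $a\mapsto U=a/(q^h+1)$ and the explicit evaluation of $\lfloor(\delta-1)/(q^h+1)\rfloor$ is slightly more direct than the paper's, which instead splits $[1,\delta-1]\cap\mathcal{H}$ into three ranges and invokes Corollary~\ref{h} together with equations~(\ref{ww1}) and~(\ref{aas32}); both routes yield $N(D)+\tau(\delta)$.
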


    We prove Theorem \ref{odda} through Assertions \ref{as1} -- \ref{as5}, which are established using the following lemmas. Complete proofs of lemmas  \ref{le11} and \ref{lemma10} appear in Appendices A to B, while detailed proofs of Assertions \ref{as1} -- \ref{as5} can be found in Appendices C to F.

\begin{lemma}\label{le5}
Let $m$ be a positive integer and  $t<q^m$ be a real number. Then 
\begin{multline}\notag
\left|\left \{(a_{m-1},\ldots,a_0)\in Z_q^{m}: \sum\limits_{\ell=0}^{m-1}a_{\ell}q^{\ell}\leq t,   a_0>0\right \}\right| \\
=N(t+1).
\end{multline}
\end{lemma}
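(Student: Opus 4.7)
The plan is to translate the sequence-counting problem into an integer-counting problem via the bijectivity of the $q$-adic expansion, and then compute the cardinality directly from the definition of $N$.

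First, I would invoke the fact that the map $(a_{m-1},\ldots,a_0)\mapsto \sum_{\ell=0}^{m-1}a_{\ell}q^{\ell}$ is a bijection between $Z_q^{m}$ and the set of integers $[0,q^m-1]$; this is just the inverse of the map $V$ defined earlier. Under this bijection, the condition $a_0>0$ corresponds precisely to $q\nmid\sum_{\ell=0}^{m-1}a_{\ell}q^{\ell}$, since $a_0$ is the residue of the associated integer modulo $q$. Hence the set on the left-hand side is in bijection with $\{a\in[0,q^m-1]: a\leq t,\ q\nmid a\}$.

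Next, since the objects being counted are integers, I would replace the real bound $a\leq t$ by $a\leq \lfloor t\rfloor$. The hypothesis $t<q^m$ ensures $\lfloor t\rfloor\leq q^m-1$, so the upper bound $q^m-1$ contributes no restriction. Since $0$ is divisible by $q$, the set reduces to $\{a\in[1,\lfloor t\rfloor]: q\nmid a\}$, whose cardinality is $\lfloor t\rfloor - \lfloor \lfloor t\rfloor/q\rfloor = \lfloor t\rfloor - \lfloor t/q\rfloor$. By the definition of $N$, we have $N(t+1)=\lfloor (t+1)-1\rfloor - \lfloor((t+1)-1)/q\rfloor = \lfloor t\rfloor - \lfloor t/q\rfloor$, and the two expressions coincide.

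There is no substantive obstacle: the lemma is essentially a bookkeeping identity, and the only subtle point is the passage from the real bound $t$ to the integer bound $\lfloor t\rfloor$, which is justified because the counted set consists of integers.
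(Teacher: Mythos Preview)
Your proof is correct and follows essentially the same approach as the paper: both establish the bijection $(a_{m-1},\ldots,a_0)\mapsto \sum_{\ell}a_{\ell}q^{\ell}$ between the sequence set and the integers in $[1,t]$ not divisible by $q$, then invoke the definition of $N$. Your version is simply more explicit about the real-to-integer passage and the final arithmetic, which the paper leaves implicit.
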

\begin{proof}
There exists a one-to-one correspondence between the sequences in the above set
and the integers in $[1,t]$ that are not divisible by $q$ by mapping $(a_{m-1},\ldots,a_0)$ to $\sum\limits_{\ell=0}^{m-1}a_{\ell}q^{\ell}$. Therefore, the result follows.  
\end{proof}

\begin{lemma}\label{le11}
Let $k\geq 0$ be an integer.  
Then 
\begin{equation}\notag
\sum\limits_{t=q^{k}}^{q^{k+1}-1}N(t+1)=
\begin{cases}
\frac{1}{2}(q^{2}-q), &\hbox{if } k=0,\\
\frac{1}{2}q^{2k-1}(q-1)^{2}(q+1),& \hbox{if }  k \geq 1.\\
\end{cases}
\end{equation}
\end{lemma}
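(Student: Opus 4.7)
The plan is to unpack the definition $N(t+1)=\lfloor t\rfloor-\lfloor t/q\rfloor$ from the paragraph just before the lemma, so that the summand is written in a form that can be handled by elementary arithmetic. Since the outer summation index $t$ is an integer, this simplifies to $N(t+1)=t-\lfloor t/q\rfloor$, which is the only ingredient I will need.

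For the case $k=0$, the range is $t\in[1,q-1]$, so $\lfloor t/q\rfloor=0$ and $N(t+1)=t$. The sum collapses to the arithmetic progression $1+2+\cdots+(q-1)=\tfrac{1}{2}(q^{2}-q)$, matching the stated value.

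For the case $k\geq 1$, the main step is to write each $t\in[q^{k},q^{k+1}-1]$ uniquely in the form $t=jq+r$ with $j\in[q^{k-1},q^{k}-1]$ and $r\in[0,q-1]$. Then $\lfloor t/q\rfloor=j$ and $N(t+1)=j(q-1)+r$. After this reparametrization, the target sum factors cleanly:
\begin{equation*}
\sum_{t=q^{k}}^{q^{k+1}-1}N(t+1)=\sum_{j=q^{k-1}}^{q^{k}-1}\sum_{r=0}^{q-1}\bigl(j(q-1)+r\bigr).
\end{equation*}
The inner sum over $r$ evaluates to $qj(q-1)+\tfrac{1}{2}q(q-1)$, so it remains to evaluate $\sum_{j=q^{k-1}}^{q^{k}-1}j$, which is again an arithmetic progression with $q^{k-1}(q-1)$ terms and first/last terms $q^{k-1}$ and $q^{k}-1$. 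Plugging in and grouping the common factor $\tfrac{1}{2}q(q-1)\cdot q^{k-1}(q-1)$ in the two resulting terms gives a bracket of the form $\bigl[q^{k-1}(q+1)-1\bigr]+1=q^{k-1}(q+1)$, which collapses to the claimed value $\tfrac{1}{2}q^{2k-1}(q-1)^{2}(q+1)$.

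The only potential pitfall is arithmetic bookkeeping when combining the two pieces from the split $N(t+1)=j(q-1)+r$; I expect this to be routine once the $+1/-1$ terms from the arithmetic-series formula are tracked carefully. No external results beyond the elementary identity for $\sum_{j=a}^{b}j$ are required, so the proof is entirely self-contained.
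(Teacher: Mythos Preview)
Your proof is correct and follows essentially the same approach as the paper: both unpack $N(t+1)=t-\lfloor t/q\rfloor$ and handle $k\ge 1$ via the block decomposition $t=jq+r$ (the paper phrases this as partitioning $[q^{k},q^{k+1}-1]$ into length-$q$ subintervals on which $\lfloor t/q\rfloor$ is constant). Your organization is slightly cleaner in that you apply the decomposition to $N(t+1)$ directly rather than to $\sum t$ and $\sum\lfloor t/q\rfloor$ separately, which is what produces the neat cancellation $[q^{k-1}(q+1)-1]+1$ you noted.
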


\begin{lemma}\label{lemma10}
Let $k$ and $a\in [1,q]$ be two positive integers,  and let $i$ be an integer.  Then we have 
\begin{IEEEeqnarray}{rcl}\label{wwwww}
    \sum\limits_{t= q^{k-i},q\nmid t}^{ a q^{k-i}-1}N(tq^{2i-1}+1)
    &=&\begin{cases}
        \frac{1}{2}(a^2-1)(q-1)^2q^{2k-3},\\ \quad \hbox{if }i\in [ -k+2,  k-1],\\
        \frac{1}{2} a(a-1)(q-1)q^{2k-2},\\\quad 
        \hbox{if }i=k \hbox{ or }-k+1,
        \end{cases}    \IEEEeqnarraynumspace
\end{IEEEeqnarray}
and 
\begin{IEEEeqnarray}{rcl}\label{wwwww2}
    &&\sum\limits_{t= q^{k-i},q\nmid t}^{ a q^{k-i}}N(tq^{2i}+1)\nonumber \\  
    &&\quad=\begin{cases}
        \frac{1}{2}(a^2-1)(q-1)^2q^{2k-2},\\ \quad \hbox{if }i\in [-k+1, k-1] \setminus
        \{0\},\\
        \frac{1}{2} a(a-1)(q-1)q^{2k-1},\\
        \quad 
        \hbox{if }i=k \hbox{ or }-k,\\       
       \frac{1}{2}(a^2-1)(q-1)^2q^{2k-2}
        +\frac{1}{2}(a-1)(q-1)q^{k-1}, \\ \quad \hbox{if }i=0.
        \end{cases}   
\end{IEEEeqnarray}
 \end{lemma}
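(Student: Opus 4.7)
The plan is to establish both identities by case analysis on the sign of the exponent $2i-1$ (respectively $2i$ in (\ref{wwwww2})). The decisive feature is whether $tq^{2i-1}$ is an integer multiple of $q$ or a non-integer real, since this controls the algebraic form of $N(tq^{2i-1}+1) = \lfloor tq^{2i-1}\rfloor - \lfloor tq^{2i-2}\rfloor$.

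First I would dispose of the easy regime $i \geq 1$. Here $q \mid tq^{2i-1}$, so the closed form $N(tq^{2i-1}+1) = (q-1)tq^{2i-2}$ holds, and the sum collapses to $(q-1)q^{2i-2}\sum_{t\in [q^{k-i},aq^{k-i}-1],\ q\nmid t} t$. A direct arithmetic-progression count of non-multiples of $q$ in $[q^{k-i}, aq^{k-i}-1]$ (total sum minus the contribution of multiples) gives $\tfrac{1}{2}(a^2-1)(q-1)q^{2k-2i-1}$ for $1 \leq i \leq k-1$ and the cleaner $\tfrac{a(a-1)}{2}$ for $i=k$ (where the range is $[1,a-1]$ with $a\leq q$). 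Multiplying through yields the two claimed values. The same technique handles $i \geq 1$ for (\ref{wwwww2}) after replacing $2i-1$ by $2i$.

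The harder regime is $i \leq 0$, in which $tq^{2i-1}$ is a fractional real. My plan is to swap the order of summation by interpreting
\[
N(tq^{2i-1}+1) = \bigl|\{u\geq 1 : q\nmid u,\ u \leq tq^{2i-1}\}\bigr|,
\]
so that the double sum counts pairs $(t,u)$ with $q\nmid t$, $q\nmid u$, $t \in [q^{k-i}, aq^{k-i}-1]$, and $t \geq uq^{1-2i}$. I would then split the $u$-range at the threshold $u = q^{k+i-1}$: below this threshold the inequality $t \geq q^{k-i}$ is the binding lower bound on $t$, while above it $t \geq uq^{1-2i}$ binds. Each subregion reduces to an elementary count of non-multiples of $q$ in a single $q$-adic interval together with one arithmetic-progression sum over $u$; combining the two pieces yields precisely $\tfrac{1}{2}(a^2-1)(q-1)^2q^{2k-3}$ (with the analogous shift for (\ref{wwwww2})). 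At the boundaries $i = -k+1$ and $i = -k$ the lower subregion is empty, leaving only the $u$-driven piece, which produces the simpler $\tfrac{1}{2}a(a-1)(q-1)q^{2k-2}$ (respectively $\tfrac{1}{2}a(a-1)(q-1)q^{2k-1}$). For the special value $i = 0$ in (\ref{wwwww2}) the sum is $\sum_{t\in[q^k,aq^k],\ q\nmid t} N(t+1)$; here I would apply Lemma \ref{le11} to the full-range sum (or its partial analogue when $a<q$) and subtract the contribution of multiples of $q$, which is $\sum_{s=q^{k-1}}^{aq^{k-1}-1}s(q-1)$. The residual term $\tfrac{1}{2}(a-1)(q-1)q^{k-1}$ arises precisely as the discrepancy between this telescoping evaluation and the formula predicted by the $i\neq 0$ pattern.

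The main obstacle is the $i \leq 0$ regime: because $tq^{2i-1}$ is not an integer, the floor function couples the two indices in a way that resists direct algebraic manipulation. The swap-of-summation device is the crucial idea, as it converts $N$ into an indicator sum and reduces everything to counting lattice points in two cleanly separated rectangles, after which only standard geometric-series bookkeeping remains.
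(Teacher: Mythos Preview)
Your proposal is correct, and for the regime $i\geq 1$ it coincides with the paper's argument: both use $N(tq^{2i-1}+1)=(q-1)tq^{2i-2}$ and reduce to an arithmetic-progression sum over non-multiples of $q$.

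The difference lies in the treatment of $i\leq 0$. The paper does not swap the order of summation. Instead it observes that $N(tq^{2i-1}+1)=N(\lfloor tq^{2i-1}\rfloor+1)$ depends only on the integer $\lfloor tq^{2i-1}\rfloor$, and that as $t$ ranges over non-multiples of $q$ in $[q^{k-i},\,aq^{k-i}-1]$, this floor value ranges over $[q^{k+i-1},\,aq^{k+i-1}-1]$, each value being hit by exactly $q^{-2i}(q-1)$ choices of $t$ (corresponding to the free low-order $q$-adic digits of $t$). This collapses the sum to $q^{-2i}(q-1)\sum_{s=q^{k+i-1}}^{aq^{k+i-1}-1}N(s+1)$, a single sum of the same shape as Lemma~\ref{le11}, so the computation finishes uniformly without a threshold split. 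Your Fubini-style device is equally valid and is the natural combinatorial dual of the paper's grouping-by-fibre argument; the paper's route is a bit shorter because it avoids the two-region case split, while yours has the advantage of making the lattice-point geometry explicit and not relying on the digit-counting observation about fibre sizes.
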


\begin{assertion}\label{as1}
Suppose that $m\geq 4$ and  $q^{m-h}<\delta\leq q^{\lfloor(2m-1)/{3}\rfloor +1}$.   
\begin{itemize}
\item If   $m$ is  odd, then   \begin{equation}\label{as1-1}
\left| \left[\sum\limits_{\ell=h-k_{\delta}+1}^{h+k_{\delta}}\delta_{\ell}q^{\ell},\delta-1\right]\cap \mathcal{S}\right|=N(\mu(\delta)+1).
\end{equation}
\item 
If $m$ is even, then 
\begin{equation}\label{as1e1}
\left| \left[\sum\limits_{\ell=h-k_{\delta}}^{h+k_{\delta}}\delta_{\ell}q^{\ell},\delta-1\right]\cap (\mathcal{S}\cup \mathcal{H})\right|=N(\widetilde{\mu}(\delta)+1)
\end{equation} 
and 
\begin{equation}\label{ww1}
\left|\left[\sum\limits_{\ell=h}^{h+k_{\delta}}\delta_{\ell}q^{\ell}, \delta-1\right]\cap \mathcal{H}\right|= {\tau(\delta)}.
 \end{equation}
 \end{itemize}
\end{assertion}

\begin{assertion}\label{as2}
Suppose that $m\geq 4$ and   $q^{m-h}<\delta\leq q^{\lfloor (2m-1)/{3}\rfloor +1}$.  
\begin{itemize}
\item 
If $m$ is odd,  then 
\begin{multline}
\label{xx2}
\left|\left[q^{h+k_{\delta}},  \sum\limits_{\ell=h-k_{\delta}+1}^{h+k_{\delta}}\delta_{\ell}q^{\ell}\right)\cap \mathcal{S}\right |\\=\sum\limits_{i=-k_{\delta}+1}^{k_{\delta}} \sum\limits_{t\in \mathcal{T}_i(\delta)}N(tq^{2i-1}+1).
\end{multline}
\item 
If $m$ is  even, then 
\begin{multline}\label{aas31}
\left|\left[q^{h+k_{\delta}},\sum\limits_{\ell=h-k_{\delta}}^{h+k_{\delta}}\delta_{\ell}q^{\ell}\right)\cap 
(\mathcal{S}\cup \mathcal{H})\right|\\
=\sum\limits_{i=-k_{\delta}}^{k_{\delta}} \sum\limits_{t\in { \mathcal{T}}_{i}(\delta)}N(tq^{2i}+1) 
\end{multline}
and
 \begin{multline}   \label{aas32}
    \left|\left[q^{h+k_{\delta}}, \sum\limits_{\ell=h}^{h+k_{\delta}}\delta_{\ell}q^{\ell}\right)\cap \mathcal{H}\right|\\= N\left(\sum\limits_{\ell=h}^{h+k_{\delta}}\delta_{\ell}q^{\ell-h}\right)-N\left(q^{k_{\delta}}\right).
     \end{multline}
    \end{itemize}
\end{assertion}

\begin{assertion}\label{as3}
Suppose that $m\geq 4$ and  $ k\in \left[1, \lfloor (2m-1)/{3} \rfloor-h\right]$ is an integer. \begin{itemize}
\item 
If $m$ is  odd,  then 
\begin{equation}\label{10-0}
 \left|\mathcal{A}_{k}(i)\right|=
\begin{cases}
\frac{1}{2}q^{2k-1}\left(q-1\right)^2,  
\\
\quad \hbox{if }
 i=k \hbox{ or }-k+1, \\
\frac{1}{2}q^{2k-3}(q-1)^{3}(q+1),\\
\quad \hbox{if }i \in [-k+2, k-1]. 
\end{cases}
\end{equation}
\item 
If $m$ is  even,   then 
\begin{equation}\label{2-10-0}
\left|\mathcal{B}_{k}(i)\right|=\\
\begin{cases}
\frac{1}{2}q^{2k}(q-1)^2,\\ 
\quad \hbox{if } i=k\hbox{ or }-k,  \\
\frac{1}{2}q^{2k-2}(q-1)^{3}(q+1),\\\quad   \hbox{if }i\in [-k+1, k-1] \setminus
        \{0\}.
\end{cases}
\end{equation}
\end{itemize}
\end{assertion}

\begin{assertion}\label{as5}
Suppose that $m\geq 4$ is even and  $ k\in \left[0, \lfloor {(2m-1)}/{3} \rfloor-h\right]$ is an integer. Then 
    \begin{equation}\notag
\left|\mathcal{B}_k(0)\right|= 
\begin{cases}
\frac{1}{2}q(q-1),& \hbox{if }k=0,\\
\frac{1}{2}(q-1)^2(q^{2k}-q^{2k-2}+q^{k-1}),&  \hbox{if }k\geq 1.
\end{cases}
\end{equation}
\end{assertion}

\noindent\textbf{Proof of Theorem \ref{odda}.}
   If   $m$ is odd, we first aim to  
    show that 
    \begin{equation}\label{as6e1}
        |[1,\delta-1]\cap \mathcal{S}|=f(\delta)\quad \hbox{for }\delta \in \left[2, q^{\lfloor (2m-1)/{3}\rfloor +1}\right].
    \end{equation}
    
 If   $2\leq \delta\leq q^{h+1}$, recalling the definition of $\mathcal{S}$ and applying  Lemma \ref{lll},  we can  obtain \begin{equation}\label{colu}
     \left[1, \delta-1\right]\cap \mathcal{S}|=0.
 \end{equation}
 In particular, 
 \begin{equation}\label{ff1}
 \left|\left[1, q^{h+1}-1\right]\cap\mathcal{S}]\right| = 0.  
 \end{equation}

If   $q^{h+1}<\delta\leq  q^{\lfloor ({2m-1})/{3}\rfloor+1},$ since $m$ is odd, we first have 
\begin{itemize}
\item $1\leq   k_\delta\leq \lfloor ({2m-1})/{3}\rfloor-h;$
    \item   $q^{m-h}<\delta\leq q^{\lfloor ({2m-1})/{3}\rfloor+1}.$
\end{itemize}
By applying Theorem  \ref{th2}, we can conclude from  Assertion \ref{as3} that for any integer $ k\in \left[1,  \lfloor ({2m-1})/{3}\rfloor-h\right],$ 
\begin{IEEEeqnarray}{rCl}
 \left|\left[q^{h+k}, q^{h+k+1} \right)\cap \mathcal{S}\right| \nonumber 
&=&\sum\limits_{i=-k+1}^k\left |\mathcal{A}_k(i)\right| \nonumber \\
&= & (k-1)q^{2k-3}(q-1)^3(q+1) \nonumber \\
&&+ \>q^{2k-1}(q-1)^2. \nonumber
\end{IEEEeqnarray}
Therefore, we have 
\begin{IEEEeqnarray}{rCl}\notag
\left|\left[q^{h+1},q^{h+k_{\delta}}\right)\cap \mathcal{S}\right|  & =  &\sum\limits_{{k}=1}^{k_{\delta}-1}
\left|\left[q^{h+{k}},q^{h+{k}+1}\right)\cap \mathcal{S}\right|  \\ \nonumber
&=&\sum\limits_{{k}=1}^{k_{\delta}-1}({k}-1)q^{2{k}-3}(q-1)^3(q+1) \\
&&+\> \sum\limits_{{k}=1}^{k_{\delta}-1}q^{2{k}-1}(q-1)^2.\nonumber
\end{IEEEeqnarray}
It is easy to verify  that 
\begin{equation}\notag
\sum\limits_{{k}=1}^{k_{\delta}-1}q^{2{k}-1}(q-1)^2=\frac{(q^{2{k_{\delta}}-1}-q)(q-1)}{q+1}
\end{equation}
and 
\begin{multline*}
    \sum\limits_{k=1}^{k_{\delta}-1}({k}-1)q^{2{k}-3}(q-1)^3(q+1)\\=  \frac{\left[q-(k_{\delta}-1)q^{2k_{\delta}-3}+(k_{\delta}-2)q^{2k_{\delta}-1}\right](q-1)}{q+1}.
\end{multline*}
By adding these two sums together,  we have 
\begin{equation}\label{m6}
\left|\left[q^{h+1},q^{h+k_{\delta}}\right)\cap \mathcal{S}\right| =(k_{\delta}-1)q^{2k_{\delta}-3}(q-1)^2.
\end{equation}
We can also conclude from Assertions \ref{as1} and \ref{as2} that
\begin{IEEEeqnarray}{rCl}
\left|\left[q^{h+k_{\delta}}, \delta-1\right]\cap \mathcal{S}\right|&= &\left|\left[q^{h+k_{\delta}}, \sum\limits_{\ell=h-k_{\delta}+1}^{h+k_{\delta}}\delta_{\ell}q^{\ell}\right)\cap \mathcal{S}\right|   \nonumber \\ &&+\> \left|\left[\sum\limits_{\ell=h-k_{\delta}+1}^{h+k_{\delta}}\delta q^{\ell}, \delta-1\right]\cap \mathcal{S}\right| \nonumber\\ 
& =& \sum\limits_{i=-k_{\delta}+1}^{k_{\delta}}\sum\limits_{t\in \mathcal{T}_i(\delta)}N(tq^{2i-1}+1) \nonumber \\
&&+\> N(\mu(\delta)+1).  \label{mt0} 
\end{IEEEeqnarray}
It follows from (\ref{ff1}), (\ref{m6}) and (\ref{mt0})  that 
\begin{IEEEeqnarray}{rCl}
\left|\left[1,\delta-1\right]\cap \mathcal{S}\right| 
&=&\left|\left[1,q^{h+1}-1\right]\cap \mathcal{S}\right|
 + \left|\left[q^{h+1},q^{h+k_{\delta}}\right)\cap \mathcal{S}\right|\nonumber\\ &&+\>  \left|\left[q^{h+k_{\delta}}, \delta-1\right]\cap \mathcal{S}\right| \nonumber\\
&=&(k_{\delta}-1)q^{2k_{\delta}-3}(q-1)^2+N(\mu(\delta)+1)\nonumber \\
&&+\>
\sum\limits_{i=k_{\delta}+1}^{k_{\delta}}\sum\limits_{t\in\mathcal{T}_i(\delta)}N(tq^{2i-1}+1).\label{as54}
\end{IEEEeqnarray}

By the definition of the function $f(\delta)$, we now can conclude from (\ref{colu}) and (\ref{as54}) that  equation (\ref{as6e1}) holds.
Recalling the fact that an integer cannot be a coset leader if $q\mid a$, and the definition  of  $\mathcal{S}$,  it follows that  
\begin{equation}\label{eq64}
   \left| \mathcal{L}(1,\delta)\right|=N(\delta)-f(\delta)
\end{equation}for any integer $\delta \in \left[2, q^{\lfloor (2m-1)/{3}\rfloor +1}\right]$.
In addition, it is easy to verify that 
$\lfloor ({2m-1})/{3}\rfloor+1\leq m-\lfloor {m}/{3}\rfloor$ for odd $m$.  
This  implies that  $\delta-1<  q^{\lfloor ({2m-1})/{3}\rfloor+1}\leq q^{m-\lfloor {m}/{3}\rfloor}$. 
Then applying Theorem \ref{th1}, we have 
\begin{equation}\label{eq644}
    |C_a|=m \quad \hbox{for any integer }a\in [1, \delta-1].
\end{equation} 
Recalling equation (\ref{01}),  we can obtain     (\ref{th31}) from (\ref{eq64}) and (\ref{eq644}). 

If $m$ is even, our first goal is to  
to establish
\begin{equation}\label{v1}
\left|\left[1, \delta-1\right]\cap (\mathcal{S}\cup \mathcal{H})\right|= \widetilde{f}(\delta)
\end{equation}
and 
\begin{equation}\label{v2}
\left|\left[1, \delta-1\right]\cap \mathcal{H}\right|= g(\delta)
\end{equation}
for any integer  $\delta \in \left[2, q^{\lfloor (2m-1)/{3}\rfloor +1}\right]$.

If  $2\leq \delta\leq  q^{h}$,  we can apply Lemma \ref{lll} to conclude that
\begin{equation}\notag
\left|[1, \delta-1]\cap \mathcal{S}\right|=0.
\end{equation}
{In addition, by applying Corollary \ref{cos}, }we obtain  
\begin{equation}\notag
\left|[1, \delta-1]\cap \mathcal{H}\right|=0.
\end{equation}
 In particular,   we have  \begin{equation}\label{v5}
|[1, q^h-1]\cap (\mathcal{S}\cup \mathcal{H})| = 0
\end{equation}
and 
\begin{equation}\label{v6}
|[1, q^h-1]\cap \mathcal{H}| = 0.
\end{equation}

If $q^{h}< \delta\leq q^{h+1}$,  since $m$ is even, 
we first have 
\begin{itemize}
\item $k_\delta=0$;
    \item ${\mathcal{T}}_0(\delta)= \{t\in \mathbb{Z}: 1\leq t\leq \delta_h-1\};
    $\item $q^{m-h}<\delta\leq q^{\lfloor ({2m-1})/{3}\rfloor+1}$.
\end{itemize}
 Therefore,  by  substituting $k_{\delta}=0$ into  (\ref{as1e1}) and (\ref{aas31}), we obtain 
 \begin{equation}\notag
\left|\left[\delta_h q^{h},\delta-1 \right] \cap (\mathcal{S}\cup \mathcal{H})\right| 
 = N(\widetilde{\mu}(\delta)+1) 
  \end{equation}
 and  
\begin{IEEEeqnarray}{rCl}\notag
\left|\left[q^{h},\delta_h q^h \right) \cap (\mathcal{S}\cup \mathcal{H})\right|  =
 \sum\limits_{t\in \mathcal{T}_0(\delta)}N(t+1)
 = \frac{1}{2}(\delta_h-1)\delta_h. 
\end{IEEEeqnarray}
With (\ref{v5}),
it follows that 
\begin{IEEEeqnarray}{rcl}\label{shs1}
\left|\left[1,\delta-1 \right] \cap (\mathcal{S}\cup \mathcal{H})\right|  
& =& \frac{1}{2}(\delta_h-1)\delta_h +N(\widetilde{\mu}(\delta)+1).  \quad
\end{IEEEeqnarray}
Additionally, by substituting $k_{\delta}=0$ into (\ref{ww1}) and (\ref{aas32}), we obtain  
\begin{IEEEeqnarray}{rCl}\notag
\left|\left[\delta_h q^{h},\delta-1 \right] \cap \mathcal{H}\right| =
  \tau(\delta)\nonumber
  \end{IEEEeqnarray}
  and 
  \begin{equation}\notag
   \left|\left[q^{h},\delta_h q^h\right) \cap \mathcal{H}\right| =\delta_h-1.
\end{equation}
Combing  these two equalities with (\ref{v6}),
we obtain  
\begin{IEEEeqnarray}{rCl}\label{shs2}
\left|\left[1,\delta-1 \right] \cap  \mathcal{H}\right|  
& =& \delta_h-1+\tau(\delta).  
\end{IEEEeqnarray}
By now we have already  demonstrated that both (\ref{v1}) and (\ref{v2}) hold for $2\leq  \delta\leq q^{h+1}.$ 

Notice that $q^{h+1}-1=\sum\limits_{\ell=0}^h(q-1)q^{\ell}$. Therefore, for $\delta=q^{h+1}$,  we have 
  $\tau(\delta)=1$ and 
    $\widetilde{\mu}(\delta)=q-1$. 
Consequently, we can conclude from (\ref{shs1}) and (\ref{shs2}) that 
\begin{equation}\label{v7}
\left|\left[1, q^{h+1}-1\right]\cap (\mathcal{S}\cup \mathcal{H})\right|=\frac{1}{2}q(q-1)
\end{equation}
and 
\begin{equation}\label{v8}
\left|\left[1, q^{h+1}-1\right]\cap \mathcal{H}\right|= q-1.
\end{equation}

If  $q^{h+1}< \delta \leq q^{\lfloor ({2m-1})/{3}\rfloor+1 }$, since $m$ is even, we have 
\begin{itemize}
    \item  $1\leq k_\delta\leq \lfloor ({2m-1})/{3}\rfloor-h;$
    \item $q^{m-h}< \delta \leq q^{\lfloor ({2m-1})/{3}\rfloor+1 }$
\end{itemize}
By applying Theorem \ref{th2}, we can conclude from Assertions \ref{as3} and \ref{as5} to conclude that for  $k\in [1, \lfloor ({2m-1})/{3}\rfloor-h]$,
\begin{IEEEeqnarray}{rCl}
{\left|\left[q^{h+k},q^{h+k+1}\right)\cap (\mathcal{S}\cup \mathcal{H})\right|} &=&\sum\limits_{i=-k}^k  |\mathcal{B}_k(i) |\nonumber \\
&=& (k-\frac{1}{2})q^{2k-2}(q-1)^3(q+1)   \nonumber\\
&&+\> (q-1)^2(\frac{1}{2}q^{k-1}+q^{2k}).  \nonumber
\end{IEEEeqnarray}
It follows that
\begin{IEEEeqnarray}{rCl}
\IEEEeqnarraymulticol{3}{l}
{\left|\left[q^{h+1},q^{h+k_{\delta}}\right)\cap (\mathcal{S}\cup \mathcal{H})\right|}\nonumber \\
& =&\sum\limits_{k=1}^{k_{\delta}-1}
\left|\left[q^{h+k},q^{h+k+1}\right)\cap  (\mathcal{S}\cup \mathcal{H})\right|  \label{w3} \\ 
&=&(k_{\delta}-\frac{1}{2})q^{2k_{\delta}-2}(q-1)^2  +\frac{1}{2}(q-1)(q^{k_{\delta}-1}-q).  \nonumber
\end{IEEEeqnarray}
We can also conclude from Assertions \ref{as1} and \ref{as2} that 
\begin{IEEEeqnarray}{rCl}
\left|\left[q^{h+k_{\delta}}, \delta-1\right]\cap (\mathcal{S}\cup \mathcal{H})\right|
 &=&\sum\limits_{i=-k_{\delta}}^{k_{\delta}}\sum\limits_{t\in {\mathcal{T}}_i(\delta)}N(tq^{2i}+1) \nonumber \\
 &&+\>N(\widetilde{\mu}(\delta)+1).  \label{w2}
\end{IEEEeqnarray}
Combining (\ref{v7}), (\ref{w3}) and (\ref{w2}), we obtain 
\begin{IEEEeqnarray}{rCl}
\left|\left[1,  \delta-1\right]\cap \left(\mathcal{S}\cup \mathcal{H}\right)\right| \nonumber &=& \left|\left[1,q^{h+1} \right)\cap (\mathcal{S}\cup \mathcal{H})\right| \nonumber\\
&&+\> \left|\left[q^{h+1}, q^{h+k_{\delta}}\right)\cap (\mathcal{S}\cup \mathcal{H})\right| \nonumber \\
&&+\> \left|\left[q^{h+k_{\delta}}, \delta-1\right]\cap (\mathcal{S}\cup \mathcal{H})\right|  \nonumber \\
    &= &(k_{\delta}-\frac{1}{2})q^{2k_{\delta}-2}(q-1)^2 \nonumber \\
    && +\>\frac{1}{2}q^{k_{\delta}-1}(q-1) + N(\widetilde{\mu}(\delta)+1) \nonumber \\
    &&+\>  \sum\limits_{i=-k_{\delta}}^{k_{\delta}}\sum\limits_{t\in {\mathcal{T}}_i(\delta)}N(tq^{2i}+1)  \nonumber
\end{IEEEeqnarray}
for  $q^{h+1}< \delta \leq q^{\lfloor ({2m-1})/{3}\rfloor+1 }$. 

In addition, 
by applying Corollary \ref{h}, we have 
\begin{equation}\notag
\begin{split}
   \left |\left[q^{h+1}, q^{h+k_{\delta}}\right)\cap \mathcal{H}   \right|&=\sum\limits_{k=1}^{k_{\delta}-1}|[q^{h+k}, q^{h+{k}+1})\cap \mathcal{H}|\\
     &= (q^{k_{\delta}-1}-1)(q-1).
\end{split}
\end{equation}
With (\ref{v8}), it follows that 
\begin{equation}\label{773}
\left|\left[1,q^{h+k_{\delta}}\right)\cap \mathcal{H}\right|  =  q^{k_{\delta}-1}(q-1).
\end{equation}
Noting that $N(q^{k_{\delta}})=q^{k_{\delta}-1}(q-1)$ for 
$k_{\delta}\geq 1$, we can combine (\ref{ww1}), (\ref{aas32}) and (\ref{773})  to obtain 
\begin{IEEEeqnarray} {rcl}
  \left |\left[1,\delta\!-\!1\right]\cap \mathcal{H}\right|
      &=&   |[1,q^{h+k_{\delta}})\cap \mathcal{H}|+  \left|\left[q^{h+k_{\delta}}, \sum\limits_{\ell=h}^{h+k_{\delta}}\delta_{\ell}q^{\ell}\right)\cap \mathcal{H}\right|\nonumber \\    &&+\>\left|\left[\sum\limits_{\ell=h}^{h+k_{\delta}}\delta_{\ell}q^{\ell}, \delta-1\right]\cap \mathcal{H}\right|\nonumber\\
     & =& N\left(\sum\limits_{\ell=h}^{h+k_{\delta}}\delta_{\ell}q^{\ell-h}\right)+\tau(\delta). \nonumber
\end{IEEEeqnarray}
We now have demonstrated  that (\ref{v1}) and (\ref{v2}) also  hold for $q^{h+1}< \delta \leq q^{\lfloor ({2m-1})/{3}\rfloor +1}$.

Next, we establish equation (\ref{th311}). For simplicity, we define the following two sets for positive integers $b$ and $\delta$:
\begin{equation*}
    \mathcal{L}_m(b,\delta)=\left\{a\in  \mathcal{L}(b,\delta): |C_a|=m\right\},
    \end{equation*}
    \begin{equation*}
    \mathcal{L}_{\frac{m}{2}}(b,\delta)=\left\{a\in  \mathcal{L}(b,\delta): |C_a|=m/2\right\}.
    \end{equation*}
Recall  the definitions of $\mathcal{S}$ and $\mathcal{H}$.  We can conclude from (\ref{v1}) and (\ref{v2})  that  \begin{equation}\label{zyb}
|\mathcal{L}_m(1,\delta)|=N(\delta)-\widetilde{f}(\delta)\quad \hbox{and}\quad |\mathcal{L}_{\frac{m}{2}}(1,\delta)|=g(\delta).
\end{equation}
Note that $\delta \leq q^{\lfloor ({2m-1})/{3}\rfloor +1}\leq q^{m-\lfloor  {m}/{3} \rfloor}$ for even $m$. Thus, by
Theorem \ref{th1},  we have $C_a=m$ or $|C_a|={m}/{2} $ for any integer $a\in [1, \delta-1]$.  
 Consequently, we conclude that the equality in   (\ref{th311}) holds.  This completes the proof of Theorem \ref{odda}. \qed

As examples of Theorem \ref{odda}, we present the following two corollaries. It can be verified that these corollaries are equivalent to Theorem 17 and Theorem 18 in \cite{liu2017}, respectively.

\begin{Corollary}
 Let $m\geq 4$ be an even integer and  $\delta\in [2,q^{h+1}]$ be an integer. Then 
\begin{equation}\notag
    \mathrm{dim}(\mathcal{C}_{(q,m,\delta)})=
    \begin{cases}
        n-m N(\delta),   \hbox{ if }\delta\leq q^{h},\\
        n-m\left(N(\delta)-\frac{1}{2}\delta_h^2\right),\\
        \quad \hbox{if }\delta>q^{h} \hbox{ and }\delta_h\leq \sum\limits_{\ell=0}^{h-1}\delta_{\ell}q^{\ell},\\
      n-m\left[N(\delta)-\frac{1}{2}(\delta_h-1)^2-\delta_0  \right],  \\
      \quad \hbox{if }\delta>q^h \hbox{ and }\delta_h>\sum\limits_{\ell=0}^{h-1}\delta_{\ell}q^{\ell}.
    \end{cases}
\end{equation}
\end{Corollary}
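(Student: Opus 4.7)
The plan is to specialize Theorem \ref{odda} (even case) to the range $\delta \in [2, q^{h+1}]$, so that $k_\delta \in \{-\infty, 0\}$ effectively and the formulas for $\widetilde{f}(\delta)$ and $g(\delta)$ collapse to very simple expressions. Throughout, I write $\delta-1 = \sum_{\ell=0}^h \delta_\ell q^\ell$ and set $\delta' = \sum_{\ell=0}^{h-1}\delta_\ell q^\ell$, so that $\delta - 1 = \delta_h q^h + \delta'$. By Theorem \ref{odda} I have
\begin{equation*}
\dim(\mathcal{C}_{(q,m,\delta)}) = n - m[N(\delta) - \widetilde{f}(\delta)] - \tfrac{m}{2}g(\delta),
\end{equation*}
so it suffices to compute $\widetilde{f}(\delta)$ and $g(\delta)$ in each of the three sub-cases and simplify.

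First, for $\delta \le q^h$ the definitions of $\widetilde{f}$ and $g$ directly give $\widetilde{f}(\delta) = g(\delta) = 0$, yielding $\dim = n - mN(\delta)$. For $q^h < \delta \le q^{h+1}$ I have $k_\delta = 0$ and $\delta_h \ge 1$, so $s_\delta = 0$. The definitions then reduce to
\begin{equation*}
\widetilde{\mu}(\delta) = \min\{\delta', \delta_h\}, \qquad \mathcal{T}_0(\delta) = \{1,2,\ldots,\delta_h-1\},
\end{equation*}
and $\tau(\delta) = 1$ exactly when $\delta_h \le \delta'$. Using Lemma \ref{le5} (or a direct count) the second branch of \eqref{f} gives $\widetilde{f}(\delta) = \tfrac12(\delta_h - 1)\delta_h + N(\widetilde{\mu}(\delta)+1)$ and $g(\delta) = \delta_h - 1 + \tau(\delta)$.

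In the sub-case $\delta_h \le \delta'$, I have $\widetilde{\mu}(\delta) = \delta_h \le q-1$ so $N(\delta_h + 1) = \delta_h$; also $\tau(\delta) = 1$, hence $g(\delta) = \delta_h$ and $\widetilde{f}(\delta) = \tfrac12(\delta_h-1)\delta_h + \delta_h = \tfrac12\delta_h(\delta_h+1)$. Plugging in and cancelling the $\tfrac{m}{2}\delta_h$ terms yields $\dim = n - m[N(\delta) - \tfrac12\delta_h^2]$, matching the stated formula.

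In the sub-case $\delta_h > \delta'$, $\tau(\delta) = 0$ and $\widetilde{\mu}(\delta) = \delta'$. The key (and only nontrivial) observation is that since $\delta_h \le q-1$, the strict inequality $\delta' < \delta_h$ forces $\delta' \le q-2$, which in turn forces $\delta_1 = \delta_2 = \cdots = \delta_{h-1} = 0$ and hence $\delta' = \delta_0$. Consequently $N(\delta'+1) = N(\delta_0+1) = \delta_0$, and I obtain $\widetilde{f}(\delta) = \tfrac12(\delta_h-1)\delta_h + \delta_0$ and $g(\delta) = \delta_h - 1$. Substituting into $\dim = n - mN(\delta) + m\widetilde{f}(\delta) - \tfrac{m}{2}g(\delta)$ and simplifying $\tfrac{m}{2}(\delta_h-1)\delta_h - \tfrac{m}{2}(\delta_h-1) = \tfrac{m}{2}(\delta_h-1)^2$ produces the third formula. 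No real obstacle is expected; the only subtle point is the digit-vanishing argument in Case B, which comes cleanly from the single-digit size constraint $\delta_h \le q-1$.
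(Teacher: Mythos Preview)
Your proof is correct and follows essentially the same approach as the paper's own proof: both specialize Theorem~\ref{odda} with $k_\delta=s_\delta=0$, split into the same three cases, compute $\widetilde f(\delta)$ and $g(\delta)$ from their definitions, and substitute into \eqref{th311}. Your explicit digit-vanishing argument (that $\delta'<\delta_h\le q-1$ forces $\delta_1=\cdots=\delta_{h-1}=0$, hence $\widetilde\mu(\delta)=\delta_0$) is in fact a bit more detailed than the paper's, which simply asserts $\widetilde\mu(\delta)=\sum_{\ell=0}^{h-1}\delta_\ell q^\ell=\delta_0$ without elaboration.
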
 
\begin{proof}
We consider the following cases:

 \textbf{Case 1. } If $ \delta\leq q^h$, we immediately have $\widetilde{f}(\delta)=0$ and  $g(\delta)=0$ by definition.  

\textbf{Case 2. } If  $\delta>q^h,$  we can observe that $\delta_h>0$ and  $s_{\delta}=k_{\delta}=0$. Then we distinguish the following two subcases:

  \textit{Subcase 1.} If $\delta_h\leq \sum\limits_{\ell=0}^{h-1}\delta_{\ell}q^{\ell}$, then  $\tau(\delta)=1$ and   $\widetilde{\mu}(\delta)=\delta_h,$ which implies that 
    \begin{equation}\notag
\widetilde{f}(\delta)=\frac{1}{2}(\delta_h-1)\delta_h+\delta_h  \quad \hbox{and}\quad 
g(\delta)=\delta_h.
 \end{equation}
 
 \textit{Subcase 2.} If $\delta_h>\sum\limits_{\ell=0}^{h-1}\delta_{\ell}q^{\ell}$, then $\tau(\delta)=0$ and $\widetilde{\mu}(\delta)=\sum\limits_{\ell=0}^{h-1}\delta_{\ell}q^{\ell}=\delta_0$, which implies that 
 \begin{equation}\notag
   \widetilde{f}(\delta)=\frac{1}{2}(\delta_h-1)\delta_h+\delta_0\quad \hbox{and}\quad g(\delta)=  \delta_h-1.
 \end{equation}

By substituting the values of $\widetilde{f}(\delta)$ and $g(\delta)$ into equation (\ref{th311}) for each corresponding case, we obtain the desired equality.
\end{proof}
    
\begin{Corollary}
 Let $m\geq 5$ be an odd integer and  $\delta\in [2,q^{h+2}]$ be an integer. Then 
\begin{multline}\notag
\mathrm{dim}(\mathcal{C}_{(q,m,\delta)})=\\ 
    \begin{cases}
        n-mN(\delta), \hbox{ if 
 }\delta\leq q^{h+1},\\
        n-m\left[N(\delta)-(q-1)\delta_{h+1}^2-(\delta_h-1)\delta_{h+1}-\delta_0\right], \\ \quad \hbox{if } \delta>q^{h+1}, \delta_h>0 \hbox{ and } \delta_{h+1}>\sum\limits_{\ell=0}^{h-1}\delta_{\ell}q^{\ell},\\
        n-m\left[N(\delta)-(q-1)\delta_{h+1}^2-\delta_h\delta_{h+1}\right], \\ \quad\hbox{if } \delta>q^{h+1}, \delta_h>0 \hbox{ and } \delta_{h+1}\leq \sum\limits_{\ell=0}^{h-1}\delta_{\ell}q^{\ell},\\
n-m\left[N(\delta)-(q-1)(\delta_{h+1}^2-\delta_{h+1}+\delta_1)-\delta_0\right],\\ \quad \hbox{if }\delta>q^{h+1}, \delta_h=0\hbox{ and }\delta_{h+1}q> \sum\limits_{\ell=0}^{h-1}\delta_{\ell}q^{\ell},\\
        n-m\left[N(\delta)-(q-1)\delta_{h+1}^2\right],\\ \quad \hbox{if }\delta>q^{h+1}, \delta_h=0 \hbox{ and }\delta_{h+1}q\leq \sum\limits_{\ell=0}^{h-1}\delta_{\ell}q^{\ell}.
    \end{cases}
\end{multline}
 \end{Corollary}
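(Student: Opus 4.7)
The approach is a direct application of Theorem~\ref{odda}. Since $m$ is odd, equation~(\ref{th31}) reduces the problem to computing the function $f(\delta)$ defined in (\ref{ff}) explicitly in each of the stated sub-cases. For $\delta \le q^{h+1}$ we have $f(\delta) = 0$ by definition, yielding $\dim(\mathcal{C}_{(q,m,\delta)}) = n - m N(\delta)$ immediately.

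For $q^{h+1} < \delta \le q^{h+2}$ we have $k_\delta = 1$, so the leading summand $q^{2k_\delta-3}(k_\delta-1)(q-1)^2 = 0$ and only the two indices $i = 0$ and $i = 1$ appear in the inner double sum. I would set $s_\delta = 0$ when $\delta_h > 0$ and $s_\delta = 1$ when $\delta_h = 0$. The contribution at $i = 1$ is evaluated using $N(tq+1) = t(q-1)$, which is valid because every admissible $t$ satisfies $t \le \delta_{h+1} < q$; this yields a triangular-number closed form in $\delta_{h+1}$. The contribution at $i = 0$ is evaluated by writing $t = qa + b$ with $b \in [1,q-1]$ and $a \le \delta_{h+1} < q$, so that $N(tq^{-1}+1) = a$, which again produces a triangular-number closed form. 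Combining the two pieces yields $(q-1)\delta_{h+1}^2 + (\delta_h - 1)\delta_{h+1}$ when $\delta_h > 0$, and $(q-1)\delta_{h+1}(\delta_{h+1}-1)$ when $\delta_h = 0$.

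The remaining piece $N(\mu(\delta)+1)$ is controlled by the auxiliary hypothesis in each sub-case, which selects one term in the minimum defining $\mu(\delta)$ and simultaneously forces many coefficients of the $q$-adic expansion of $\delta-1$ to vanish. When $\delta_h > 0$ and $\delta_{h+1} > \sum_{\ell=0}^{h-1}\delta_\ell q^\ell$, the bound $\delta_{h+1} \le q-1$ forces $\delta_\ell = 0$ for $\ell \in [1, h-1]$, so $N(\mu(\delta)+1) = \delta_0$; if instead $\delta_{h+1} \le \sum_{\ell=0}^{h-1}\delta_\ell q^\ell$, then $\mu(\delta) = \delta_{h+1}$ and $N(\mu(\delta)+1) = \delta_{h+1}$. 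When $\delta_h = 0$, the hypothesis $\delta_{h+1} q > \sum_{\ell=0}^{h-1}\delta_\ell q^\ell$ forces $\delta_\ell = 0$ for $\ell \in [2, h-1]$ and gives $N(\mu(\delta)+1) = \delta_0 + (q-1)\delta_1$; the opposite inequality gives $\mu(\delta) = \delta_{h+1} q$ and $N(\mu(\delta)+1) = (q-1)\delta_{h+1}$. Adding the three contributions in each sub-case reproduces exactly the four formulas stated in the corollary.

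The main technical obstacle is the bookkeeping of the inner sum at $i = 0$, where the argument $tq^{-1}$ is not an integer and the upper endpoint of $\mathcal{T}_0(\delta)$ involves both $\delta_h$ and $\delta_{h+1}$; the substitution $t = qa + b$ must be coupled with a careful check of which pairs $(a,b)$ lie in $\mathcal{T}_0(\delta)$, particularly at the boundary $a = \delta_{h+1}$, which contributes the extra $b \in [1, \delta_h - 1]$ term only when $\delta_h > 0$. Apart from this, the argument is elementary arithmetic using $0 \le \delta_j \le q-1$ together with a routine check that the four hypotheses partition the region $q^{h+1} < \delta \le q^{h+2}$.
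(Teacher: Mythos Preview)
Your proposal is correct and follows essentially the same route as the paper: both apply Theorem~\ref{odda}, observe that $k_\delta=1$ kills the leading term of $f(\delta)$, split into $s_\delta=0$ versus $s_\delta=1$ according to whether $\delta_h>0$, and then evaluate $\mu(\delta)$ and the sums over $\mathcal{T}_0(\delta)$, $\mathcal{T}_1(\delta)$ in each sub-case. The paper's proof is terser---it simply lists $\mathcal{T}_0(\delta)$, $\mathcal{T}_1(\delta)$, $\mu(\delta)$ and says ``substitute into (\ref{ff})''---whereas you spell out the $t=qa+b$ decomposition for the $i=0$ contribution and the vanishing of higher $\delta_\ell$ forced by the auxiliary inequalities; this extra detail is fine and does not constitute a different approach.
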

\begin{proof}
    We consider the following cases:
    
    \textbf{Case 1.} If $ \delta\leq q^{h+1}$, we can directly have $f(\delta)=0$  by definition.

    \textbf{Case 2.} If $\delta> q^{h+1}$, we have $k_{\delta}=1$, then we distinguish the following subcases:

\textit{Subcase 1.} 
 If $\delta_h>0$, then $s_{\delta}=0$, which leads to 
 \begin{itemize}
     \item $\mathcal{T}_0(\delta)=\left[q, \delta_{h+1}q+ \delta_h-1\right]\cap \{t\in \mathbb{Z}:q\nmid t\}$; 
     \item 
     $\mathcal{T}_1(\delta)=\left[1, \delta_{h+1}\right]$;
      \item  $\mu(\delta)=\min\left\{\delta_{h+1},\sum\limits_{\ell=0}^{h-1}\delta_{\ell}q^{\ell}\right\}$.
      \end{itemize}
      Noticing that $\mu(\delta)=\sum\limits_{\ell=0}^{h-1}\delta_{\ell}q^{\ell}=\delta_0$ if $\delta_{h+1}>\sum\limits_{\ell=0}^{h-1}\delta_{\ell}q^{\ell}$, we can 
      substitute these terms into equation (\ref{ff}) to  obtain 
\begin{equation}\notag  f(\delta)=\begin{cases}
    (q-1)\delta_{h+1}^2+(\delta_h-1)\delta_{h+1}+\delta_0, \\ \quad \quad\hbox{if }\delta_h>0 \hbox{ and } \delta_{h+1}>\sum\limits_{\ell=0}^{h-1}\delta_{\ell}q^{\ell},\\
     (q-1)\delta_{h+1}^2+\delta_h\delta_{h+1},  \\ \quad \quad \hbox{if }\delta_h>0 \hbox{ and } \delta_{h+1}\leq \sum\limits_{\ell=0}^{h-1}\delta_{\ell}q^{\ell}.
\end{cases}
\end{equation} 

 \textit{Subcase 2.} If $\delta_h=0$, then $s_{\delta}=1$, which leads to 
\begin{itemize}
     \item $\mathcal{T}_0(\delta)=\left[q, \delta_{h+1}q-1\right]\cap \{t\in \mathbb{Z}:q\nmid t\}$;
    \item  $\mathcal{T}_1(\delta)=[1,\delta_{h+1}-1]$;
    \item  $\mu(\delta)=\min\left\{ \delta_{h+1}q, \sum\limits_{\ell=0}^{h-1}\delta_{\ell}q^{\ell} \right\}$.
\end{itemize}
Noting that  $\mu(\delta)=\sum\limits_{\ell=0}^{h-1}\delta_{\ell}q^{\ell}=\delta_1q+\delta_0$ if $\delta_{h+1}q>\sum\limits_{\ell=0}^{h-1}\delta_{\ell}q^{\ell}$, 
we can substitute  these  terms into equation (\ref{ff}) to derive 
\begin{equation}\notag
    f(\delta)= \begin{cases}
     (q-1)\delta_{h+1}^2, \hbox{ if } \delta_h=0 \hbox{ and }\delta_{h+1}q\leq \sum\limits_{\ell=0}^{h-1}\delta_{\ell}q^{\ell},\\
        (q-1)(\delta_{h+1}^2-\delta_{h+1}+\delta_1)+\delta_0,\\ \quad \quad \hbox{if }\delta_h=0 \hbox{ and }\delta_{h+1}q>\sum\limits_{\ell=0}^{h-1}\delta_{\ell}q^{\ell}. 
    \end{cases}
\end{equation}
  
Finally, substituting the value of $f(\delta)$ into equation (\ref{th31}) for each corresponding case yields the result. 
\end{proof}
\section{The Dimension of \texorpdfstring{$\mathcal{C}_{(q,m,\delta,b)} $ }  \ for   \texorpdfstring{ $\delta+b\leq q^{\lfloor ({2m-1})/{3} \rfloor+1}+1$}\ }  
Based on the proof of Theorem \ref{odda}, we can easily generalize the results in Theorem \ref{odda} to primitive BCH codes that are not necessarily narrow-sense as follows.
\begin{theorem}\label{ext}
 Let $m\geq 4$, and let  $b$ and $\delta$ be two positive integers such that $b\geq 2$ and   $ b(q-1)+1\leq \delta \leq q^{\lfloor ({2m-1})/{3} \rfloor+1}+1-b$. 
 \begin{itemize}
    \item If  $m$ is odd, then 
\begin{equation*}
\mathrm{dim}(\mathcal{C}_{(q,m,\delta, b)})=
n-m\left[N(b+\delta-1)-f(b+\delta-1)\right]. 
\end{equation*}
\item If $m$ is even, then 
\begin{equation*}
\begin{split}
\mathrm{dim}(\mathcal{C}_{(q,m,\delta, b)}) 
=  & n - m\left[N(b+\delta-1)-\widetilde{f}(b+\delta\!-\!1)\right]\\
& -\frac{m}{2}g(b+\delta\!-\!1).
\end{split}
\end{equation*}  
\end{itemize} 
\end{theorem}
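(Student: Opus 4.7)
The plan is to reduce the non-narrow-sense case to the narrow-sense case by proving that, under the stated hypothesis, the codes $\mathcal{C}_{(q,m,\delta,b)}$ and $\mathcal{C}_{(q,m,b+\delta-1)}$ actually coincide. Since $\mathrm{dim}(\mathcal{C}_{(q,m,\delta,b)}) = n - |\bigcup_{a=b}^{b+\delta-2} C_a|$ and $\mathrm{dim}(\mathcal{C}_{(q,m,b+\delta-1)}) = n - |\bigcup_{a=1}^{b+\delta-2} C_a|$, equality of dimensions will follow from the stronger set equality $\bigcup_{a=b}^{b+\delta-2} C_a = \bigcup_{a=1}^{b+\delta-2} C_a$. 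This is in turn equivalent to the claim that every coset leader $c \in [1, b-1]$ satisfies $C_c \cap [b, b+\delta-2] \neq \emptyset$, since a coset present in the larger union but absent from the smaller one must have its coset leader in $[1, b-1]$ together with all of its elements avoiding $[b, b+\delta-2]$.

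To prove the claim, fix a coset leader $c \in [1, b-1]$ and let $j$ be the smallest positive integer with $cq^j \geq b$; such $j$ exists and $j \geq 1$ because $c \leq b-1 < b$ and $q \geq 2$. By minimality $cq^{j-1} < b$, so $cq^j < bq$. The hypothesis $\delta \geq b(q-1)+1$ rewrites as $bq \leq b+\delta-1$, whence $cq^j \leq bq-1 \leq b+\delta-2$. Combined with $cq^j \geq b$ this places the integer $cq^j$ in $[b, b+\delta-2]$.

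It remains to confirm $cq^j \in C_c$, i.e., that $cq^j$ equals its residue modulo $n$. The upper bound $b+\delta-1 \leq q^{\lfloor (2m-1)/3 \rfloor+1}+1$ together with $bq \leq b+\delta-1$ gives $bq \leq q^{\lfloor (2m-1)/3 \rfloor+1} \leq q^m = n+1$, so $cq^j < bq \leq n+1$, i.e., $cq^j \leq n$. Since $c$ is a coset leader we have $q \nmid c$, hence $\gcd(cq^j,n) = \gcd(c,n) \leq c < n$, which rules out $cq^j = n$. Thus $cq^j < n$, so $cq^j \bmod n = cq^j$, giving $cq^j \in C_c \cap [b, b+\delta-2]$ and establishing the claim.

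With the set equality in hand, the two stated dimension formulas follow immediately by applying Theorem \ref{odda} with designed distance $b+\delta-1$, which is legal because $b+\delta-1 \leq q^{\lfloor (2m-1)/3 \rfloor+1}$. The only delicate step is the modular bookkeeping in the previous paragraph, but both of the requirements on $cq^j$ (lying in the target interval and not wrapping around modulo $n$) are consumed by exactly the two hypotheses $\delta \geq b(q-1)+1$ and $b+\delta-1 \leq q^{\lfloor (2m-1)/3 \rfloor+1}+1$, so no new estimates beyond these are needed.
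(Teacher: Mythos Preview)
Your proof is correct and follows essentially the same approach as the paper: both arguments show that every $x\in[1,b-1]$ has some $q$-power multiple landing in $[b,b+\delta-2]$, so that $\bigcup_{a=b}^{b+\delta-2}C_a=\bigcup_{a=1}^{b+\delta-2}C_a$ and Theorem~\ref{odda} applies with designed distance $b+\delta-1$. You are in fact slightly more careful than the paper in explicitly verifying that $cq^j<n$ (so no modular wraparound occurs); note only the minor slip that the hypothesis actually gives $b+\delta-1\le q^{\lfloor(2m-1)/3\rfloor+1}$ rather than $q^{\lfloor(2m-1)/3\rfloor+1}+1$, which makes the bound $bq\le q^{\lfloor(2m-1)/3\rfloor+1}$ immediate.
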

\begin{proof}
Since $\delta\geq  b(q-1)+1,$ for each integer $x\in [1,b-1]$,   there exists $i$ such that $b\leq xq^i\leq bq-1\leq b+\delta-2$.  Therefore, 
\begin{equation*}
    C_{x}\subseteq \bigcup\limits_{a=b}^{b+\delta-2}C_a\quad \hbox{for all } x\in [1,b-1]. 
\end{equation*}
It follows that $$\bigcup\limits_{a=1}^{b+\delta-2}C_a=\bigcup\limits_{a=b}^{b+\delta-2}C_a,$$ which is equivalent to 
$  \mathcal{C}_{(q,m,\delta,b)}=\mathcal{C}_{(q,m,b+\delta-1)}.
$
Noticing that $b+\delta-1 \leq q^{\lfloor ({2m-1})/{3} \rfloor+1}$, we can apply Theorem 3 to obtain the result.
\end{proof}

\section{The Bose distance of \texorpdfstring{$\mathcal{C}_{(q,m,\delta,b)}$ }  \ for   \texorpdfstring{ $ \delta+b\leq q^{\lfloor ({2m-1})/{3} \rfloor+1}+1$}\ } 
 When
$\delta=q^{\lfloor (2m-1)/3\rfloor+1}$, the Bose distance was  given
by \cite[Theorem 13]{DDZ2015}. 
In  this section, we determine the dimension of $C_{(q,m,\delta)}$ for $\delta \in [2, q^{\lfloor ({2m-1})/{3}\rfloor+1}).$

We can first apply Lemma \ref{lll} \cite[Theorem 2.3]{YF2000} to determine the Bose distance of $\mathcal{C}_{(q,m,\delta)}$ for $\delta\in [2,q^{m-h})$ as follow. 
\begin{theorem} \label{bosth0}
Let $\delta\in [2, q^{m-h})$ be an integer. Then 
\begin{equation}\notag d_B(\mathcal{C}_{(q,m,\delta)})= \begin{cases}
\delta, &\hbox{ if }q\nmid \delta, \\ 
\delta+1, & \hbox{ if } q\mid \delta.\\
  \end{cases}
\end{equation}
\end{theorem}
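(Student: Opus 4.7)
The plan is to invoke equation (\ref{Bose1}), which identifies $d_B(\mathcal{C}_{(q,m,\delta)})$ with the smallest coset leader $\delta'$ in the range $[\delta, n-1]$. Since the interval we need to inspect is extremely short (just $\delta$ and possibly $\delta+1$), the entire argument should be a direct application of the sufficient criterion recorded in Lemma \ref{lll}.

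First I would verify that the range hypothesis of Lemma \ref{lll} is satisfied up to $\delta+1$. Because $h = \lfloor m/2 \rfloor$, one has $q^{m-h} = q^{h+1}$ when $m$ is odd and $q^{m-h} = q^{h}$ when $m$ is even. The assumption $\delta < q^{m-h}$ therefore gives $\delta+1 \leq q^{h+1}$ in the odd case (exactly the bound in Lemma \ref{lll}) and $\delta+1 \leq q^{h}+1 \leq 2q^{h}$ in the even case (within the Lemma's window). Hence for every integer $a \in \{\delta,\delta+1\}$, we have the equivalence: $a$ is a coset leader of $C_a$ if and only if $q \nmid a$.

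With this equivalence in hand, I would split into two cases. If $q \nmid \delta$, then $\delta$ itself is a coset leader, and since $\delta$ is the smallest integer in $[\delta,n-1]$, equation (\ref{Bose1}) yields $d_B(\mathcal{C}_{(q,m,\delta)}) = \delta$. If $q \mid \delta$, then $\delta$ is not a coset leader (the paper already notes that divisibility by $q$ is an obstruction to being a coset leader). However, since $q \geq 2$ cannot divide two consecutive integers, we have $q \nmid \delta+1$, so by the equivalence above $\delta+1$ is a coset leader. Thus the smallest coset leader in $[\delta,n-1]$ is $\delta+1$, and (\ref{Bose1}) gives $d_B(\mathcal{C}_{(q,m,\delta)}) = \delta+1$.

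There is essentially no obstacle here: once one observes that the interval $[\delta,\delta+1]$ is contained in the range where Lemma \ref{lll} applies, the dichotomy on $q \mid \delta$ is mechanical. The only subtlety worth recording explicitly is the elementary check that $\delta + 1$ still lies in the Lemma's window in both parity cases, which I handled above via the bounds $q^{m-h} \leq q^{h+1}$ for odd $m$ and $q^{m-h} = q^{h} \leq 2q^{h}$ for even $m$.
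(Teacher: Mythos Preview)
Your proposal is correct and follows essentially the same approach as the paper: both invoke Lemma~\ref{lll} to decide coset-leadership of $\delta$ and $\delta+1$ and then read off the Bose distance via equation~(\ref{Bose1}). Your write-up is in fact slightly more careful, making explicit the range check that $\delta+1$ still satisfies the hypothesis of Lemma~\ref{lll} in each parity case.
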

\begin{proof}
Note that $m-h=h$ when $m$ is odd, and $m-h=h+1$ when $m$ is even. We can apply Lemma \ref{lll} to conclude that $\delta$ is the smallest coset leader in $[\delta, n]$ if $q\nmid \delta$,  and  $\delta+1$ is the smallest coset leader in $[\delta, n]$ if $q\mid \delta.$    Then the result follows. 
\end{proof}

Next, we determine the 
 the Bose distance of $\mathcal{C}_{(q,m,\delta)}$ for $\delta \in [q^{m-h}, q^{\lfloor ({2m-1})/{3}\rfloor+1})$ and  $m\geq 4$ by giving  the following Theorem \ref{bosth} and Theorem \ref{odd2}. 

\begin{theorem}\label{bosth}
Suppose that $m\geq 5$ is an odd integer and  $\delta\in \left[q^{h+1}, q^{\lfloor({2m-1})/{3}\rfloor +1}\right)$ is  an integer. Let $j_{\delta}$ be the integer such that $q^{h+j_{\delta}}\leq \delta<q^{h+j_{\delta}+1}$. Let  $\sum\limits_{\ell=0}^{h+j_{\delta}}\delta_{\ell}q^{\ell}$ denote the $q$-adic expansion of $\delta$, and let $r_{\delta}$ be  the smallest integer in $[-j_{\delta}+1,j_{\delta}]$ such that $\delta_{h+r_{\delta}}>0$.  Define $\hat{\delta}= \sum\limits_{\ell=h+r_{\delta}}^{h+j_{\delta}}\delta_{\ell}q^{\ell}+ \sum\limits_{\ell=h-r_{\delta}+1}^{h+j_{\delta}}\delta_{\ell}q^{\ell-(h-r_{\delta}+1)}.$   
\begin{itemize}
\item  If  $q\nmid \delta$, then 
\begin{multline}\label{t1}
   d_B(\mathcal{C}_{(q,m,\delta)})=\\ \begin{cases}
\delta, & \hbox{if }  \sum\limits_{\ell=0}^{h-j_{\delta}}\delta_{\ell}q^{\ell}> \sum\limits_{\ell=h-r_{\delta}+1}^{h+j_{\delta}}\delta_{\ell}q^{\ell-(h-r_{\delta}+1)},\\
\hat{\delta}+1, &\hbox{if }\sum\limits_{\ell=0}^{h-j_{\delta}}\delta_{\ell}q^{\ell}\leq  \sum\limits_{\ell=h-r_{\delta}+1}^{h+j_{\delta}}\delta_{\ell}q^{\ell-(h-r_{\delta}+1)}\\
& \hbox{and } \delta_{h-r_{\delta}+1}\neq q-1, \\
\hat{\delta}+2, &\hbox{if }\sum\limits_{\ell=0}^{h-j_{\delta}}\delta_{\ell}q^{\ell}\leq  \sum\limits_{\ell=h-r_{\delta}+1}^{h+j_{\delta}}\delta_{\ell}q^{\ell-(h-r_{\delta}+1)} \\
& \hbox{ and } \delta_{h-r_{\delta}+1}=  q-1. \\     \end{cases}
\end{multline}
\item If  $q\mid \delta$, then   
\begin{multline}\label{t2}
   d_B(\mathcal{C}_{(q,m,\delta)})= \\\begin{cases}
\delta+1, & \hbox{if } \sum\limits_{\ell=0}^{h-j_{\delta}}\delta_{\ell}q^{\ell}\geq \sum\limits_{\ell=h-r_{\delta}+1}^{h+j_{\delta}}\delta_{\ell}q^{\ell-(h-r_{\delta}+1)},\\
\hat{\delta}+1, &\hbox{if }  \sum\limits_{\ell=0}^{h-j_{\delta}}\delta_{\ell}q^{\ell}<\sum\limits_{\ell=h-r_{\delta}+1}^{h+j_{\delta}}\delta_{\ell}q^{\ell-(h-r_{\delta}+1)} \\
&\hbox{and }\delta_{h-r_{\delta}+1}\neq q-1,\\
\hat{\delta}+2, &\hbox{if } \sum\limits_{\ell=0}^{h-j_{\delta}}\delta_{\ell}q^{\ell}<\sum\limits_{\ell=h-r_{\delta}+1}^{h+j_{\delta}}\delta_{\ell}q^{\ell-(h-r_{\delta}+1)} \\
& \hbox{and } \delta_{h-r_{\delta}+1}= q-1. 
\end{cases}
\end{multline}
\end{itemize}
\end{theorem}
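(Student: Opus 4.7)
The strategy is to exploit Observation \ref{ob3} together with Theorem \ref{th2}, testing for each candidate integer in $[\delta,\hat{\delta}+2]$ whether it is a coset leader; by (\ref{Bose1}) the Bose distance is the smallest coset leader in $[\delta,n-1]$.

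First I would decide whether $\delta$ (in case $q\nmid\delta$) or $\delta+1$ (in case $q\mid\delta$) is itself a coset leader. From Theorem \ref{th2} together with Remark \ref{cor2}, any integer $a\in [q^{h+j_\delta},q^{h+j_\delta+1})$ with $q\nmid a$ that fails to be a coset leader must lie in $\mathcal{A}_{j_\delta}(r)$, where $r$ is the smallest index in $[-j_\delta+1,j_\delta]$ with $a_{h+r}>0$. For $a=\delta$ this forces $r=r_\delta$, and the definition of $r_\delta$ already supplies the middle zero block in $V(a)$ demanded by (\ref{p33}). Consequently membership of $\delta$ in $\mathcal{A}_{j_\delta}(r_\delta)$ reduces, via Remark \ref{rm4}, to the inequality $\sum_{\ell=0}^{h-j_\delta}\delta_\ell q^\ell \leq \sum_{\ell=h-r_\delta+1}^{h+j_\delta}\delta_\ell q^{\ell-(h-r_\delta+1)}$ appearing in the statement; its negation yields the first subcase of (\ref{t1}). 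The $q\mid\delta$ branch of (\ref{t2}) is handled analogously after replacing $\delta$ by $\delta+1$ and noting that $(\delta+1)_0=1$.

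Next I would verify that $\hat{\delta}+1$ (respectively $\hat{\delta}+2$) is a coset leader. Unwinding the definition of $\hat{\delta}$ and using that $\delta_\ell=0$ for $\ell\in[h-j_\delta+1,h+r_\delta-1]$, one sees that $V(\hat{\delta})$ consists of the upper block $(\delta_{h+j_\delta},\ldots,\delta_{h+r_\delta})$ placed at its original positions, a shifted copy of the same block seated at positions $[2r_\delta-1,j_\delta+r_\delta-1]$, and zero fillers elsewhere; in particular the digit of $\hat{\delta}$ at position $0$ is precisely $\delta_{h-r_\delta+1}$. If $\delta_{h-r_\delta+1}\neq q-1$, adding one to $\hat{\delta}$ perturbs only that digit and breaks condition (\ref{p32}) by inflating the low-order sequence past the shifted upper block, so Theorem \ref{th2} identifies $\hat{\delta}+1$ as a coset leader. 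If instead $\delta_{h-r_\delta+1}=q-1$, the addition triggers a carry that sets $(\hat{\delta}+1)_0=0$, forcing $q\mid\hat{\delta}+1$, and a further unit is required so that $\hat{\delta}+2$ attains the same escape from $\mathcal{A}_{j_\delta}(r_\delta)$.

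The most laborious part, and what I expect to be the main obstacle, is ruling out any coset leader strictly between $\delta$ (or $\delta+1$) and $\hat{\delta}+1$ (or $\hat{\delta}+2$). My plan is to parametrise this window by varying only the low-order digits of $\delta$ while freezing the upper block, and to show that every integer so obtained either has unit digit equal to zero (hence is divisible by $q$) or satisfies both (\ref{p33}) and (\ref{p32}), since by construction the shifted upper block lex-dominates every admissible low-order sequence in this range. The index bookkeeping is delicate because $j_\delta$, $r_\delta$ and $h-r_\delta+1$ interact nontrivially, but the bound $\delta<q^{\lfloor(2m-1)/3\rfloor+1}$ supplies the estimate $j_\delta+2r_\delta-1\leq h$ needed for the middle zero block in (\ref{p33}) to have nonnegative length, which keeps the enumeration clean.
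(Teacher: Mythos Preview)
Your strategy matches the paper's proof: use Theorem~\ref{th2} and Remark~\ref{cor2} to decide whether $\delta$ (or $\delta+1$) is a coset leader, parametrise the interval $[\delta,\hat\delta]$ by freezing the upper block and varying the low digits to show every integer there lies in $\mathcal{A}_{j_\delta}(r_\delta)$ or is divisible by $q$, and then check directly that $\hat\delta+1$ (or $\hat\delta+2$) escapes $\mathcal{A}_{j_\delta}(r_\delta)$. The paper also reduces the $q\mid\delta$ branch to the $q\nmid\delta$ branch by passing to $\delta+1$, exactly as you propose. One small correction: your claim that ``the definition of $r_\delta$ already supplies the middle zero block in $V(a)$ demanded by (\ref{p33})'' is not quite right---$r_\delta$ only forces zeros at positions $[h-j_\delta+1,h+r_\delta-1]$, not the full block $[j_\delta+r_\delta,h+r_\delta-1]$---so membership in $\mathcal{A}_{j_\delta}(r_\delta)$ does not reduce to (\ref{p32}) alone via Remark~\ref{rm4}. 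The paper sidesteps this by invoking Remark~\ref{rm6} (a consequence of (\ref{p32}) and (\ref{p33}) together), whose contrapositive gives non-membership directly from the failure of the displayed inequality; you should cite Remark~\ref{rm6} here rather than Remark~\ref{rm4}.
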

\begin{proof}

 We will show that equation (\ref{t1})  holds if $q\nmid \delta$ through the following cases. 

\textbf{Case 1.}
Suppose that  $\sum\limits_{\ell=0}^{h-j_{\delta}}\delta_{\ell}q^{\ell}> \sum\limits_{\ell=h-r_{\delta}+1}^{h+j_{\delta}}\delta_{\ell}q^{\ell-(h-r_{\delta}+1)}$. 
By Remark \ref{rm6}, we first have  $\delta\not \in \mathcal{A}_{j_{\delta}}(r_{\delta})$.  Futhermore, recalling the definition of $r_{\delta} $ and applying 
Remark \ref{cor2},  we can also conclude that $\delta\not \in \mathcal{A}_{j_{\delta}}(i)$ for any $i\neq r_{\delta}$. By applying Theorem \ref{th2}, it follows that  $\delta \not \in \mathcal{S}$, which  implies that 
$\delta$
is a coset leader.
 Recalling equaiton (\ref{Bose1}),   
it follows that $ d_B(\mathcal{C}_{(q,m,\delta)})=\delta$. 

\textbf{Case 2.}
Suppose that  $\sum\limits_{\ell=0}^{h-j_{\delta}}\delta_{\ell}q^{\ell}\leq  \sum\limits_{\ell=h-r_{\delta}+1}^{h+j_{\delta}}\delta_{\ell}q^{\ell-(h-r_{\delta}+1)}$.  We first aim to show that any integer $\delta^{'}\in \left[\delta, \hat{\delta}\right]$ is not a coset leader.    Noticing  $\sum\limits_{\ell=h-r_{\delta}+1}^{h+j_{\delta}}\delta_{\ell}q^{\ell-(h-r_{\delta}+1)}<q^{j_{\delta}+r_{\delta}}$,   we have 
$\sum\limits_{\ell=0}^{h-j_{\delta}}\delta_{\ell}q^{\ell}<q^{j_{\delta}+r_{\delta}}$.  This implies that 
\begin{equation}\notag
    \delta_{\ell}=0 \quad \hbox{for }\ell=j_{\delta}+r_{\delta},\ldots, h-j_{\delta}.
\end{equation}
Recalling the definition of $r_{\delta}$, it follows that 
\begin{equation}\notag
    V(\delta)=(\mathbf{0}_{h-j_{\delta}}, \delta_{h+j_{\delta}},\ldots, \delta_{h+r_{\delta}}, \mathbf{0}_{h-j_{\delta}}, \delta_{j_{\delta}+r_{\delta}-1},\ldots,
    \delta_{0}).
\end{equation}
It is  clear that 
\begin{multline}\notag
    V(\hat{\delta})=(\mathbf{0}_{h-j_{\delta}}, \delta_{h+j_{\delta}},\ldots, \delta_{h+r_{\delta}}, \\\mathbf{0}_{h-j_{\delta}}, \delta_{h+j_{\delta}},\ldots,
    \delta_{h-r_{\delta}+1}).
\end{multline}
Consequently, for any integer $\delta^{'}\in \left[\delta, \hat{\delta}\right]$ with $q$-adic expansion $\sum\limits_{\ell=0}^{h+j_{\delta}}\delta^{'}_{\ell}q^{\ell}$,  we have 
\begin{equation}\label{dcj1}
\begin{split}
V(\delta^{'})&=(\mathbf{0}_{h-j_{\delta}}, \delta^{'}_{h+j_{\delta}}, \ldots.
    \delta_{0}^{'})\\
    &=(\mathbf{0}_{h-j_{\delta}}, \delta_{h+j_{\delta}}^{},\ldots, \delta_{h+r_{\delta}}, \mathbf{0}_{h-j_{\delta}}, \delta_{j_{\delta}+r_{\delta}-1}^{'},\ldots.
    \delta_{0}^{'})
    \end{split}
\end{equation}
and 
\begin{equation}\label{dcj2}
\begin{split}
 (\delta^{'}_{j_{\delta}+r_{\delta}-1}, \ldots, \delta^{'}_0)&\leq (\delta_{h+j_{\delta}}, \ldots,\delta_{h-r_{\delta}+1})\\&=(\delta^{'}_{h+j_{\delta}}, \ldots,\delta^{'}_{h-r_{\delta}+1}).
 \end{split}
\end{equation}
If $\delta_{0}^{'}=0$, then $q\mid \delta^{'}$, indicating that  $\delta^{'}$ is not a coset leader. 
If $\delta_0^{'}\neq 0$, then we can conclude from (\ref{dcj1}) and (\ref{dcj2}) that $\delta^{'}\in \mathcal{A}_{j_{\delta}}(r_{\delta})$. Thus, by applying Theorem \ref{th2}, we can still can conclude that $\delta^{'}$ is not a coset leader. 

In addition,  we can use a similar argument in Case 1 to conclude that 
\begin{itemize}
    \item If $\delta_{h-r_{\delta}+1}\neq q-1$, then $\hat{\delta}+1$ is a coset leader.
    \item If $\delta_{h-r_{\delta}+1}= q-1$, then  $\hat{\delta}+2$ is a coset leader.
\end{itemize}   
Furthermore, since  $q\mid \hat{\delta}+1$ when $\delta_{h-r_{\delta}+1}=q-1$, we can also conclude that $\hat{\delta}+1$ is not a coset leader in this case.  Consequently, we  conclude that   $d_B(\mathcal{C}_{(q,m,\delta)})=\hat{\delta}+1$ if $\delta_{h-r_{\delta}+1}\neq q-1$, and 
$d_B(\mathcal{C}_{(q,m,\delta)})=\hat{\delta}+2$ if $\delta_{h-r_{\delta}+1}= q-1.$ We  now have demonstrated that equation (\ref{t1}) holds if $q\nmid \delta$. 

If $q \mid \delta$, then $\delta$ is not a coset leader. It follows that $d_B(\mathcal{C}_{(q,m,\delta)})=d_B(\mathcal{C}_{(q,m,\delta+1)})$. Notably $q\mid \delta$ implies that $q\nmid \delta+1$. Therefore, we can obtain (\ref{t2}) by substituting $\delta+1$ into  equation (\ref{t1}).  \end{proof}

\begin{theorem}\label{odd2}
Suppose that   $m\geq 4$ is an even integer,  
and  $\delta\in [q^h, q^{\lfloor({2m-1})/{3}\rfloor +1})$ is  an  integer.   
Let $j_{\delta}$ be the integer such that $q^{h+j_{\delta}}\leq \delta<q^{h+j_{\delta}+1}$. Let  $\sum\limits_{\ell=0}^{h+j_{\delta}}\delta_{\ell}q^{\ell}$ denote the $q$-adic expansion of $\delta$, and let $r_{\delta}$ be the smallest integer in $[-j_{\delta},j_{\delta}]$ such that $\delta_{h+r_{\delta}}>0$. Define $\hat{\delta}= \sum\limits_{\ell=h+r_{\delta}}^{h+j_{\delta}}\delta_{\ell}q^{\ell}+ \sum\limits_{\ell=h-r_{\delta}}^{h+j_{\delta}}\delta_{\ell}q^{\ell-(h-r_{\delta})}.$  
\begin{itemize}
    \item 
If $r_{\delta}\neq 0$ and  $q\nmid \delta$, then   
\begin{multline}\label{sh3}
   d_B(\mathcal{C}_{(q,m,\delta)})= \\ \begin{cases}
\delta, & \hbox{if } \sum\limits_{\ell=0}^{h-j_{\delta}-1}\delta_{\ell}q^{\ell}> \sum\limits_{\ell=h-r_{\delta}}^{h+j_{\delta}}\delta_{\ell}q^{\ell-(h-r_{\delta})}, \\
\hat{\delta}+1, &\hbox{if }\sum\limits_{\ell=0}^{h-j_{\delta}-1}\delta_{\ell}q^{\ell}\leq  \sum\limits_{\ell=h-r_{\delta}}^{h+j_{\delta}}\delta_{\ell}q^{\ell-(h-r_{\delta})}\\
& \hbox{   and } \delta_{h-r_{\delta}}\neq q-1, \\
\hat{\delta}+2, &\hbox{if }\sum\limits_{\ell=0}^{h-j_{\delta}-1}\delta_{\ell}q^{\ell}\leq  \sum\limits_{\ell=h-r_{\delta}}^{h+j_{\delta}}\delta_{\ell}q^{\ell-(h-r_{\delta})} \\
& \hbox{and } \delta_{h-r_{\delta}}=  q-1.\\     \end{cases}
\end{multline}
\item If $r_{\delta}\neq 0$ and  $q\mid \delta$, then   
\begin{multline}\label{sh4}
   d_B(\mathcal{C}_{(q,m,\delta)}) = \\ \begin{cases}
\delta+1, & \hbox{if } \sum\limits_{\ell=0}^{h-j_{\delta}-1}\delta_{\ell}q^{\ell}\geq  \sum\limits_{\ell=h-r_{\delta}}^{h+j_{\delta}}\delta_{\ell}q^{\ell-(h-r_{\delta})}, \\
\hat{\delta}+1, &\hbox{if }\sum\limits_{\ell=0}^{h-j_{\delta}-1}\delta_{\ell}q^{\ell}<\sum\limits_{\ell=h-r_{\delta}}^{h+j_{\delta}}\delta_{\ell}q^{\ell-(h-r_{\delta})}\\
& \hbox{and } \delta_{h-r_{\delta}}\neq q-1,\\
\hat{\delta}+2, &\hbox{if }\sum\limits_{\ell=0}^{h-j_{\delta}-1}\delta_{\ell}q^{\ell}<  \sum\limits_{\ell=h-r_{\delta}}^{h+j_{\delta}}\delta_{\ell}q^{\ell-(h-r_{\delta})}\\
& \hbox{and } \delta_{h-r_{\delta}}=  q-1.\\     \end{cases}
\end{multline}
\item If   $r_{\delta}=0$,   then 
\begin{multline}\label{sh2}
   d_B(\mathcal{C}_{(q,m,\delta)})= \\ \begin{cases}
\delta, &\hbox{ if }\sum\limits_{\ell=0}^{h-j_{\delta}-1}\delta_{\ell}q^{\ell}> \sum\limits_{\ell=h}^{h+j_{\delta}}\delta_{\ell}q^{\ell-h}\\ & \hbox{ and } q\nmid \delta,\\ 
\delta+1, &\hbox{ if }\sum\limits_{\ell=0}^{h-j_{\delta}-1}\delta_{\ell}q^{\ell}>\sum\limits_{\ell=h}^{h+j_{\delta}}\delta_{\ell}q^{\ell-h}\\ & \hbox{ and } q\mid \delta,\\ 
\hat{\delta} , &\hbox{ if }\sum\limits_{\ell=0}^{h-j_{\delta}-1}\delta_{\ell}q^{\ell-h}\leq \sum\limits_{\ell=h}^{h+j_{\delta}}\delta_{\ell}q^{\ell-h}. \\ 
  \end{cases}
  \end{multline}
  \end{itemize}
\end{theorem}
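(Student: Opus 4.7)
The plan is to mirror the strategy used in the proof of Theorem \ref{bosth}, leveraging equation (\ref{Bose1}) which tells us $d_B(\mathcal{C}_{(q,m,\delta)})$ is the smallest coset leader in $[\delta,n-1]$, together with Theorem \ref{th2}, which for even $m$ asserts $(\mathcal{S}\cup \mathcal{H})\cap [q^{h+k},q^{h+k+1})=\bigsqcup_{i=-k}^{k}\mathcal{B}_k(i)$. Since an integer not divisible by $q$ fails to be a coset leader precisely when it lies in $\mathcal{S}$, I only need to locate the first integer in $[\delta,n-1]$ that is neither divisible by $q$ nor in $\mathcal{S}$, which reduces to testing membership in $\mathcal{B}_{j_\delta}(r_\delta)$ via the inequalities from Remarks \ref{rm4} and \ref{rm6}. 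A key preliminary observation I will establish is that any element of $\mathcal{H}$ must have its associated smallest index equal to $0$: by Corollary \ref{corr}, any $a\in \mathcal{H}$ has $V(a)$ of mirror form with $a_0>0$, which forces the corresponding $h$-digit to be positive, so the smallest $i$ with $a_{h+i}>0$ is $0$.

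For the two branches with $r_\delta\neq 0$, the argument carries over from the odd case almost verbatim. When the strict inequality $\sum_{\ell=0}^{h-j_\delta-1}\delta_\ell q^\ell > \sum_{\ell=h-r_\delta}^{h+j_\delta}\delta_\ell q^{\ell-(h-r_\delta)}$ holds, Remark \ref{rm6} combined with Remark \ref{cor2} forces $\delta\notin \bigsqcup_i \mathcal{B}_{j_\delta}(i)$, so $\delta\notin \mathcal{S}\cup \mathcal{H}$. Combined with the preliminary observation, this makes $\delta$ a coset leader whenever $q\nmid \delta$, and $\delta+1$ the next coset leader whenever $q\mid \delta$. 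For the opposite inequality, every integer in $[\delta,\hat\delta]$ shares the same top digits as $\delta$ (since $\hat\delta-\delta<q^{j_\delta+1}\leq q^h$), and each such integer not divisible by $q$ then satisfies the $\mathcal{B}_{j_\delta}(r_\delta)$ conditions. Because $r_\delta\neq 0$ rules out membership in $\mathcal{H}$, all such integers lie in $\mathcal{S}$. A short verification that $\hat\delta+1$ (or $\hat\delta+2$ when $\delta_{h-r_\delta}=q-1$, which forces $q\mid \hat\delta+1$) now satisfies the strict inequality produces the next coset leader.

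The case $r_\delta=0$ is the genuinely new ingredient. Writing $\delta=Xq^h+Y$ with $X=\sum_{\ell=0}^{j_\delta}\delta_{h+\ell}q^\ell$ and $Y=\sum_{\ell=0}^{h-1}\delta_\ell q^\ell$, the hypothesis $Y\leq X$ yields $\hat\delta=X(q^h+1)$, and every $\delta'\in [\delta,\hat\delta]$ takes the form $Xq^h+Y'$ with $Y'\in [Y,X]$. For $Y'<X$ with $q\nmid Y'$, the pair $(X,Y')$ satisfies the $\mathcal{B}_{j_\delta}(0)$-inequalities but fails the mirror equality demanded by Corollary \ref{corr}, so $\delta'\in \mathcal{S}$; when $q\mid Y'$ we have $q\mid\delta'$. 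At $Y'=X$, Corollary \ref{corr} applies directly (using $\delta_h>0$ and $\delta_{h+j_\delta}>0$) to place $\hat\delta\in \mathcal{H}$, making it the first coset leader at or above $\delta$ and giving $d_B=\hat\delta$. When instead $Y>X$ (and accounting for the possibility that $q\mid\delta$ via a shift to $\delta+1$), the same reasoning as in the $r_\delta\neq 0$ branch shows $\delta$ or $\delta+1$ is a coset leader, completing equation (\ref{sh2}).

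The main obstacle I anticipate lies in the $r_\delta=0$ branch: I must rule out the possibility that some $\delta'\in[\delta,\hat\delta-1]$ accidentally satisfies the mirror form of Corollary \ref{corr} and thereby becomes a coset leader strictly before $\hat\delta$. The inequality $X<q^{j_\delta+1}\leq q^h$ is what guarantees that the top digits stay frozen across the interval, so the only $\delta'$ achieving the required mirror equality is $\hat\delta$ itself. The remaining verifications are routine $q$-adic bookkeeping in the spirit of the proof of Theorem \ref{bosth}, exchanging the role of $\mathcal{A}_k(i)$ with $\mathcal{B}_k(i)$ throughout.
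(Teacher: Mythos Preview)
Your proposal is correct and follows essentially the same route as the paper's proof: you defer the $r_\delta\neq 0$ branches to the argument of Theorem~\ref{bosth} (with $\mathcal{B}_k(i)$ replacing $\mathcal{A}_k(i)$), and for $r_\delta=0$ you split according to whether $Y\gtrless X$, using Corollary~\ref{corr} to recognise $\hat\delta=X(q^h+1)\in\mathcal{H}$ as the first coset leader when $Y\leq X$. The paper organizes the $r_\delta=0$ case into the same three subcases and invokes the same tools. One small remark: your justification that ``the top digits stay frozen since $\hat\delta-\delta<q^{j_\delta+1}$'' is a bit loose in the $r_\delta\neq 0$ branch (the correct bound is $q^{j_\delta+r_\delta+1}$, and one still needs the vanishing of the middle digits coming from the hypothesis, exactly as in Case~2 of the proof of Theorem~\ref{bosth}); but this is cosmetic and the argument goes through as you describe.
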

\begin{proof}
First, by an argument similar to that in the proof of Theorem \ref{bosth}, we conclude that \eqref{sh3} holds if 
 $r_{\delta}\neq 0$ and $q\nmid \delta$, and that \eqref{sh4} holds 
if $r_{\delta}\neq 0$ and $q\mid \delta$.
Next, we will show that \eqref{sh2} holds 
if   $r_{\delta}=0$  by considering the following cases.

\textbf{Case 1.} Suppose that  $\sum\limits_{\ell=0}^{h-j_{\delta}-1}\delta_{\ell}q^{\ell}>\sum\limits_{\ell=h}^{h+j_{\delta}}\delta_{\ell}q^{\ell-h}$ and $q\nmid \delta$. 
Then by applying Remarks \ \ref{rm6} and \ref{cor2}, we see that 
$\delta\not \in \mathcal{B}_{j_{\delta}}(i) $ for any $i\in [-j_{\delta}, j_{\delta}].$ Hence $a\not \in \mathcal{S}\cup \mathcal{H}$. Therefore, we can conclude $\delta$ is a coset leader. 
 Consequently, we have $d_B(\mathcal{C}_{(q,m,\delta)})=\delta$ in this case.

\textbf{Case 2.}  Suppose that  $\sum\limits_{\ell=0}^{h-j_{\delta}-1}\delta_{\ell}q^{\ell}> \sum\limits_{\ell=h}^{h+j_{\delta}}\delta_{\ell}q^{\ell-h}$   and  $q\mid \delta$.  Then  the assumption that $ q\mid\delta$ implies that $\delta$  is not a coset leader. Furthermore, by an argument similar to that in   Case 1, we conclude that 
$\delta+1$ is a coset leader. 
 Therefore, $d_B(\mathcal{C}_{(q,m,\delta)})=\delta+1$. 

\textbf{Case 3.} Suppose that $\sum\limits_{\ell=0}^{h-j_{\delta}-1}\delta_{\ell}q^{\ell-h}\leq \sum\limits_{\ell=h}^{h+j_{\delta}}\delta_{\ell}q^{\ell-h}.$ Then we distinguish the following two subcases:

\textit{Subcase 1.} If $\sum\limits_{\ell=0}^{h-j_{\delta}-1}\delta_{\ell}q^{\ell}=\sum\limits_{\ell=h}^{h+j_{\delta}}\delta_{\ell}q^{\ell-h},$ then we have 
\begin{equation}\notag
    \delta_{\ell}=0 \quad \hbox{for }\ell= j_{\delta}+1,\ldots, h-j_{\delta}-1 
\end{equation}
and 
\begin{equation}\notag
    \delta_{\ell}=\delta_{\ell+h} \quad \hbox{for }\ell=0,\ldots, j_{\delta}.
\end{equation}
It follows that $\hat{\delta}=\delta$,  and
$V(\delta)$ has the form 
\begin{equation}\notag
    (\mathbf{0}_{h-j_{\delta}-1},\delta_{j_{\delta}},
    \ldots,\delta_0 , \mathbf{0}_{h-j_{\delta}-1},\delta_{j_{\delta}},\ldots,\delta_0 )
\end{equation}
with $\delta_0>0$ and $\delta_{j_{\delta}}>0$. By applying Corollary \ref{corr}, we can conclude that $\delta\in \mathcal{H}$, which implies that $\delta$ is a coset leader. Therefore, we have $d_B(\delta)=\delta=\hat{\delta}$. 

\textit{Subcase 2.} If 
$\sum\limits_{\ell=0}^{h-j_{\delta}-1}\delta_{\ell}q^{\ell-h}<\sum\limits_{\ell=h}^{h+j_{\delta}}\delta_{\ell}q^{\ell-h}.$ 
By an argument similar to that in Case~1 of the proof of Theorem~\ref{bosth},
we first conclude that any integer $\delta^{'} \in \left[\delta, \hat{\delta}\right)$ is not a coset leader. Note  that $\delta_{h}>0$ and  $\hat{\delta}= 
\sum\limits_{\ell=h}^{h+j_{\delta}}\delta_{\ell}q^{\ell}+ \sum\limits_{\ell=h}^{h+j_{\delta}}\delta_{\ell}q^{\ell-h}$ when $r_{\delta}=0$. Applying Corollary \ref{corr}, one can easily verify that $\hat{\delta}\in \mathcal{H}$. This implies that $\hat{\delta}$ is a coset leader. Consequently, we have $d_B(\mathcal{C}_{(q,m,\delta)})=\hat{\delta}$. 
\end{proof}

 An $[n,k,d]$  code over $\mathbb{F}_q$ is called \emph{optimal} if there is no $[n,k',d]$ linear code with $k' >k$, or   $[n,k,d']$  linear code with $d' > d$   over $\mathbb{F}_q$.
We provide examples of narrow-sense BCH codes discussed in this paper in Tables \ref{table2} and \ref{table3}. The dimensions 
$k$
 and Bose distances $d_B$
  of the BCH codes presented in these tables are calculated using the formulas from Theorems~\ref{odda}, \ref{bosth0}, \ref{bosth}, and \ref{odd2}, with all parameters verified using Magma.
We compare these codes with the Database and find that they are either optimal or match the parameters of the best-known linear codes.  We also utilized Magma to calculate the minimum distance of the codes. For codes whose minimum distance could not be determined by Magma within a reasonable time, we denote this with ''--". Notably, the Bose distance and minimum distance of these codes are equal.

\begin{table}[htbp]
\centering
\caption{Example of binary BCH code $\mathcal{C}_{(q,m,\delta)}$  for $m\geq 4$ and $\delta\in [2,q^{\lfloor (2m-1)/3\rfloor+1}]$}
\label{table2}
\centering
\begin{tabular}{|c|c|c|c|c|c|c|c|}
\hline
$q$ & $m$ & $\delta$ & $n$ & $k$ & $d_B$ & $d$ & Optimality \\ \hline
% 第一个表格的内容（这里只显示前几行作为示例）
2	&	4	&	2 -- 3	&	15	&	11	&	3	&3	&Optimal	\\ \hline
2	&	4	&	4 -- 5	&	15	&	7	&	5	&5	&Optimal	\\ \hline
2	&	4	&	6 --7	&	15	&	5	&	7	&7	&Optimal	\\ \hline
2	&	5	&	2 -- 3	&	31	&	26	&	3	&3	&Optimal	\\ \hline
2	&	5	&	4 -- 5	&	31	&	21	&	5	&	5&Optimal	\\ \hline
2	&	5	&	6 --7	&	31	&	16	&	7	&	7&Optimal	\\ \hline
2	&	5	&	8 -- 11	&	31	&	11	&	11	&	11&Optimal	\\ \hline
2	&	5	&	12 -- 15	&	31	&	6	&	15	&	15&Optimal	\\ \hline
2	&	6	&	2 -- 3	&	63	&	57	&	3	&	3&Optimal	\\ \hline
2	&	6	&	4 -- 5	&	63	&	51	&	5	&	5&Optimal	\\ \hline
2	&	6	&	8 -- 9	&	63	&	39	&	9	&	9&Best Known	\\ \hline
2	&	6	&	10 -- 11	&	63	&	36	&	11	&	11&Best Known	\\ \hline
2	&	6	&	12 -- 13	&	63	&	30	&	13	&13	&Best Known	\\ \hline
2	&	7	&	2 -- 3	&	127	&	120	&	3	&	3&Optimal	\\ \hline
2	&	7	&	4 -- 5	&	127	&	113	&	5	&	5&Optimal	\\ \hline
2	&	7	&	6 -- 7	&	127	&	106	&	7	&	7&Best Known	\\ \hline
2	&	7	&	8 -- 9	&	127	&	99	&	9	&	9&Best Known	\\ \hline
2	&	7	&	10 -- 11	&	127	&	92	&	11	&	11&Best Known	\\ \hline
2	&	7	&	12 -- 13	&	127	&	85	&	13	&	13&Best Known	\\ \hline
2	&	7	&	14 -- 15	&	127	&	78	&	15	&15	&Best Known	\\ \hline
2	&	7	&	16 -- 19	&	127	&	71	&	19	&	19&Best Known	\\ \hline
2	&	7	&	20 -- 21	&	127	&	64	&	21	&	21&Best Known	\\ \hline
2	&	7	&	24 -- 27	&	127	&	50	&	27	&27&Best Known	\\ \hline
2	&	8	&	2 -- 3	&	254	&	247	&	3	&	3&Optimal	\\ \hline
2	&	8	&	5	&	254	&	239	&	5	&	5&Optimal	\\ \hline
2	&	8	&	6 --   7	&	254	&	231	&	7	&	7&Best Known	\\ \hline
2	&	8	&	8 -- 9	&	254	&	223	&	9	&	9&Best Known	\\ \hline
2	&	8	&	10 -- 11	&	254	&	215	&	11	&11	&Best Known	\\ \hline
2	&	8	&	12 -- 13	&	254	&	207	&	13	&	13&Best Known	\\ \hline
2	&	8	&	14 -- 15	&	254	&	199	&	15	&	15&Best Known	\\ \hline
2	&	8	&	16 -- 17	&	254	&	191	&	17	&	17&Best Known	\\ \hline
2	&	8	&	18 -- 19	&	254	&	187	&	19	&	19&Best Known	\\ \hline
2	&	8	&	20 -- 21	&	254	&	179	&	21	&	21&Best Known	\\ \hline
2	&	8	&	22 -- 23	&	254	&	171	&	23	&	23&Best Known	\\ \hline
2	&	8	&	24 -- 25	&	254	&	163	&	25	&	25&Best Known	\\ \hline
2	&	8	&	26 -- 27	&	254	&	155	&	27	&	27&Best Known	\\ \hline
2	&	8	&	28 -- 29	&	254	&	147	&	29	&	29&Best Known	\\ \hline
2	&	8	&	30 -- 31	&	254	&	139	&	31	&31	&Best Known	\\ \hline
2	&	8	&	32 -- 37	&	254	&	131	&	37	&	37&Best Known	\\ \hline
2	&	8	&	38 -- 39	&	254	&	123	&	39	&	39&Best Known	\\ \hline
2	&	8	&	40 -- 43	&	254	&	115	&	43	&43	&Best Known	\\ \hline
2	&	8	&	44 -- 45	&	254	&	107	&	45	&	45&Best Known	\\ \hline
2	&	8	&	46 -- 47	&	254	&	99	&	47	&47	&Best Known	\\ \hline
2	&	8	&	48 -- 51	&	254	&	91	&	51	&	51&Best Known	\\ \hline
2	&	8	&	52 -- 53	&	254	&	87	&	53	&	53&Best Known	\\ \hline
\end{tabular} 
\end{table}

\begin{table}[htbp]
\centering
\caption{Example of nonbinary BCH code $\mathcal{C}_{(q,m,\delta)}$ for $m\geq 4$ and $\delta\in [2,q^{\lfloor (2m-1)/3\rfloor+1}]$}
\label{table3}
\centering
\begin{tabular}{|c|c|c|c|c|c|c|c|}
\hline
$q$ & $m$ & $\delta$ & $n$ & $k$ & $d_B$& $d$ & Optimality \\ \hline
% 第二个表格的内容（这里只显示前几行作为示例）
3	&	4	&	2	&	80	&	76	&	2	&	2	&	Optimal	\\ \hline
3	&	4	&	3 -- 4	&	80	&	72	&	4	&	4	&	Best Known	\\ \hline
3	&	4	&	8	&	80	&	60	&	8	&	8	&	Best Known	\\ \hline
3	&	4	&	9 -- 10	&	80	&	56	&	10	&	10	&	Best Known	\\ \hline
3	&	4	&	11	&	80	&	54	&	11	&	11	&	Best Known	\\ \hline
3	&	4	&	12 -- 13	&	80	&	50	&	13	&	13	&	Best Known	\\ \hline
3	&	4	&	14	&	80	&	46	&	14	&	14	&	Best Known	\\ \hline
3	&	5	&	2	&	242	&	237	&	2	&	2	&	Optimal	\\ \hline
3	&	5	&	3 -- 4	&	242	&	232	&	4	&	4	&	Optimal	\\ \hline
3	&	5	&	5	&	242	&	227	&	5	&	5	&	Best Known	\\ \hline
3	&	5	&	6 -- 7	&	242	&	222	&	7	&	7	&	Best Known	\\ \hline
3	&	5	&	8	&	242	&	217	&	8	&	8	&	Best Known	\\ \hline
3	&	5	&	9 -- 10	&	242	&	212	&	10	&	10	&	Best Known	\\ \hline
3	&	5	&	11	&	242	&	207	&	11	&	11	&	Best Known	\\ \hline
3	&	5	&	12 -- 13	&	242	&	202	&	13	&	13	&	Best Known	\\ \hline
3	&	5	&	14	&	242	&	197	&	14	&	14	&	Best Known	\\ \hline
3	&	5	&	15 -- 16	&	242	&	192	&	16	&	16	&	Best Known	\\ \hline
3	&	5	&	17	&	242	&	187	&	17	&	17	&	Best Known	\\ \hline
3	&	5	&	18 -- 19	&	242	&	182	&	19	&	 --	&	Best Known	\\ \hline
3	&	5	&	20	&	242	&	177	&	20	&	 --	&	Best Known	\\ \hline
3	&	5	&	21 -- 22	&	242	&	172	&	22	&	 --	&	Best Known	\\ \hline
3	&	5	&	23	&	242	&	167	&	23	&	 --	&	Best Known	\\ \hline
3	&	5	&	24 -- 25	&	242	&	162	&	25	&	 --	&	Best Known	\\ \hline
3	&	5	&	26	&	242	&	157	&	26	&	 --	&	Best Known	\\ \hline
3	&	5	&	27 -- 31	&	242	&	152	&	31	&	 --	&	Best Known	\\ \hline
3	&	5	&	32	&	242	&	147	&	32	&	 --	&	Best Known	\\ \hline
3	&	5	&	33 -- 34	&	242	&	142	&	34	&	 --	&	Best Known	\\ \hline
3	&	5	&	35	&	242	&	137	&	35	&	 --	&	Best Known	\\ \hline
3	&	5	&	36 -- 38	&	242	&	132	&	38	&	 --	&	Best Known	\\ \hline
3	&	5	&	39 -- 40	&	242	&	127	&	40	&	 --	&	Best Known	\\ \hline
3	&	5	&	41	&	242	&	122	&	41	&	 --	&	Best Known	\\ \hline
3	&	5	&	42 -- 43	&	242	&	117	&	43	&	 --	&	Best Known	\\ \hline
4	&	4	&	2	&	255	&	251	&	2	&	2	&	Best Known	\\ \hline
4	&	4	&	4 -- 5	&	255	&	243	&	5	&	5	&	Best Known	\\ \hline
4	&	4	&	6	&	255	&	239	&	6	&	6	&	Best Known	\\ \hline
4	&	4	&	8 -- 9	&	255	&	231	&	9	&	9	&	Best Known	\\ \hline
4	&	4	&	10	&	255	&	227	&	10	&	10	&	Best Known	\\ \hline
4	&	4	&	11	&	255	&	223	&	11	&	11	&	Best Known	\\ \hline
4	&	4	&	12 -- 13	&	255	&	219	&	13	&	13	&	Best Known	\\ \hline
4	&	4	&	14	&	255	&	215	&	14	&	 14	&	Best Known	\\ \hline
4	&	4	&	15	&	255	&	211	&	15	&	15	&	Best Known	\\ \hline
4	&	4	&	16 -- 17	&	255	&	207	&	17	&	 17	&	Best Known	\\ \hline
4	&	4	&	18	&	255	&	205	&	18	&	 --	&	Best Known	\\ \hline
4	&	4	&	19	&	255	&	201	&	19	&	 --	&	Best Known	\\ \hline
4	&	4	&	20 -- 21	&	255	&	197	&	21	&	 --	&	Best Known	\\ \hline
4	&	4	&	22	&	255	&	193	&	22	&	 --	&	Best Known	\\ \hline
4	&	4	&	23	&	255	&	189	&	23	&	 --	&	Best Known	\\ \hline
4	&	4	&	24 -- 25	&	255	&	185	&	25	&	 --	&	Best Known	\\ \hline
4	&	4	&	25	&	255	&	185	&	25	&	 --	&	Best Known	\\ \hline
4	&	4	&	26	&	255	&	181	&	26	&	 --	&	Best Known	\\ \hline
4	&	4	&	27	&	255	&	177	&	27	&	 --	&	Best Known	\\ \hline
4	&	4	&	28 -- 29	&	255	&	173	&	29	&	 --	&	Best Known	\\ \hline
4	&	4	&	30	&	255	&	169	&	30	&	 --	&	Best Known	\\ \hline
4	&	4	&	32 -- 34	&	255	&	161	&	34	&	 --	&	Best Known	\\ \hline
4	&	4	&	35	&	255	&	159	&	35	&	 --	&	Best Known	\\ \hline
4	&	4	&	36 -- 37	&	255	&	155	&	37	&	 --	&	Best Known	\\ \hline
\end{tabular}
\end{table}

\section{Dimension and Bose distance of  \texorpdfstring{$\mathcal{C}_{(q,m,\delta)} $ }\  with  \texorpdfstring{$\delta=aq^{h+k}+b $}. }
As a special case of our main results in the above sections,  we give the dimension and Bose distance of $\mathcal{C}_{(q,m,\delta)}$ for $m\geq 4$ and $\delta=a q^{h+k}+b$ with $k\in [m-2h, \lfloor(2m-1)/3\rfloor-h ]$, $a\in [1,q-1]$ and $b\in \left[1,q^{m-h-k}\right]$ in the following corollaries. 
\begin{Corollary}\label{example1}
   Let $m\geq 5$ be an odd integer and  
$ k\in [1,  \lfloor (2m-1)/3\rfloor-h]$ be  an integer. If 
 $\delta=a q^{h+k}+b$ for some integers  $a \in [1, q-1]$  and   $b\in [1, q^{h-k+1}]$, 
 then   
  \begin{equation}\label{cor51}
    d_B(\mathcal{C}_{(q,m,\delta)})=
    \begin{cases}
     \delta,& \hbox{if }  b> a q^{2k-1},\\
 aq^{h+k}+aq^{2k-1}+1, 
 &\hbox{if }   b\leq a q^{2k-1},
    \end{cases}
    \end{equation}
  and 
 \begin{multline}
     \label{cor52}
    \mathrm{dim}(\mathcal{C}_{(q,m,\delta)})= \\
    \begin{cases}
      \left(\begin{aligned}    
     &n-mN(\delta) \\
     &+  ma^2(q-1)q^{2k-3}\left[(q-1)k+1\right] \end{aligned}\right)
     , &\hspace{-7pt}\hbox{if }  b> a q^{2k-1},\\
 \left( \begin{aligned}    
&n-m a q^{h+k-1}(q-1)\\
&   - ma(q-1)q^{2k-2} \\
&+ ma^2 (q-1) q^{2k-3}\left[ (q-1)k+1
 \right] \end{aligned}\right), 
 &\hspace{-7pt}\hbox{if }   b\leq a q^{2k-1}.
    \end{cases}
   \end{multline}  
\end{Corollary}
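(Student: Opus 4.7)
The proof is a direct specialisation of Theorem \ref{bosth} (for the Bose distance) and Theorem \ref{odda} (for the dimension) to $\delta = aq^{h+k}+b$, with Lemma \ref{lemma10} used to evaluate the combinatorial sums. My plan proceeds in three steps after a preparatory bookkeeping stage.

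First I would read off the $q$-adic expansion of $\delta$ (used by Theorem \ref{bosth}) and of $\delta-1$ (used by Theorem \ref{odda}). The bounds $1 \leq b \leq q^{h-k+1}$ together with $1 \leq k \leq \lfloor(2m-1)/3\rfloor - h$ force the positions $h-k+1,\ldots,h+k-1$ to carry the digit $0$, while $\delta_{h+k}=a$ and the block $\delta_0,\ldots,\delta_{h-k}$ encodes $b$ (respectively $b-1$). This immediately gives $j_\delta=k_\delta=k$ and $r_\delta=s_\delta=k$, and reduces the two comparison quantities in Theorem \ref{bosth} to $b$ versus $aq^{2k-1}$ (with the corresponding reduction to $b-1$ versus $aq^{2k-1}$ in Theorem \ref{odda}).

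For the Bose distance, plugging these data into Theorem \ref{bosth} gives $d_B=\delta$ when $b>aq^{2k-1}$, and since $\delta_{h-k+1}=0\neq q-1$, gives $d_B=\hat\delta+1=aq^{h+k}+aq^{2k-1}+1$ when $b\leq aq^{2k-1}$. Evaluating at the boundary $b=aq^{2k-1}+1$ confirms $\hat\delta+1=\delta$, so the two branches of \eqref{cor51} glue consistently and establish the Bose-distance formula.

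For the dimension I would apply Theorem \ref{odda}. Here $\mu(\delta)=\min\{b-1,aq^{2k-1}\}$ and $\mathcal{T}_i(\delta)=\{t:q^{k-i}\leq t<aq^{k-i},\; q\nmid t\}$ for $i\in[-k+1,k]$. Lemma \ref{lemma10} assigns each interior index $i\in[-k+2,k-1]$ the contribution $\tfrac{1}{2}(a^2-1)(q-1)^2 q^{2k-3}$ and each endpoint $i\in\{-k+1,k\}$ the contribution $\tfrac{1}{2}a(a-1)(q-1)q^{2k-2}$. Aggregating over all $2k$ indices and absorbing the prefactor $(k-1)q^{2k-3}(q-1)^2$ from \eqref{ff} yields
\begin{equation*}
f(\delta)=(k-1)a^2(q-1)^2q^{2k-3}+a(a-1)(q-1)q^{2k-2}+N(\mu(\delta)+1).
\end{equation*}
In the branch $b>aq^{2k-1}$, one has $N(\mu(\delta)+1)=aq^{2k-2}(q-1)$ and the three terms collapse algebraically to $a^2(q-1)q^{2k-3}[(q-1)k+1]$, giving the first formula of \eqref{cor52}. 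In the branch $b\leq aq^{2k-1}$ I would invoke the equality $\mathcal{C}_{(q,m,\delta)}=\mathcal{C}_{(q,m,aq^{h+k}+aq^{2k-1}+1)}$ established in the Bose-distance step and reuse the first-branch formula at $b=aq^{2k-1}+1$; the clean factorisation $N(aq^{h+k}+aq^{2k-1}+1)=aq^{h+k-1}(q-1)+aq^{2k-2}(q-1)$ then produces the two subtracted terms of the second formula of \eqref{cor52}.

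The main obstacle is notational rather than conceptual: one must carefully distinguish the digits of $\delta$ (for Theorem \ref{bosth}) from those of $\delta-1$ (for Theorem \ref{odda}), and the outputs of Lemma \ref{lemma10} must be aggregated across $2k$ indices before the $[(q-1)k+1]$ simplification becomes visible. A small edge case occurs at $b=q^{h-k+1}$, where the expansion of $\delta$ puts a nonzero digit at position $h-k+1$ and alters $r_\delta$; this must be handled separately (or reduced to the generic case via the Bose-distance shift) so that the stated formulas extend over the full range $b\in[1,q^{h-k+1}]$.
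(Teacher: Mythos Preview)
Your approach is correct and matches the paper's proof for the Bose distance and for the first branch of the dimension formula (both compute $s_\delta=k_\delta=k$, identify $\mu(\delta)$ and $\mathcal{T}_i(\delta)$, and invoke Lemma~\ref{lemma10} to evaluate the double sum).

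The one genuine difference is in the second branch of \eqref{cor52}. The paper keeps going with Theorem~\ref{odda}: when $b\le aq^{2k-1}$ it has $\mu(\delta)=b-1$, so $N(\mu(\delta)+1)=N(b)$, and after the same Lemma~\ref{lemma10} aggregation obtains
\[
f(\delta)=a(q-1)q^{2k-3}\bigl[ak(q-1)+a-q\bigr]+N(b),
\]
and then uses the identity $N(\delta)-N(b)=aq^{h+k-1}(q-1)$ to reach the stated dimension. Your route instead exploits the Bose-distance result just proved: since $\mathcal{C}_{(q,m,\delta)}=\mathcal{C}_{(q,m,d_B)}$ with $d_B=aq^{h+k}+aq^{2k-1}+1$, and this $d_B$ lies in the first branch (with $b'=aq^{2k-1}+1\le q^{h-k+1}$ by Remark~\ref{rr2}), you simply evaluate the already-established first-branch formula there and read off $N(d_B)=aq^{h+k-1}(q-1)+aq^{2k-2}(q-1)$. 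This is a perfectly valid shortcut that avoids re-computing $f(\delta)$ in the second case; the paper's direct computation has the minor advantage of giving an explicit closed form for $f(\delta)$ itself in that regime, but both arrive at \eqref{cor52} with the same amount of work overall.
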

\begin{proof}
By applying Theorem \ref{bosth}, we can directly derive equation (\ref{cor51}). Therefore, we only demonstrate that (\ref{cor52}) holds.
It is clear that 
$V(\delta-1)$ has the form 
\begin{equation}\notag
    (\mathbf{0}_{h-k}, a, \mathbf{0}_{2k_{}-1},\delta_{h-k},\ldots,\delta_0)
\end{equation}
with $\sum\limits_{\ell=0}^{h-k}\delta_{\ell}q^{\ell}=b-1$. 
By definition,  it follows that 
\begin{itemize}
    \item $s_{\delta}=k_{\delta}=k$; 
    \item $\mu(\delta)=\min\left\{b-1, a q^{2k-1}\right\}$;
    \item $\mathcal{T}_i(\delta)=\left[q^{k-i},a q^{k-i}-1\right]\cap \{t\in \mathbb{Z}: q\nmid t\}$ for each integer $i\in [-k+1,k]$.
\end{itemize}
Then applying Lemma  \ref{lemma10}, we can   obtain  
\begin{equation}\notag
    \sum\limits_{t\in \mathcal{T}_i(\delta)} N(tq^{2i-1}+1)= 
    \begin{cases}
        \frac{1}{2}(a^2-1)(q-1)^2q^{2k-3},\\ \quad \quad  \hbox{if }i\in [ -k+2,  k-1],\\
        \frac{1}{2} a(a-1)(q-1)q^{2k-2},\\ 
       \quad  \quad \hbox{if }i=k \hbox{ or }-k+1.
        \end{cases}       
\end{equation}
It follows that 

\begin{IEEEeqnarray}{rCl}
\IEEEeqnarraymulticol{3}{l}{\sum\limits_{i=-k_{\delta}+1}^{k_{\delta}}\sum\limits_{t\in \mathcal{T}_i(\delta)}N(tq^{2i-1}+1)} \nonumber\\ \quad 
& = & \sum\limits_{i=-k+1}^{k}\sum\limits_{t\in \mathcal{T}_i(\delta)}N(tq^{2i-1}+1) \nonumber \\
& =& a (a-1)(q-1)q^{2k-2}+(a^2-1)(k-1)(q-1)^2q^{2k-3}. \nonumber
\end{IEEEeqnarray}
By substituting the values of $k_{\delta}$, $\mu(\delta)$ and the above expression into (\ref{ff}), we obtain 
\begin{equation}\notag
     f(\delta)=
     \begin{cases}        
  a^2(q-1)q^{2k-3}\left[k(q-1)+1\right],\hbox{ if }b> a q^{2k-1},\\
  a (q-1) q^{2k-3}\left[ ak(q-1)+a-q 
 \right]+N(b),\\
 \quad \quad \hbox{if } b\leq a q^{2k-1}.
     \end{cases}
 \end{equation}
Finally,  noticing that $N(\delta)-N(b)=a q^{h+k-1}(q-1)$,  we can apply Theorem  \ref{odda} to derive the result.  
\end{proof}

\begin{Corollary}\label{example2}
 Let $m\geq 4$ be an even intger and  $0 \leq k\leq \lfloor (2m-1)/3\rfloor-h$ be an integer. If 
 $\delta=a q^{h+k}+b$ for some integers  $a \in [1, q-1]$  and   $b\in [1, q^{h-k}]$, 
 then 
 \begin{equation}\label{cor61}
   d_B(\mathcal{C}_{(q,m,\delta)})=
    \begin{cases}
      aq^h+a ,& \hbox{if }  k=0, b\leq  a, \\
\delta,
 &\hbox{if } k=0,  b>a,\\
  \delta, & \hbox{if }k\geq 1, b> aq^{2k},\\
       aq^{h+k}+aq^{2k}+1,&\hbox{if }k\geq 1, b\leq  aq^{2k}.
    \end{cases}
    \end{equation}   
and 
\begin{equation}\label{cor62}
\begin{split}
   &\mathrm{dim}(\mathcal{C}_{(q,m,\delta)})= \\
    &\begin{cases}
      n-maq^{h-1}(q-1) +\frac{1}{2}m(a-1)^2 ,& \hbox{if }  k=0,  b\leq  a, \\
n-m N(\delta)+\frac{1}{2}ma^2,
 &\hbox{if } k=0,  b>a,\\
\left(\begin{aligned} 
 & ma^2(q-1)^2q^{2k-2} \left(k-\textstyle\frac{1}{2}\right) \\
 &+ ma^2(q-1)q^{2k-1} \\&+ n-mN(\delta) 
  \end{aligned}\right),& \hbox{if }k\geq 1, b> aq^{2k}\\
     \left(\begin{aligned}       
     &  m a^2(q-1)^2q^{2k-2}\left(k-\textstyle\frac{1}{2}\right)  \\
     &+ma(a-1)(q-1)q^{2k-1}\\&+ n-maq^{h+k-1}(q-1) \end{aligned}\right), &\hbox{if }k\geq 1, b\leq  aq^{2k}.  
    \end{cases}
    \end{split}
\end{equation}
\end{Corollary}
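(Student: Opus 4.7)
The plan is to parallel the argument used for Corollary~\ref{example1}, replacing Theorem~\ref{bosth} by Theorem~\ref{odd2} for the Bose distance and using the even branch of Theorem~\ref{odda} (which involves both $\widetilde{f}(\delta)$ and $g(\delta)$) for the dimension.

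For the Bose distance I would read off the $q$-adic expansion of $\delta$ itself. When $k=0$, the expansion gives $j_\delta=0$ and $r_\delta=0$, the two sums appearing in (\ref{sh2}) reduce to $b$ and $a$ respectively, and $\hat{\delta}=aq^h+a$; the three branches of (\ref{sh2}) then produce $d_B=aq^h+a$ when $b\leq a$ and $d_B=\delta$ when $b>a$. When $k\geq 1$, the middle block of zeros in the expansion of $\delta=aq^{h+k}+b$ forces $j_\delta=r_\delta=k\neq 0$, so Theorem~\ref{odd2} is applied through branches (\ref{sh3})--(\ref{sh4}): the two relevant sums simplify to $b$ and $aq^{2k}$, $\hat{\delta}=aq^{h+k}+aq^{2k}$, and $\delta_{h-r_\delta}=0\neq q-1$, yielding $d_B=\delta$ when $b>aq^{2k}$ and $d_B=\hat{\delta}+1=aq^{h+k}+aq^{2k}+1$ otherwise.

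For the dimension I would switch to the expansion of $\delta-1=aq^{h+k}+(b-1)$, which preserves the same middle block of zeros. This identifies $k_\delta=s_\delta=k$, $\widetilde{\mu}(\delta)=\min\{b-1,aq^{2k}\}$, and, when $k\geq 1$, $\tau(\delta)=0$ because $\delta_h=0$; hence $g(\delta)=N(aq^k)=aq^{k-1}(q-1)$. The sets $\mathcal{T}_i(\delta)$ then take the uniform shape $\{t\in\mathbb{Z}:q^{k-i}\leq t<aq^{k-i},\ q\nmid t\}$, so Lemma~\ref{lemma10} applies directly. Summing its three cases over the $2k+1$ values of $i$ (the boundary indices $i\in\{-k,k\}$, the interior indices $i\in[-k+1,k-1]\setminus\{0\}$, and the extra $i=0$ contribution) gives
\[
\sum_{i=-k}^{k}\sum_{t\in\mathcal{T}_i(\delta)}N(tq^{2i}+1) = a(a-1)(q-1)q^{2k-1} + \bigl(k-\tfrac{1}{2}\bigr)(a^2-1)(q-1)^2q^{2k-2} + \tfrac{1}{2}(a-1)(q-1)q^{k-1}.
\]
For $k=0$, I would instead use the middle branch of (\ref{f}) with $\widetilde{\mu}(\delta)=\min\{b-1,a\}$ and the observation that $\tau(\delta)=1$ exactly when $b>a$.

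The main obstacle is the algebraic consolidation. Once the above expressions are substituted into $\widetilde{f}(\delta)$, the terms in $q^{2k-2}(q-1)^2$ collapse to coefficient $(k-\tfrac{1}{2})a^2$ (after combining the leading $(k-\tfrac{1}{2})$ from (\ref{f}) with the $(a^2-1)$ factor from the $\mathcal{T}_i$ sum); the $q^{2k-1}(q-1)$ contributions add to $a^2$ through $a+a(a-1)$ after including $N(aq^{2k}+1)=aq^{2k-1}(q-1)$ in the branch $b>aq^{2k}$; and the $q^{k-1}(q-1)$ pieces collect to $\tfrac{a}{2}$, which exactly matches $\tfrac{1}{2}g(\delta)$ and cancels in the identity $\dim\mathcal{C}_{(q,m,\delta)}=n-m[N(\delta)-\widetilde{f}(\delta)]-\tfrac{m}{2}g(\delta)$ of Theorem~\ref{odda}. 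In the branch $b\leq aq^{2k}$, the identity $N(\delta)=aq^{h+k-1}(q-1)+N(b)$ absorbs $N(\widetilde{\mu}(\delta)+1)=N(b)$ and produces the leading term $n-maq^{h+k-1}(q-1)$ stated in (\ref{cor62}); in the other branch $N(\delta)$ is left intact. The $k=0$ case reduces to elementary arithmetic once $\widetilde{f}(\delta)$ and $g(\delta)$ are identified.
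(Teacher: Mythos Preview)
Your proposal is correct and follows essentially the same route as the paper: invoke Theorem~\ref{odd2} for the Bose distance (reading off $j_\delta$, $r_\delta$, and $\hat\delta$ from the $q$-adic expansion of $\delta$), and for the dimension identify $s_\delta=k_\delta=k$, $\widetilde\mu(\delta)$, $\tau(\delta)$, $g(\delta)$, and $\mathcal{T}_i(\delta)$ from the expansion of $\delta-1$, then feed the resulting sums into Lemma~\ref{lemma10} and Theorem~\ref{odda}. The paper records $\widetilde f(\delta)$ and $g(\delta)$ separately before invoking Theorem~\ref{odda}, whereas you describe directly how the $q^{2k-2}(q-1)^2$, $q^{2k-1}(q-1)$, and $q^{k-1}(q-1)$ pieces combine and cancel against $\tfrac12 g(\delta)$; this is the same computation organized slightly differently.
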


\begin{proof}
By applying Theorem \ref{odd2}, it is easy to derive equation (\ref{cor61}). Therefore, we only demonstrate that (\ref{cor62}) holds.
    If $k=0$, then we have $\delta-1 = a q^h+b-1$ with  $0\leq b-1<q^{h-1}$. It follows that $q^h<\delta\leq q^{h+1}$ and $\delta_h=a$.  Consequenlty,  we  have $s_{\delta}=k_{\delta}=0$, which implies 
    \begin{itemize}
        \item $\widetilde{\mu}(
    \delta)=\min\{b-1, a\}$;
    \item $\tau(\delta)=1$ if $a \leq b-1$, and $\tau(\delta)=0$ if otherwise. 
    \end{itemize}
Thus, by substituting $\delta_h=a$, and  the values of $\tau(\delta)$ and $\widetilde{\mu}(\delta)$ for corresponding  case into equations (\ref{f}) and (\ref{g}), we derive  \begin{equation}\notag
    \widetilde{f}(\delta)=\begin{cases}
    \frac{1}{2}a(a-1)+N(b),& \hbox{if } k=0 \hbox{ and }b\leq a,\\
   \frac{1}{2}a(a-1)+N(a+1),& \hbox{if }k=0 \hbox{ and } b> a, 
    \end{cases}
\end{equation}
and 
\begin{equation}\notag
g(\delta)=\begin{cases}
    a-1,& \hbox{if }k=0 \hbox{ and } b\leq a,\\
  a,& \hbox{if }k=0 \hbox{ and } b> a. 
    \end{cases}
\end{equation}

If $k\geq 1$, then we have  $k_{\delta}=k$ and $V(\delta -1)$ has the form 
\begin{equation}\notag
    (\mathbf{0}_{h-k-1}, a, \mathbf{0}_{2k}, \delta_{h-k-1}, \ldots,\delta_0)
\end{equation}with $\sum\limits_{\ell=0}^{h-k}\delta_{\ell}q^{\ell}=b-1$.
By definition, it follows that 
\begin{itemize}
    \item $s_{\delta}=k_{\delta}=k$;
    \item $\tau(\delta)=0$;
    \item $\widetilde{\mu}(\delta)=\min\left\{b-1, a q^{2k}\right\}$;
   \item $\mathcal{T}_i(\delta)=\left[q^{k-i},a q^{k-i}-1\right]\cap \{t\in \mathbb{Z}: q\nmid t\}$ for each integer $i\in [-k,k]$.
\end{itemize}
Applying Lemma \ref{lemma10}, we obtain 
\begin{equation}\notag
    \sum\limits_{t\in \mathcal{T}_i(\delta)} N(tq^{2i}+1)= 
    \begin{cases}
        \frac{1}{2}(a^2-1)(q-1)^2q^{2k-2},\\ \quad  \quad \hbox{ if }i\in [-k+1, ,k-1]\setminus \{0\},\\
        \frac{1}{2} a(a-1)(q-1)q^{2k-1}, \\ \quad \quad \hbox{ if }i=k \hbox{ or }-k,\\
 \left(\begin{aligned}   
& \textstyle\frac{1}{2}(a^2-1)(q-1)^2q^{2k-2} \\
&+ \textstyle\frac{1}{2}(a-1)(q-1)q^{k-1}\end{aligned}\right), 
\\ \quad \quad \hbox{ if }i=0.
        \end{cases}       
\end{equation}
It follows that 
\begin{IEEEeqnarray}{rCl}
\sum\limits_{i=-k_{\delta}}^{k_{\delta}}\sum\limits_{t\in \mathcal{T}_i(\delta)}N(tq^{2i}+1)&=&
    \sum\limits_{i=-k}^{k}\sum\limits_{t\in \mathcal{T}_i(\delta)}N(tq^{2i}+1) \nonumber\\
    &=&(a^2-1)(q-1)^2(k-\frac{1}{2})q^{2k-2} \nonumber\\
    && +\> a(a-1)(q-1)q^{2k-1} \nonumber \\
    && + \> \frac{1}{2}(a-1)q^{k-1}(q-1). \nonumber
\end{IEEEeqnarray}
By substituting the values of $k_{\delta}$, $\widetilde{\mu}(\delta)$ and the above expression into (\ref{f}) and (\ref{g}), we obtain 
\begin{equation}\notag
    \widetilde{f}(\delta)=\begin{cases}
       \left( \begin{aligned}     
     &   a^2(q-1)^2q^{2k-2}\left(k-\textstyle\frac{1}{2}\right)\\
     &+ a^2(q-1)q^{2k-1} \\
        &+\textstyle\frac{1}{2}aq^{k-1}(q-1)\end{aligned}\right),&\hspace{-8pt} \hbox{if } k\geq 1, b> aq^{2k},\\
  \left( \begin{aligned}       
 & a^2(q-1)^2q^{2k-2} (k-\textstyle\frac{1}{2}) \\
 &+a(a-1)(q-1)q^{2k-1} \\
  &+\textstyle\frac{1}{2}aq^{k-1}(q-1)+N(b) \end{aligned} \right),  &
 \hspace{-8pt} \hbox{if } k\geq 1, 
  b\leq  aq^{2k},
     \end{cases}
\end{equation}
and 
\begin{equation}\notag
    g(\delta)=aq^{k-1}(q-1)\quad \hbox{for }k\geq 1.
\end{equation}

Finally, by applying  Theorem \ref{odda}, we obtain the result.   
\end{proof}

\section{Conclusion}
In this paper, we investigate the dimension and Bose distance of primitive BCH codes. The main contributions of this work are summarized as follows:

\begin{itemize}
\item We construct an order-preserving bijective map $V$ from the set of all integers in $[0,q^m-1]$ to ${Z}_q^m$. This mapping allows each integer $a$ to be uniquely identified with a sequence $V(a) \in {Z}_q^m$. Using this approach, for $m \geq 4$ and each integer $k \in [m-2h, \lfloor (2m-1)/3 \rfloor - h]$, we partition the integers in $\mathcal{S} \cap [q^{h+k}, q^{h+k+1})$ (for odd $m$) and $(\mathcal{S} \cup \mathcal{H}) \cap [q^{h+k}, q^{h+k+1})$ (for even $m$) into disjoint classes via Theorem~\ref{th2}. Additionally, we determine the size of the $q$-cyclotomic coset $C_a$ for $a \in [1, q^{m-\lfloor m/3 \rfloor})$ when $m$ is odd, and for $a \in [1, q^{m-\lfloor m/4 \rfloor})$ when $m$ is even.

\item We determine the dimension of narrow-sense primitive BCH codes $\mathcal{C}_{(q,m,\delta)}$ for $m \geq 4$ and $2 \leq \delta \leq q^{\lfloor (2m-1)/3 \rfloor + 1}$ by establishing two concise formulas in Theorem~\ref{odda}, thereby extending the known dimension of $\mathcal{C}_{(q,m,\delta)}$ for $m\geq 4$ and $2\leq  \delta\leq q^{\lfloor (m+1)/2\rfloor+1}$. In Theorem~\ref{ext}, these findings are   extended to some non-narrow-sense BCH codes $\mathcal{C}_{(q,m,\delta,b)}$.

\item We establish the Bose distance of BCH codes $\mathcal{C}_{(q,m,\delta)}$ for $m \geq 4$ and $2 \leq \delta \leq q^{\lfloor (2m-1)/3 \rfloor + 1}$ through Theorems~\ref{bosth0}-\ref{odd2}. 

\item Applying our results, we compute the parameters of several BCH codes, which are either optimal or have the best-known parameters.

\item As an illustration of our main results, we present the dimension and Bose distance of $\mathcal{C}_{(q,m,\delta)}$ for $m \geq 4$ and $\delta = q^{h+k} + b$ with $k \in [m-2h, \lfloor (2m-1)/3 \rfloor - h]$, $a \in [1, q-1]$, and $b \in \left[1, q^{m-h-k}\right]$ in Corollaries \ref{example1} and \ref{example2}.
\end{itemize}

\appendices
\section{Proof of Lemma \ref{le11} }
\begin{proof}
By definition, we first have 
\begin{equation}\label{l6ee1}
\sum\limits_{t=q^{k}}^{q^{k+1}-1}N(t+1)=\sum\limits_{t=q^{k}}^{q^{k+1}-1}t-\sum\limits_{t=q^{k}}^{q^{k+1}-1}\lfloor t/q\rfloor.
\end{equation}
Through direct computation, we obtain 
\begin{equation}\label{l6ee2}
\sum\limits_{t=q^{k}}^{q^{k+1}-1}t=
\begin{cases}
\frac{1}{2}(q^{2}-q),&\hbox{ if } k=0,\\
\frac{1}{2}(q^{2k+2}-q^{2k}-q^{k+1}+q^{k}), &\hbox{ if } k\geq 1.
\end{cases}
\end{equation}

Next, we compute the value of $\sum\limits_{t=q^{k}}^{q^{k+1}-1}\lfloor t/q\rfloor.$ 
 If $k=0$,  then $\lfloor t/q\rfloor=0$ for all $t\in \left[q^k,q^{k+1}-1\right]$, and hence   
$\sum\limits_{t=q^k}^{q^{k+1}-1}\lfloor t/q\rfloor =0$.  
If  $k\geq 1,$  then 
$\left[q^{k}, q^{k+1}-1\right]$ can be partitioned as $$\left[q^{k}, q^{k+1}-1\right]=\bigsqcup\limits_{i=0}^{q^{k-1}(q-1)-1}
\left[q^{k}+i q, q^{k}+(i+1)q-1\right].$$  Moreover, for each integer $i\in \left[0,q^{k-1}(q-1)-1\right]$ and any integer $t\in \left[q^{k}+i q, q^{k}+(i+1)q-1\right]$, we have   $\lfloor t/q\rfloor =q^{k-1}+i$.  Therefore,  
  \begin{equation}\notag
\begin{split}\sum\limits_{t=q^{k}}^{q^{k+1}-1}\lfloor t/q\rfloor&=\sum\limits_{i=0}^{q^{k-1}(q-1)-1}\sum\limits_{t=q^{k}+i q}^{q^k+(i+1)q-1}(q^{k-1}+i)\\
     &=\frac{1}{2}\left(q^{2k+1}-q^{2k-1}-q^{k+1}+q^k\right)
     \end{split}
 \end{equation}
  for  $k\geq 1.$  
Combining (\ref{l6ee1}), (\ref{l6ee2}) and the value of $\sum\limits_{t=q^k}^{q^{k+1}-1}\lfloor t/q\rfloor$,  we obtain the desired equality.   
\end{proof}

\section{Proof of Lemma \ref{lemma10} }
\begin{proof}
The arguments to obtain the equations (\ref{wwwww}) and (\ref{wwwww2}) are similar, so we only demonstrate here that equation (\ref{wwwww}) holds. 

If  $1\leq i\leq k$, then  we have 
$N(tq^{2i-1}+1)=tq^{2i-2}(q-1)$.
It follows that 
\begin{multline}
  \sum\limits_{t=q^{k-i},q\nmid t}^{aq^{k-i}-1}N(tq^{2i-1}+1) \\= 
  \sum\limits_{t=q^{k-i}}^{aq^{k-i}-1}tq^{2i-2}(q-1)-\sum\limits_{t=\mathcal{W}_i}^{aq^{k-i}-1}tq^{2i-2}(q-1),\nonumber
\end{multline}
where $ \mathcal{W}_i= \left[q^{k-i},  q^{k-i+1}-1\right]\cap  \{t\in \mathbb{Z}: 
 q\mid t\}$. 

 If $i=k$, then $\mathcal{W}_i=\varnothing$. Consequently, \begin{equation}\notag
     \sum\limits_{t\in \mathcal{W}_i}tq^{2i-2}(q-1)=0\quad \hbox{for }i=k.
 \end{equation}
If $1\leq i\leq k-1$, then $$\mathcal{W}_i=\{q^{k-i}+\alpha q: \alpha=0,\ldots,a^{k-i-1}(a-1)-1\}.$$ Thus, 
\begin{IEEEeqnarray}{rCl}
  \IEEEeqnarraymulticol{3}{l}{\sum\limits_{t\in \mathcal{W}_i}tq^{2i-2}(q-1)} \nonumber \\ &=&\sum\limits_{\alpha=0}^{q^{k-i-1}(a-1)-1}(q^{k-i}+\alpha q)q^{2i-2}(q-1) \nonumber \\
    &=&\frac{1}{2}(a-1)(q-1)\left[(a+1)q^{2k-3}-q^{k+i-2}\right]. \nonumber 
\end{IEEEeqnarray}
    Furthermore, one can easily obtain 
    \begin{equation}\notag
        \sum\limits_{t=q^{k-i}}^{aq^{k-i}-1}tq^{2i-2}(q-1)=\begin{cases}
            \frac{1}{2}a(a-1)q^{2k-2}(q-1),\\ \quad \quad \hbox{if }i=k,\\
         \left(\begin{aligned}
                    &\textstyle\frac{1}{2}(a^2-1)(q-1)q^{2k-2}\\ &-\textstyle\frac{1}{2}(a-1)(q-1)q^{k+i-2}\end{aligned} \right),\\ \quad \quad  \hbox{if }1\leq i\leq k-1.
        \end{cases}
    \end{equation}
Then we can conclude from the above four equations that the equality in (\ref{wwwww}) holds for $1\leq i\leq k.$

If $-k+1\leq i\leq 0$, then for  any  integer $t\in \left [q^{k-i}, a q^{k-i}-1\right]\cap \{t\in \mathbb{Z}:q\nmid t\}$  with the $q$-adic expansion
 $\sum\limits_{\ell=0}^{k-i}t_{\ell}q^{\ell}$, we have the following observations:
 \begin{itemize}
 \item $N( tq^{2i-1} +1)=N(\lfloor tq^{2i-1}\rfloor+1)$;
     \item The integer $t$ admits the unique decomposition
     $t=\lfloor tq^{2i-1}\rfloor \cdot q^{-2i+1}+\sum\limits_{\ell=0}^{-2i}t_{\ell}q^{\ell};$
 \item As $ t $ ranges over integers in  $ \left[q^{k-i}, a q^{k-i}-1\right]$  not divisible by $ q $, the value of $ \lfloor tq^{2i-1} \rfloor $ ranges over integers from  $q^{k+i-1} $ to $ a q^{k+i-1}-1$;
 \item For each fixed  $\lfloor tq^{2i-1}\rfloor$, the summation  $\sum\limits_{\ell=0}^{-2i}t_{\ell}q^{\ell}$ ranges from integers in  $[1,q^{-2i+1}-1]$  not devisble by $q$, yielding exactly $q^{-2i}(q-1)$ 
 distinct integers. 
 \end{itemize}
  Therefore, for $-k+1\leq i\leq 0$, we have 
    \begin{multline} \label{fes}
        \sum\limits_{t= q^{k-i},q\nmid t}^{a q^{k-i}-1} N(tq^{2i-1}+1)\\ = \sum\limits_{\lfloor tq^{2i-1} \rfloor=q^{k+i-1}}^{aq^{k+i-1}-1 } q^{-2i}(q-1)
N\left(\lfloor tq^{2i-1}\rfloor+1\right).   
\end{multline}
In addition, we can use a similar argument to the  proof of Lemma \ref{le11}
to obtain 
 \begin{multline} \notag
     \sum\limits_{\lfloor tq^{2i-1}\rfloor=q^{k+i-1}}^{aq^{k+i-1}-1 }   N(\lfloor tq^{2i-1}\rfloor+1)= \\
     \begin{cases}
         \frac{1}{2}(a^2-1)(q-1) q^{2k+2i-3},& \hbox{for }-k+2\leq i\leq 0,\\
         \frac{1}{2}a(a-1)q^{2k+2i-2},&\hbox{for }i=-k+1.
     \end{cases}
   \end{multline}
     Substituting this into equation (\ref{fes}), we can conclude that equation (\ref{wwwww}) holds for $-k+1\leq i\leq 0$. This completes the proof.  
\end{proof}

\section{Proof of Assertion \ref{as1}}
\begin{proof}
If  $m$ is  odd,  our first goal is to  show that 
\begin{equation}
\medmath{\left[\sum\limits_{\ell=h-k_{\delta}+1}^{h+k_{\delta}}\delta_{\ell}q^{\ell},\delta-1\right]\cap \mathcal{S}=\left[\sum\limits_{\ell=h-k_{\delta}+1}^{h+k_{\delta}}\delta_{\ell}q^{\ell},\delta-1\right]\cap \mathcal{A}_{k_{\delta}}(s_{\delta}). }\label{cl1}
\end{equation}
Recall that  $\sum\limits_{\ell=0}^{h+k_{\delta}}\delta_{\ell}q^{\ell}$ is the $q$-adic expansion of $\delta-1$. We have 
\begin{equation}\notag
V(\delta-1)=(\mathbf{0}_{h-k_{\delta}},\delta_{h+k_{\delta}},\ldots,\delta_{0})
\end{equation}
and 
\begin{equation}\notag
V\left(\sum\limits_{\ell=h-k_{\delta}+1}^{h+k_{\delta}}\!\!\delta_{\ell}q^{\ell}\right)\!=\!(\mathbf{0}_{h-k_{\delta}},\delta_{h+k_{\delta}},\ldots,\delta_{h-k_{\delta}+1}, \mathbf{0}_{h-k_{\delta}+1}). 
\end{equation}
Suppose that $a\in \left[\sum\limits_{\ell=h-k_{\delta}+1}^{h+k_{\delta}}\delta_{\ell}q^{\ell},\delta-1\right]$ is an integer with $q$-adic expansion $\sum\limits_{\ell=0}^{h+k_{\delta}}a_{\ell}q^{\ell}$. It follows that  
$V(\sum\limits_{\ell=h-k_{\delta}+1}^{h+k_{\delta}}\delta_{\ell}q^{\ell})\leq V(a)\leq V(\delta-1). $ The above two equalities imply  that 
\begin{equation}\begin{aligned}\label{vvv}
V(a)&=(\mathbf{0}_{h-k_{\delta}},a_{h+k_{\delta}},\ldots,a_0)\\ &= (\mathbf{0}_{h-k_{\delta}},\delta_{h+k_{\delta}},\ldots,\delta_{h-k_{\delta}+1} , a_{h-k_{\delta}},\ldots,a_0).
\end{aligned}\end{equation}
Recalling the definition of  $s_{\delta}$,   it follows that $s_{\delta}$ is  the smallest integer in $[-k_{\delta}+1,k_{\delta}]$ such that $a_{h+s_{\delta}}>0$.  
 By Remark  \ref{cor2},  this implies that  $a\not \in \mathcal{A}_{k_{\delta}}(i)$ for any integer $i\neq s_{\delta}$.  Therefore, 
$\left[\sum\limits_{\ell=h-k_{\delta}+1}^{h+k_{\delta}}\delta_{\ell}q^{\ell},\delta-1\right]\cap \mathcal{A}_{k_{\delta}}(i)=\varnothing$ for any integer $i\neq s_{\delta}.$  Then applying  Theorem \ref{th2},  we can conclude that  (\ref{cl1}) holds.  

Now, let us count the number of integers in the set $\left[\sum\limits_{\ell=h-k_{\delta}+1}^{h+k_{\delta}}\delta_{\ell}q^{\ell},\delta-1\right]\cap \mathcal{A}_{k_{\delta}}(s_{\delta}).$
By  the definition of $\mathcal{A}_{k_{\delta}}(s_{\delta})$ and (\ref{vvv}), we can conclude that $a\in \left[\sum\limits_{\ell=h-k_{\delta}+1}^{h+k_{\delta}}\delta_{\ell}q^{\ell},\delta-1\right]\cap \mathcal{A}_{k_{\delta}}(s_{\delta})$ if and only if 
$V(a)$ has the form 
\begin{equation}\notag
    (\mathbf{0}_{h-k_{\delta}},
\delta_{h+k_{\delta}},\ldots,\delta_{h+s_{\delta}},  \mathbf{0}_{h-k_{\delta}},
a_{k_{\delta}+s_{\delta}-1},\ldots,a_0)
\end{equation}
with  $\sum\limits_{\ell=0}^{k_{\delta}+s_{\delta}-1}a_{\ell}q^{\ell}\leq  \mu(\delta)$ and $a_0>0$.
Consequently, we have   
\begin{equation}\notag
 \medmath{ \begin{aligned}
&\left| \left[\sum\limits_{\ell=h-k_{\delta}+1}^{h+k_{\delta}}\delta_{\ell}q^{\ell},\delta-1\right]\cap \mathcal{A}_{k_{\delta}}(s_{\delta})\right|= \\
&\left|\left\{
(a_{k_{\delta}+s_{\delta}\!-\!1},\ldots,a_0)\in Z_q^{k_{\delta}+s_{\delta}} :\sum\limits_{\ell=0}^{k_{\delta}+s_{\delta}-1}a_{\ell}q^{\ell}\!\leq \! \mu(\delta), a_0\!>\!0
  \right\}\right|.  
\end{aligned}}\end{equation}
Noticing that $\mu(\delta)< q^{k_{\delta}+s_{\delta}}$, we apply Lemma  \ref{le5} to obtain 
\begin{equation}\notag
 \left| \left[\sum\limits_{\ell=h-k_{\delta}+1}^{h+k_{\delta}}\delta_{\ell}q^{\ell},\delta-1\right]\cap \mathcal{A}_{k_{\delta}}(s_{\delta})\right |= N(\mu(\delta)+1).
\end{equation} Recalling the equality in (\ref{cl1}), it follows that equation (\ref{as1-1}) holds.

If $ m $ is even, we can apply a similar method as above to derive equation (\ref{as1e1}). Next, we will establish equation (\ref{ww1}).
We can employ an analogous argument to that used in deriving (\ref{vvv}) to conclude that $ V(a) $ takes the form
\begin{equation}\notag
(\mathbf{0}_{h-k_{\delta}-1},\delta_{h+k_{\delta}},\ldots,\delta_h, a_{h-1},\ldots,a_0)
\end{equation} for any integer $a\in \left[\sum\limits_{\ell=h}^{h+k_{\delta}}\delta_{\ell}q^{\ell}, \delta-1\right]. $
Applying Corollary \ref{corr}, it follows that  an integer 
$a\in \left[\sum\limits_{\ell=h}^{h+k_{\delta}}\delta_{\ell}q^{\ell}, \delta-1\right]\cap  \mathcal{H}$ if and only if $V(a)$ has the form 
\begin{equation}\notag
(\mathbf{0}_{h-k_{\delta}-1},\delta_{h+k_{\delta}},\ldots,\delta_h, \mathbf{0}_{h-k_{\delta}-1},\delta_{h+k_{\delta}}, \ldots,\delta_h)
\end{equation} with   $\delta_h>0$ and $\sum\limits_{\ell=h}^{h+k_{\delta}}\delta_{\ell}q^{\ell-h}\leq \sum\limits_{\ell=0}^{h-1}\delta_{\ell}q^{\ell}$. 
This implies that equation (\ref{ww1}) holds. 
\end{proof}

\section{Proof of Assertion \ref{as2}}
\begin{proof}
If $m$ is odd, we have $k_{\delta}\geq m-2h=1$. By the definition of $\mathcal{A}_{k_{\delta}}(i)$, we can conclude that an integer 
 $a\in \left [q^{h+k_{\delta}},  \sum\limits_{\ell=h-k_{\delta}+1}^{h+k_{\delta}}\delta_{\ell}q^{\ell}\right)\cap \mathcal{A}_{k_{\delta}}(i)$  
if and only if  
\begin{equation}\label{ine}
    V(q^{h+k_{\delta}})\leq V(a)<V(\sum\limits_{\ell=h-k_{\delta}+1}^{h+k_{\delta}} \delta_{\ell} q^{\ell}),
    \end{equation}
and 
$V(a)$ has form specified in (\ref{p33}) with $k=k_{\delta}$
while satisfying the conditions in (\ref{p31}) and (\ref{p32}). 
Notice that 
\begin{equation}\notag
V(q^{h+k_{\delta}})=(\mathbf{0}_{h-k_{\delta}}, 1,\mathbf{0}_{h+k_{\delta}})
\end{equation}
and
\begin{equation}\notag
V(\sum\limits_{\ell=h-k_{\delta}+1}^{h+k_{\delta}} \delta_{\ell} q^{\ell})=(\mathbf{0}_{h-k_{\delta}}, \delta_{h+k_{\delta}},\ldots,\delta_{h-k_{\delta}+1},\mathbf{0}_{h-k_{\delta}+1}).
\end{equation}
The form of $V(a)$ in (\ref{p33}) and the inequality  $a_0>0$ in (\ref{p32}) imply that the  inequality  in (\ref{ine}) holds  if and only if 
\begin{equation}\notag
    \begin{aligned}
      (1,\mathbf{0}_{2k_{\delta}-1}) & \leq (a_{h+k_{\delta}}, \ldots,a_{h+i}, \mathbf{0}_{k_{\delta}+i-1}) \\ 
                                   &<(\delta_{h+k_{\delta}}, \ldots,\delta_{h-k_{\delta}+1}).
    \end{aligned}
  \end{equation}
Since $s_{\delta}$ is the smallest integer in $[-k_{\delta}+1,k_{\delta}]$ such that $\delta_{h+s_{\delta}}>0$, this is 
 further equivalent to 
\begin{equation}\notag
    q^{k_{\delta}-i}\leq \sum\limits_{\ell=h+i}^{h+k_{\delta}}a_{\ell}q^{\ell-(h+i)}< \sum\limits_{\ell=h+s_{\delta}}^{h+k_{\delta}}\delta_{\ell}q^{\ell-h-i}.
\end{equation}   
Recall the definition of the set $\mathcal{T}_i(\delta)$ and Remarks \ref{rm3} and \ref{rm4}. We   
 can  conclude 
  that an integer $a\in \left[q^{h+k_{\delta}},  \sum\limits_{\ell=h-k_{\delta}+1}^{h+k_{\delta}}\delta_{\ell}q^{\ell}\right)\cap \mathcal{A}_{k_{\delta}}(i)$ 
if and only if $V(a)$ has the form  specified  in (\ref{p33}) with $k=k_{\delta}$ and \begin{numcases}{}
\sum\limits_{\ell=h+i}^{h+k_{\delta}}a_{\ell}q^{\ell-(h+i)}\in \mathcal{T}_i(\delta);\label{78}\\
\sum\limits_{\ell=0}^{k_{\delta}+i-1}a_{\ell}q^{\ell}\leq \sum\limits_{h+i}^{h+k_{\delta}}a_{\ell}q^{\ell-(h+i)}\cdot q^{2i-1} \hbox{ and } a_0>0.\label{nn2}
\end{numcases}

To enumerate such integers, we have the following observations:
\begin{itemize}
    \item Each sequence $(a_{h+k_{\delta}}, \ldots, a_{h+i})$ satisfying (\ref{78}) corresponds to one integer 
    $t = \sum\limits_{\ell=h+i}^{h+k_{\delta}} a_{\ell} q^{\ell-(h+i)}$ in the set  $\mathcal{T}_i(\delta)$, and vice versa. 
\item     For each fixed $(a_{h+k_{\delta}}, \ldots, a_{h+i})$,  Lemma \ref{le5} yields that there exist exactly $N(t q^{2i-1} + 1)$ sequences $(a_{k_{\delta}+i-1}, \ldots, a_0)$ meeting the requirement in (\ref{nn2}), where $t=\sum\limits_{\ell=h+i}^{h+k_{\delta}} a_{\ell} q^{\ell-(h+i)}$. 
\end{itemize}
Therefore, we can conclude that 
for any integer $i\in [-k_{\delta}+1,k_{\delta}]$. 
\begin{equation}\notag
 \left|\left[q^{h+k_{\delta}}, \sum\limits_{\ell=h-k_{\delta}+1}^{h+k_{\delta}}\delta_{\ell}q^{\ell}\right)\cap \mathcal{A}_{k_\delta}(i)\right|= \sum\limits_{t\in \mathcal{T}_i(\delta)}N(tq^{2i-1}+1). 
 \end{equation}  
 Then by Theorem \ref{th2}, we obtain the equality in  (\ref{xx2}).   
 
If  $m$ is an even integer,  then $k_{\delta}\geq m-2h=0$.   The equality in (\ref{aas31}) can be obtained using a similar argument as presented above. Next, we establish equation (\ref{aas32}). 

By applying Corollary \ref{th1}, we can conclude that an integer  $a\in \left[q^{h+k_{\delta}}, \sum\limits_{\ell=h}^{h+k_{\delta}}\delta_{\ell}q^{\ell}\right)\cap \mathcal{H}$ if and only if 
\begin{equation}\label{z1}
  V(q^{h+k_{\delta}})\leq V(a)<V(\sum\limits_{\ell=h}^{h+k_{\delta}}\delta_{\ell}q^{\ell})
\end{equation}
and 
$V(a)$ has the form specified in (\ref{coeq1}) with $k=k_{\delta}$ and $a_{0}>0$.  
 Noticing that  $    V(q^{h+k_{\delta}})= (\mathbf{0}_{h-k_{\delta}},1,\mathbf{0}_{h+k_{\delta}})$ and $V(\sum\limits_{\ell=h}^{h+k_{\delta}}\delta_{\ell}q^{\ell})=(\mathbf{0}_{h-k_{\delta}}, \delta_{h+k_{\delta}}, \ldots,\delta_h, \mathbf{0}_h),$ it follows  that the inequality in (\ref{z1}) holds if and only if 
   \begin{equation}\notag
   (1,\mathbf{0}_{k_{\delta}})\leq (a_{k_{\delta}}, \ldots,a_0)<(\delta_{h+k_{\delta}}, \ldots,\delta_h).
   \end{equation}
   This can be equivalently represented as \begin{equation}\notag
q^{k_{\delta}}\leq \sum\limits_{\ell=0}^{k_{\delta}}a_{\ell}q^{\ell}<\sum\limits_{\ell=h}^{h+k_{\delta}}\delta_{\ell}q^{\ell-h}.
   \end{equation}
  Consequently, we have 
  \begin{equation}\notag
    \begin{aligned}
    &\left| \left[q^{h+k_{\delta}}, \sum\limits_{\ell=h}^{h+k_{\delta}}\delta_{\ell}q^{\ell}\right)\cap \mathcal{H}\right|= \\
    &
    \left| \left\{
      \begin{aligned}
        (a_{k_{\delta}}, \ldots, a_0):  q^{k_{\delta}}\leq \sum\limits_{\ell=0}^{k_{\delta}}a_{\ell}q^{\ell}<\sum\limits_{\ell=h}^{h+k_{\delta}}\delta_{\ell}q^{\ell-h}, a_0>0.
      \end{aligned}
      \right\}\right|
     \end{aligned}
   \end{equation}
  By Lemma \ref{le5}, it follows that equation (\ref{aas32}) holds.    
\end{proof}

\section{Proof of Assertion \ref{as3}}
\begin{proof}
The arguments to obtain (\ref{10-0}) and (\ref{2-10-0})
are similar, so we here only demonstrate that (\ref{10-0}) holds if $m$ is odd.  
We can use the same approach as in the proof of Assertion \ref{as2} to conclude  that an integer $a\in \mathcal{A}_{k}(i)$ if and only if $V(a)$ has the form specified  in (\ref{p33}) with 
 \begin{numcases}{}
$$\sum\limits_{\ell=h+i}^{h+k}\!\!a_{\ell}
q^{l-(h+i)}\in \left[q^{k-i}, q^{k -i+1}\!-\!1\right]\!\cap\! \{t\in \mathbb{Z}: q\nmid t\};$$ \label{as4c4}\\
$$\sum\limits_{\ell=0}^{k+i-1}a_{\ell}q^{\ell}\leq \sum\limits_{\ell=h+i}^{h+k}a_{\ell}q^{\ell-(h+i)}\cdot q^{2i-1}\hbox{ and }a_0>0.$$ \label{as4c5}
 \end{numcases}
Furthermore, we have the following observations:
\begin{itemize}

\item Each sequence $(a_{h+k},\ldots,a_{h+i})$ satisfying (\ref{as4c4}) corresponds to one integer $t=\sum\limits_{\ell=h+i}^{h+k}a_{\ell}q^{\ell-(h+i)}$ in the set  $[q^{k-i}, q^{k-i+1}-1] \cap \{t \in \mathbb{Z} \mid q \nmid t\}$, and vice versa.  
    \item For each fixed sequence $(a_{h+k}, \ldots, a_{h+i})$,  we can utilize Lemma \ref{le5}  to conclude that there exist   
$N(t q^{2i-1}+1)$ sequences $(a_{k+i-1}, \ldots, a_0)\in Z_q^{k+i}$ satisfying (\ref{as4c5}), where $t=\sum\limits_{\ell=h+i}^{h+k}a_{\ell}q^{\ell-(h+i)}$. 
\end{itemize}
Consequently,  we can conclude that
\begin{equation}\notag
    \left| \mathcal{A}_k(i)\right|=  \sum\limits_{t=q^{k-i},q\nmid t}^{q^{k-i+1}-1}N(tq^{2i-1}+1).
\end{equation}
Then applying Lemma \ref{lemma10} with  $a=q$, we derive (\ref{10-0}). 
\end{proof}

\section{Proof of Assertion \ref{as5}}
\begin{proof}
By the definition of $\mathcal{B}_k(0)$,    we can use a similar argument as in the proof of Assertion \ref{as2} to derive 
\begin{equation}\notag
\left|\mathcal{B}_k(0)\right|=\sum\limits_{t=q^k, q\nmid t}^{q^{k+1}-1}N(t+1). 
\end{equation}
If $k\geq 1$, 
by substituting  $i=0$ and $a=q$ into equation (\ref{wwwww2}) in  Lemma \ref{lemma10}, we obtain 
\begin{IEEEeqnarray}{rCl}
\left|\mathcal{B}_k(0)\right| 
= \frac{1}{2}(q-1)^2(q^{2k}-q^{2k-2}+q^{k-1}).\nonumber
\end{IEEEeqnarray}
If $k=0$, then  we have 
\begin{equation}\notag
   \left| \mathcal{B}_k(0)\right|=\sum\limits_{t=1}^{q-1}N(t+1)=\frac{1}{2}q(q-1). 
\end{equation}
This completes the proof of Assertion \ref{as5}.
\end{proof}

% if you will not have a photo at all:

% insert where needed to balance the two columns on the last page with
% biographies
%\newpage

% you can choose not to have a title for an appendix
% if you want by leaving the argument blank

% Can use something like this to put references on a page
% by themselves when using endfloat and the captionsoff option.
\ifCLASSOPTIONcaptionsoff
  \newpage
\fi

\newpage
\begin{IEEEbiographynophoto}{Run Zheng}
is currently pursuing his Ph.D. in the Department of Applied Mathematics at The Hong Kong Polytechnic University. He received his M.Phil. from The Hong Kong Polytechnic University, Hong Kong, in 2021 and his B.Sc. from Hunan University, China, in 2019. His research interests include coding theory, matrix theory, and quantum information.
\end{IEEEbiographynophoto}

\begin{IEEEbiographynophoto}{
 Nung-Sing Sze} received his B.Sc., M.Phil., and Ph.D. degrees in Mathematics from The University of Hong Kong in 2000, 2002, and 2005, respectively. From 2006 to 2009, he was a Postdoctoral Fellow at the University of Connecticut, USA. In August 2009, he joined The Hong Kong Polytechnic University, where he is currently an Associate Professor. His current research interests lie in quantum computation and quantum information science, with a particular focus on related mathematical problems in matrix and operator theory.
\end{IEEEbiographynophoto}

\begin{IEEEbiographynophoto}{Zejun Huang}
 received his Ph.D. in 2011 from East China Normal University in Shanghai, China. From July 2011 to December 2013, he was a Postdoctoral Fellow in the Department of Applied Mathematics at The Hong Kong Polytechnic University, Hong Kong. Between January 2014 and June 2019, he worked at the Institute of Mathematics, Hunan University in Hunan, China. Since July 2019, he has been with the School of Mathematical Sciences at Shenzhen University in Shenzhen, China, where he is currently a Professor. His research interests include matrix theory, graph theory, and coding theory.
\end{IEEEbiographynophoto}

\end{document}